\newif\ifanon
\anonfalse

\documentclass[11pt]{article}
\pdfoutput=1

\usepackage[T1]{fontenc}
\usepackage[utf8]{inputenc}
\usepackage[american]{babel}

\usepackage[letterpaper,margin=1in]{geometry}

\usepackage{ambib}
\usepackage{amsmath}
\usepackage{amsthm}
\usepackage{amssymb}
\usepackage{graphicx}
\usepackage{xcolor}
\usepackage{amsfonts}
\usepackage{bm}
\usepackage{subcaption}
\usepackage{pifont}
\usepackage{dsfont}
\usepackage[linesnumbered]{algorithm2e}
\usepackage{framed}
\usepackage{csquotes}
\usepackage{enumitem}

\usepackage{amthm}

\usepackage{complexity}
\let\H\relax

\let\EE\relax

\usepackage{ammacros}

\usepackage{booktabs}
\definecolor{myblue}{HTML}{0088cc}
\definecolor{myorange}{HTML}{f26924}
\definecolor{mygreen}{HTML}{3ec636}

\usepackage{hyperref}
\usepackage{xcolor}
\definecolor{myblue}{HTML}{0088cc}
\hypersetup{
  colorlinks=true,
  linkcolor=black,
  citecolor=black,
  filecolor=black,
  urlcolor=myblue,
}

\usepackage{local-models}
\newderivedmodel{\clocal}{computable-\local}
\newderivedmodel{\ulocal}{uncomputable-\local}
\newderivedmodel{\Clocal}{Computable-\local}
\newderivedmodel{\Ulocal}{Uncomputable-\local}

\newcommand*{\id}{\mathtt{id}} 
\newcommand*{\ddeg}{\mathtt{deg}} 

\newcommand*{\iin}{\mathrm{in}}
\newcommand*{\out}{\mathrm{out}}
\newcommand*{\accept}{\mathsf{accept}}
\newcommand*{\reject}{\mathsf{reject}}

\renewcommand{\AA}{\mathcal{A}}
\newcommand{\BB}{\mathcal{B}}
\newcommand{\CC}{\mathcal{C}}

\newcommand{\EE}{\mathcal{E}}

\newcommand{\GG}{\mathcal{G}}

\newcommand{\MM}{\mathcal{M}}
\newcommand{\NN}{\mathcal{N}}
\newcommand{\OO}{\mathcal{O}}

\newcommand{\VV}{\mathcal{V}}

\newcommand*{\rootn}{\mathfrak{N}}
\DeclarePairedDelimiter\descr{\langle}{\rangle}
\newcommand{\Left}{\mathrm{L}}
\newcommand{\Right}{\mathrm{R}}
\newcommand{\chl}{\mathrm{Ch}_{\mathrm{L}}}
\newcommand{\chr}{\mathrm{Ch}_{\mathrm{R}}}
\newcommand{\ch}{\mathrm{Ch}}
\newcommand{\Parent}{\mathrm{P}}
\newcommand{\north}{\mathrm{N}}
\newcommand{\east}{\mathrm{E}}
\newcommand{\south}{\mathrm{S}}
\newcommand{\west}{\mathrm{W}}
\newcommand{\Vi}{\mathcal{V}_{\textit{input}}}
\newcommand{\Ei}{\mathcal{E}_{\textit{input}}}

\makeatletter
\DeclareMathOperator{\regularlog}{log}
\renewcommand{\log}{\protect\@ifstar{\regularlog^*}{\regularlog}}
\makeatother

\newcommand{\Vin}{\VV^{\mathrm{in}}}
\newcommand{\Ein}{\EE^{\mathrm{in}}}
\newcommand{\Vout}{\VV^{\mathrm{out}}}
\newcommand{\Eout}{\EE^{\mathrm{out}}}
\newcommand{\Lin}{L^{\mathrm{in}}}
\newcommand{\Lout}{L^{\mathrm{out}}}

\newcommand*{\bad}{\mathrm{bad}}
\newcommand*{\good}{\mathrm{good}}
\newcommand*{\tree}{\mathrm{tree}}
\newcommand*{\goodtree}{\mathrm{goodTree}}
\newcommand*{\badtree}{\mathrm{badTree}}
\newcommand*{\row}{\mathrm{row}}
\newcommand*{\goodrow}{\mathrm{goodRow}}
\newcommand*{\badrow}{\mathrm{badRow}}
\newcommand*{\grid}{\mathrm{grid}}
\newcommand*{\goodgrid}{\mathrm{goodGrid}}
\newcommand*{\badgrid}{\mathrm{badGrid}}
\newcommand*{\turing}{\mathrm{Turing}}
\newcommand*{\goodturing}{\mathrm{goodTuring}}
\newcommand*{\badturing}{\mathrm{badTuring}}
\newcommand*{\consensus}{\mathrm{consensus}}
\newcommand*{\next}{\mathrm{next}}

\newcommand{\labels}[3]{\ensuremath{\mathcal{#1}_{\mathrm{#2}}^{\mathrm{#3}}}}
\newcommand{\prob}[2]{\Pi^{\mathrm{#1}}_{\mathrm{#2}}}

\newcommand{\botTM}{\bot_\mathrm{Turing}}

\usepackage{csquotes}
\usepackage{microtype}
\usepackage[capitalize,noabbrev]{cleveref}

\newenvironment{myabstract}%
{\list{}{\listparindent 1.5em
        \itemindent    \listparindent
        \leftmargin    1cm
        \rightmargin   1cm
        \parsep        0pt}%
    \item\relax}%
{\endlist}

\newenvironment{mycover}%
{\list{}{\listparindent 0pt
        \itemindent    \listparindent
        \leftmargin    1cm
        \rightmargin   1cm
        \parsep        0pt}%
    \raggedright
    \item\relax}%
{\endlist}

\newcommand{\myaff}[1]{\,$\cdot$\, {\small #1}\par\medskip}

\begin{document}

\begin{mycover}
{\huge\bfseries Is a \textsf{LOCAL} algorithm computable? \par}

\bigskip
\bigskip

\ifanon
\textbf{Anonymous authors}

\else
\textbf{Antonio Cruciani}
\myaff{Aalto University, Finland}

\textbf{Avinandan Das}
\myaff{Aalto University, Finland}

\textbf{Massimo Equi}
\myaff{Aalto University, Finland}

\textbf{Henrik Lievonen}
\myaff{Aalto University, Finland}

\textbf{Diep Luong-Le}
\myaff{Columbia University, United States}

\textbf{Augusto Modanese}
\myaff{CISPA Helmholtz Center for Information Security, Germany}

\textbf{Jukka Suomela}
\myaff{Aalto University, Finland}

\fi 

\bigskip
\end{mycover}

\begin{myabstract}
\noindent
Common definitions of the ``standard'' \textsf{LOCAL} model tend to be sloppy and even self-contradictory on one point: do the nodes update their state using an \emph{arbitrary} function or a \emph{computable} function? So far, this distinction has been safe to neglect, since problems where it matters seem contrived and quite different from e.g.\ typical local graph problems studied in this context.

We show that this question matters even for locally checkable labeling problems (LCLs), perhaps the most widely studied family of problems in the context of the \textsf{LOCAL} model. Furthermore, we show that assumptions about computability are directly connected to another aspect already recognized as highly relevant: whether we have any knowledge of $n$, the size of the graph. Concretely, we show that there is an LCL problem $\Pi$ with the following properties:
\begin{enumerate}
\item $\Pi$ can be solved in $O(\log n)$ rounds if the \textsf{LOCAL} model is uncomputable.
\item $\Pi$ can be solved in $O(\log n)$ rounds in the computable model if we know \emph{any} upper bound on~$n$.
\item $\Pi$ requires $\Omega(\sqrt{n})$ rounds in the computable model if we do not know anything about~$n$.
\end{enumerate}
We also show that the connection between computability and knowledge of $n$ holds in general: for any LCL problem $\Pi$, if you have any bound on $n$, then $\Pi$ has the same round complexity in the computable and uncomputable models.
\end{myabstract}

\thispagestyle{empty}
\setcounter{page}{0}
\clearpage

\section{Introduction}
\label{sec:introduction}

There is no single, universally-agreed-on definition of the ``standard'' \local model of distributed computing. And when a paper introduces the \local model, some aspects are often left unspecified, or the definition might even be self-contradictory.
A case in point is assumptions related to \textbf{computability}. By definition, the local state of a node after $T$ rounds is some function $f$ of its radius-$T$ neighborhood---but is this function a computable function, or an arbitrary function?

Many key papers state that one can use any function, without any restrictions related to computability \cite{linial-1992-locality-in-distributed-graph-algorithms,naor-stockmeyer-1995-what-can-be-computed-locally,ghaffari-kuhn-maus-2017-on-the-complexity-of-local}. At the same time, papers often use words such as ``processor'' \cite{linial-1992-locality-in-distributed-graph-algorithms,naor-stockmeyer-1995-what-can-be-computed-locally} to refer to individual nodes, as if they would be implementable as real-world computers that are restricted to computable functions. Some papers make it explicit that nodes are computationally bounded \cite{le-gall-nishimura-rosmanis-2019-quantum-advantage-for}.
Finally, there are some rare papers such as \cite{fraigniaud-goos-etal-2013-what-can-be-decided-locally} that explicitly acknowledge that one can define both the computable and uncomputable version of the model.

Many papers leave computability assumptions somewhat vague and open to interpretation, and when explicitly asked about this aspect, there is disagreement among the experts on whether the computable or uncomputable version is the ``right'' one. Yet what many of our colleagues seem to assume is that this aspect is not particularly relevant from the perspective of problems of interest to them.

The issue gets increasingly important (and confusing) with the emerging area of distributed quantum computing. When we define the quantum version of the \local model \cite{gavoille-kosowski-markiewicz-2009-what-can-be-observed,arfaoui-fraigniaud-2014-what-can-be-computed-without}, the natural formalization is something equivalent to quantum circuits or quantum Turing machines, which is necessarily computable---there is no quantum analogue of mapping radius-$T$ neighborhoods to local outputs. Combining this with a conventional uncomputable definition of the classical \local model would lead to paradoxical results: quantum-\local would be weaker than classical \local.

What we show in this work is that assumptions related to computability \emph{are} indeed critical even if we focus on problems and questions commonly studied in the context of the \local model.

\subsection{Locally Checkable Labelings}

Perhaps the most popular (and most restrictive) formal definition for a class of ``reasonable'' graph problems in this context is the family of \emph{locally checkable labeling problems}, or LCLs, first introduced by Naor and Stockmeyer in 1995 \cite{naor-stockmeyer-1995-what-can-be-computed-locally}. These are problems that can be specified by giving a \emph{finite} set of valid labeled local neighborhoods. There is a long line of research that has culminated in a near-complete classification of LCL problems in the \local model for general graphs, see e.g.\ \cite{%
    cole-vishkin-1986-deterministic-coin-tossing-with,
    naor-1991-a-lower-bound-on-probabilistic-algorithms-for,
    linial-1992-locality-in-distributed-graph-algorithms,
    naor-stockmeyer-1995-what-can-be-computed-locally,
    brandt-fischer-etal-2016-a-lower-bound-for-the,
    fischer-ghaffari-2017-sublogarithmic-distributed,
    ghaffari-harris-kuhn-2018-on-derandomizing-local,
    balliu-hirvonen-etal-2018-new-classes-of-distributed,
    chang-pettie-2019-a-time-hierarchy-theorem-for-the,
    chang-kopelowitz-pettie-2019-an-exponential-separation,
    rozhon-ghaffari-2020-polylogarithmic-time-deterministic,
    balliu-brandt-etal-2020-how-much-does-randomness-help,
    balliu-brandt-etal-2021-almost-global-problems-in-the,
    suomela-2020-landscape-of-locality-invited-talk,
    chang-2020-the-complexity-landscape-of-distributed,
    grunau-rozhon-brandt-2022-the-landscape-of-distributed%
}.

What makes LCL problems particularly nice and clean from the perspective of computability issues is that for any LCL problem $\Pi$, there are Turing machines that solve both of these tasks:
\begin{enumerate}[noitemsep]
    \item Given a graph $G$ and output labeling $f$, decide if $f$ is a valid solution of $\Pi$ in $G$.
    \item Given a graph $G$, find a feasible solution $f$ of $\Pi$ in $G$.
\end{enumerate}
In particular, there is no LCL where merely solving the task is uncomputable; if you can solve a task with full information about $G$ in \ulocal, you can do the same in \clocal. Intuitively, LCLs should not care about computability issues.

Now what we would like to do is to formalize this intuition and prove that we can safely ignore all issues regarding computability in the context of LCL problems; we would like to show that whatever is the complexity of an LCL in \ulocal, it is (at least asymptotically) the same in \clocal. This would mean that the numerous results on LCLs from prior work are compatible with each other regardless of what the specific paper assumes about computability.
Unfortunately, as we will show in this work, this is not the case!

\subsection{Contribution 1: Separation}

Our main contribution is this: we show that there exists an LCL problem $\Pi$ such that:
\begin{itemize}[noitemsep]
    \item in \clocal, any algorithm that solves $\Pi$ requires $\Omega(\sqrt{n})$ rounds,
    \item in \ulocal, there is an algorithm that solves $\Pi$ in $O(\log n)$ rounds.
\end{itemize}
Our key take-home message is that assumptions on computability are of crucial importance already in the study of LCL problems; such issues cannot be safely ignored, results on the complexity landscape cannot be automatically ported between computable and uncomputable worlds, and this aspect needs to be made explicit. (An equally valid interpretation of our work is that LCL problems are a poor proxy for ``reasonable'' graph problems, as we run into issues of computability, which we should never encounter in reasonable graph problems. But currently no good alternatives exist.)

\subsection{Contribution 2: Equivalences}

To fully explain our contributions, it is necessary to draw attention to another subtle issue of the \local model that tends to get swept under the carpet: what do we know about $n$, the size of the input. We will consider three cases that suffice to make the point:
\begin{enumerate}[noitemsep]
    \item no auxiliary information,
    \item all nodes receive as input the same value $N$, and we promise that $n \le N$,
    \item all nodes receive as input the same value $N$, and we promise that $n \le N \le f(n)$ for some fixed, globally-known computable function $f$.
\end{enumerate}
Combining this with models \clocal and \ulocal, we arrive at six combinations that might be all distinct. Yet we show that the landscape shown in \cref{tab:landscape} emerges; six models collapse to three classes $\class{I} < \class{II} \le \class{III}$.

\begin{table}
    \centering
    \caption{For LCL problems, our six models fall in three classes with $\class{I} < {\color{myorange}\class{II}} \le {\color{myblue}\class{III}}$.}\label{tab:landscape}
    \begin{tabular}{lcc}
        \toprule
        & computable & uncomputable \\
        & \local & \local \\
        \midrule
        no auxiliary information
        & $\class{I}$ & \color{myorange} $\class{II}$ \\\addlinespace
        bound $n \le N$
        & \color{myorange} $\class{II}$ & \color{myorange} $\class{II}$ \\\addlinespace
        bound $n \le N \le f(n)$
        & \color{myblue} $\class{III}$ & \color{myblue} $\class{III}$ \\
        \bottomrule
    \end{tabular}
\end{table}

Our LCL problem $\Pi$ serves to separate classes $\class{I}$ and $\class{II}$. However, what seems counterintuitive is that all of these three models in class $\class{II}$ are equally strong for LCLs (and beyond), at least when we restrict ourselves to computable time complexities:
\begin{itemize}[noitemsep]
    \item \ulocal, with no auxiliary information,
    \item \ulocal, with a bound $n \le N$,
    \item \clocal, with a bound $n \le N$.
\end{itemize}
We also show that these two models in $\class{III}$ are equally strong for LCLs (and beyond), for any computable bound $f$, again when we restrict ourselves to computable time complexities:
\begin{itemize}[noitemsep]
    \item \ulocal, with a bound $n \le N \le f(n)$,
    \item \clocal, with a bound $n \le N \le f(n)$.
\end{itemize}
This brings some good news: it turns out that numerous papers that have studied LCLs in the \local model have assumed (explicitly or implicitly) that there is some information on $n$ available, either exact knowledge of $n$ or e.g.\ some linear or polynomial upper bound on $n$. What we show is that as long as some information on $n$ is available, uncomputable and computable worlds collapse into a single model, and we can freely translate results between them.

Finally, we draw attention to the following corollary of our work---these two models are distinct for LCLs:
\begin{itemize}[noitemsep]
    \item \clocal, with no auxiliary information,
    \item \clocal, with a bound $n \le N$.
\end{itemize}
In particular, our problem $\Pi$ can be solved in $O(\log n)$ rounds in \clocal if we know any bound $n \le N$, but it requires $\Omega(\sqrt{n})$ rounds in \clocal if we do not have such information. This result seems counterintuitive: our upper bound $N$ might be arbitrarily high, say, the Ackermann function of $n$, or even the busy beaver \cite{rado-1962-on-non-computable-functions} of $n$, which seems unhelpful, yet it can be so valuable that it suffices to lower the round complexity exponentially.

\section{Technical Overview}

In this section, we give an overview of the techniques and key ideas that we use to prove our main results, and we also provide a road-map to the rest of this paper. Our separation result is presented in Section~\ref{sec:separation}, and our equivalence results are presented in Section~\ref{section:equivalence_results}.

\subsection{\boldmath Separation Between \Clocal and \Ulocal}

First we discuss the proof for the separation between computable and
uncomputable \local where we are not given an upper bound $N$ on the number of
nodes.
The proof is by a careful construction of an LCL problem $\Pi$ that requires
$\Omega(\sqrt{n})$ locality in \clocal but only $\OO(\log n)$ in its
uncomputable counterpart.

Our separation is obtained by constructing a promise-free LCL $\Pi$, where each node outputs several components (or layers). The early layers can be seen as \emph{structure filters} that either certify that the input graph has the intended backbone structure, or they produce a \emph{locally checkable error certificate} that points to a structural violation. Only the nodes that successfully pass all the structural filters are required to solve the final ``hard'' layer.

Formally, each structural layer $\Pi_i$ is split into $\Pi_i^{\mathrm{good}}$ and $\Pi_i^{\mathrm{bad}}$ with disjoint output alphabets. Intuitively
\begin{itemize}
	\item $\Pi_i^{\mathrm{good}}$ The structure ``looks'' correct and nodes output a fixed label.
	\item $\Pi_i^{\mathrm{bad}}$.  The structure is ``broken'' and nodes output an error pointer that locally routes to a witness or failure.
\end{itemize}
We then \emph{compose} these layers so that the subsequent layers' constraints are enforced only inside the region certified as good so far. 

\paragraph{The backbone decomposition.} The overall problem is represented as a composition of five layers
\[
  \Pi = \Pi_\tree \otimes \Pi_\row \otimes \Pi_\grid \otimes \Pi_\turing \otimes \Pi_\consensus
\]
Each layer has a good and bad behavior:
\begin{itemize}
	\item $\Pi_\tree$ certifies a tree-like auxiliary structure by marking nodes as $T$, or else outputs a local pointer to a structural error.
	\item $\Pi_\row$ certifies nodes marked with $G$ form rows that are connected to trees.
	\item $\Pi_\grid$ certifies that the rows together form a grid-like structure.
	\item $\Pi_\turing$ ensures that the labels on the grid encode an execution of a Turing machine.
	\item $\Pi_\consensus$ is the hard layer which gives us the separation. In effect, the rows of the grid need to predict the outcome of the encoded Turing machine.
\end{itemize}

\paragraph{The growing grid structure.} The concrete structure that we would like to ensure is what we refer to as a
\emph{growing grid}. This is an oriented grid where rows are neatly aligned
(see \cref{fig:growing-grids}) and with the additional property that
consecutive rows differ in length by \emph{exactly} one.
Since (virtually) every row has $\Theta(\sqrt{n})$ nodes, to enable
sublinear-time certification inside a row, each row is also equipped with an
attached tree-like structure with $\OO(\log n)$ diameter (see
\cref{fig:row-certificate}).
The tree structure allows each row to operate as a joint computational unit that
is coordinated by the node at the root of the corresponding tree.

\paragraph{The hard problem $\Pi_{\turing}$.} Inside the certified grid region, input labels encode a computation of a universal Turing machine $U$: each row encodes a configuration (tape symbols, head position and state), and local constraints check that consecutive rows represent valid transitions of $U$. If the encoding is inconsistent, nodes can locally point to an error, partitioning the grid into \emph{valid} regions separated by error regions. The crucial final constraint is a \emph{consistency} requirement: whenever a region contains a halting configuration, nodes in that region must output a globally consistent bit indicating whether the halting state is accepting or rejecting.
A subtle point is that we \emph{must} use a machine $U$ that has halting accept
and reject states; see the discussion further below for details.

\paragraph{\Ulocal solves $\Pi$  in $\OO(\log n)$ rounds.} In \ulocal we may
assume access to an oracle $H$ that, given a configuration of the universal
Turing machine $U$, decides whether it halts or not as well as whether the
halting state is accepting or rejecting. 
On a valid grid instance, the root of the tree attached to a row collects the
entire row's encoding in $\OO(\log n)$ rounds, determines the configuration
encoded in the row, and queries $H$. 
Given the oracle's answer, the root node sets the output for the nodes it
controls to the final state of $U$ or, if the configuration is not halting, to
$0$.
Since all the rows access the same oracle and the universal Turing machine is
deterministic, the output of all rows is consistent across the same component of
the input graph.

\paragraph{\Clocal needs $\Omega(\sqrt{n})$ rounds.}
The lower bound relies on a diagonalization argument.
Assume that there is a \clocal algorithm $\AA$ that purportedly solves $\Pi$ in
$o(\sqrt{n})$ rounds.
On our growing-grid instances $G_k$ with $k$ rows (and thus $\Theta(k^2)$
nodes), such a locality is $o(k)$; hence a node in a fixed early layer cannot
see the bottom boundary of the grid.
In fact, we can show that, for large enough values of $k$, nodes in these early
layers must commit to an output in a \emph{constant number of rounds}.
We exploit this by constructing a Turing machine $T_\AA$ that, given an input
$x$, simulates $\AA$ on $G_k$ for a sufficiently large $k$ and produces the
output bit of $\AA$ on the row with configuration corresponding to $x$.
Notice that $\AA$ always halts, hence so does $T_\AA$.
Finally, we feed $T_\AA$ to a diagonalizer $D$ that, on input $\descr{T_\AA}$,
outputs the opposite bit whenever $T_\AA$ halts.
Thus, given the input $x = (\descr{D}, \descr{T_\AA})$, the nodes of $\AA$ in
the row corresponding to $x$ will produce the opposite of what is required.
It follows that $\AA$ fails to solve $\Pi$, implying its complexity in \clocal
is $\Omega(\sqrt{n})$.

\paragraph{A remark about the specification of the problem output.}
Let us take a closer look at the condition for the output bit we ask for in our
problem $\Pi_{\turing}$:
\begin{enumerate}
  \item If $U$ halts in the final configuration of the growing grid, then output
  $0$ if it rejects or $1$ if it accepts.
  \item Otherwise (i.e., if $U$ has not halted by the final configuration), then
  output either $0$ or $1$ consistently.
\end{enumerate}
Interestingly, this condition must be very carefully chosen.
For instance, suppose we leave the problem description as is and only replace
the second point with the following:
\begin{enumerate}
  \item[2a.] Otherwise (i.e., if $U$ has not halted by the final configuration),
  then output $0$.
\end{enumerate}
As it turns out, \emph{this breaks the $\OO(\log n)$ upper bound}:
Of course, with access to an uncomputable oracle, we can still correctly
determine whether $U$ halts or not; however, the issue is that we must also
determine \emph{if the input grid is large enough} to contain the entire trace
of $U$ until it halts, which inevitably requires inspecting the entire grid.

A much more counterintuitive situation applies to the first of the two
requirements for the output bit.
It is a standard result in computability theory that, without restriction, we
may define Turing machines such that they are only allowed to halt if they are
in an accepting state (and enter an infinite loop whenever an input should be
rejected).
Being aware of this, we might be tempted to replace our $U$ that has explicit
accept and reject states with another universal $U'$ that only halts in an
accepting state.
This simplifies the first of the two items above to the following:
\begin{enumerate}
  \item[1a.] If $U'$ halts in the final configuration of the growing grid, then
  output $1$ everywhere.
\end{enumerate}
Surprisingly, this now \emph{breaks the lower bound}:
Together with the original second item, the resulting problem is trivial since
one can always output $1$ no matter what $U'$ does.
Replacing 2 with 2a does not remedy this as we then run into the same issue with
the upper bound as before (i.e., we fail to obtain an $\OO(\log n)$ algorithm in
\ulocal).
Hence it is \emph{crucial that our $U$ has both halting accept and reject
states}, unlike what our intuition from computability theory might suggest.

\paragraph{\boldmath An alternative upper bound in \clocal with an
upper bound on $n$.}
As an addendum, we also describe a strategy for solving $\Pi$ in
\clocal with $\OO(\log n)$ when an upper bound $N$ (no matter how bad)
on $n \le N$ is given.
It is true that one can use our equivalence results to obtain such an algorithm,
but it is instructive to come up with a direct strategy to grasp why the
knowledge of $N$ makes such a big difference.

The idea is to proceed as in the \ulocal algorithm but replace the
oracle $H$ with a computable procedure.
More specifically, using $N$ we infer a maximal number of steps $t_N$ that $U$
can run for; otherwise its trace would not fit into the input instance we are
given.
Then we can simulate $U$ until $t_N$ steps have elapsed and, if $U$ has not
halted yet, safely output $0$ on all nodes.
The point is that, given $N$, \emph{we have a bound on how large our instance
can be} (even if it is arbitrarily bad!), which allows us to circumvent the
computability issues; \emph{without $N$, we cannot know how large our instance
actually is}, and thus it might always be large enough to contain the entire
trace of $U$ until it halts---which we cannot tell for sure without inspecting
the whole instance.

\subsection{Equivalence Results}

Next we discuss our equivalence results.
Recall there are two equivalences that we prove:
\begin{enumerate}
  \item between \clocal with an upper bound $N \ge n$ (where possibly
  also $N \le f(n)$ for some computable $f$) and \ulocal; and
  \item \ulocal with and without such an upper bound $N$.
\end{enumerate}
Before we turn to these equivalences, we first discuss the concept of a
\emph{maximally safe neighborhood}, which is used in both proofs.

\paragraph{Maximally safe neighborhoods.} In the ball-growing view of \local model, a deterministic algorithm can be described by a function that given the currently gathered rooted neighborhood $N^t(v)$ and the current radius $t$, either outputs a label or returns a sentinel $\bot$ to indicate ``keep exploring''; the node halts at the first radius $t$ for which the function returns a label. A time bound $T(n)$ means that on every $n$-node network, every node must halt by round $T(n)$.  A subtle detail is that nodes do not know $n$. Hence, whether a node is allowed to still be running at round $t$ must be justified purely from its local view. This motivates a notion of when a radius-$t$ view is consistent with a global bound $T$.
\begin{definition}[Informal]\label{def:tmp_def_1}
	For a network $G$ and a node $v \in V(G)$, we say $t \in \N_0$ is
	\emph{$T$-safe} for $v$ if there is no network $G'$ and $v'
	\in V(G')$ such that the $t$-neighborhoods rooted at $v$ and $v'$ are
	isomorphic and $t > T(\abs{V(G')})$.
	In addition, $t$ is said to be \emph{maximally $T$-safe} for $v$ if it is maximal
	with this property (i.e., $t$ is safe but $t+1$ is not).
\end{definition}
The intuition behind Definition~\ref{def:tmp_def_1} is that if $N^t(v)$ is $T$-safe, then ``still not having halted before round $t$'' is not in conflict with the promise that all $n'$-node instances must halt by round $T(n')$. Indeed, the same local situation $N^t(v)$ cannot occur in any instance whose time bound $T(n')$ is already strictly less than $t$. Conversely, if a new view at radius $t$ could also occur inside some smaller instance with $T(n')<t$, then any rule that keeps running when seeing that view would necessarily violate the time bound on that smaller instance. That is because the node cannot distinguish the two situations.

Maximally $T$-safe neighborhoods are used as a means to define a \emph{stopping view} that depends only on $T$ and is independent of the algorithm. Moreover, when $T$ is non-decreasing and computable, whether a given rooted neighborhood is maximally $T$-safe is decidable (see Lemma~\ref{lem:max-safe-re}), which is what lets us algorithmically enumerate and recognize these views when needed.
\paragraph{From uncomputable to computable with an upper bound on $n$.} The proof of this equivalence uses the following observation:
\begin{quote}
	When nodes know an upper bound $N$ on the number of vertices, there are only finitely many maximally $T$-safe neighborhoods that can arise in networks of size at most $N$.
\end{quote}
This means that if there is an LCL problem $\Pi$ and an  \ulocal  algorithm $\AA$ that has an upper bound $N$ on the network size $n$, then we  can define a \clocal algorithm $\BB$ that solves $\Pi$ with the same locality and knowledge of the same upper bound $N$ on the number of nodes.  The computable algorithm $\BB$ simply enumerates all these finitely many maximally $T$-safe neighborhoods and searches for a mapping from them to output labels that satisfies the LCL constraints (see Theorem~\ref{thm:equiv-comp-uncomp}).

\paragraph{\boldmath Removing the upper bounds on $n$.} Given a family $\{\AA_{N}\}$ of algorithms that work with knowledge of $N$, we ``sandbox'' each $\AA_{N}$ by first growing the view until it becomes maximally $T$-safe, then simulating $\AA_N$ inside that neighborhood, and outputting $\bot$ if the simulation ever tries to inspect nodes outside it. The key point is that a maximally safe neighborhood cannot be enlarged without risking a violation of the time bound, so this normalization stays within $T(n)$ rounds and makes the later stabilization argument possible (see Theorem~\ref{thm:eliminate-upper-bound}).

\paragraph{A final remark.} Our equivalence results combine two different ingredients. The conversion from an \ulocal algorithm to a \clocal one given an upper bound is constructive. Indeed, we can enumerate the (finitely many) maximally $T$-safe neighborhoods and search for a correct mapping to output labels.  In contrast, removing the need for an upper bound in \ulocal relies on an existential stabilization step: we pass to an infinite subsequence of bounds on which the sandboxed algorithms agree on each neighborhood  (see the Construction step $2$ in \cref{thm:eliminate-upper-bound}). We do not currently know a direct constructive transformation that replaces this stabilization step.

\section{Preliminaries}

We write $\N_+$ for the set of positive integers and $\N_0$ for $\N_+ \cup
\{0\}$. Moreover, we use the notation $[n] =\{1,\dots, n\}$ for any $n\in \N_+$.

\paragraph{Graphs.}
A graph $G=(V,E)$ consists of a set of nodes $V$ and a set of edges $E$, and we
use the notation $V(G)$ and $E(G)$, respectively, if we need to specify which
graph we refer to.
For any two nodes $u,v\in V$ in a graph $G=(V,E)$, $\dist_G(u,v)$ is the length
of a shortest path starting from $u$ and ending at $v$.
We drop the subscript $G$ whenever there is no ambiguity.
The degree of a node $v \in V$ is denoted by $\deg_G(v)$. If $G$ is a subgraph of $H$, we write $G\subseteq H$. For any subset of nodes $A\subseteq V$, we denote by $G[A]$ the subgraph induced by the nodes in $A$. For any nodes $u,v\in V$, $\dist_G(u,v)$ is the length
of a shortest path starting from $u$ and ending at $v$; if $u$ and $v$ are disconnected, then $\dist_G(u,v) = +\infty$.
The degree of a node $v \in V$ is denoted by $\deg_G(v)$. We drop the subscript $G$ whenever there is no ambiguity.

Since we are considering distributed models, our definition of the radius-$r$
neighborhood of a node is tailored to match precisely what can be gathered in
$r$ rounds of a distributed message passing protocol (with unbounded
communication).
It is indeed very similar to the standard graph-theoretic one, but there is a
fine distinction to be made, which we discuss after the definition.

\begin{definition}[Neighborhoods]
  Let $G$ be a graph and $v \in V(G)$ a node of $G$.
  For $r \in \N_0$, the \emph{radius-$r$ neighborhood} $\NN^r(v)$ of $v$ is a
  labeled graph constructed as follows:
  \begin{itemize}[noitemsep]
    \item $\NN^r(v)$ contains every node at distance at most $r$ from $v$; that
    is, the set of nodes in $\NN^r(v)$ is exactly $\{ u \in V(G) \mid \dist(u,v)
    \le r \}$.
    \item $\NN^r(v)$ contains every edge $u_1u_2 \in E(G)$ where either
    $\dist(u_1,v)$ or $\dist(u_2,v)$ is at most $r-1$.
    \item Every node in $\NN^r(v)$ is labeled with its degree $\deg(u)$ in $G$.
  \end{itemize}
\end{definition}

Naturally, the graph-theoretic definition would state that $\NN^r(v)$ is the
subgraph of $G$ induced by nodes at distance at most $r$ from $v$.
However, if we are in a distributed message passing network, then a node does
know its number of neighbors but not necessarily \emph{with which node} it is
connected to until it has actually learned unique information about it (e.g.,
the neighbor's unique identifier).
Extended to the radius-$r$ case, this means the node $v$ in the definition above
does not know about edges between nodes that are at distance (exactly) $r$ from
it; it knows that these edges exist (since it knows the degrees of these nodes)
but not how the edges are connected.

We consider the class of all graphs of maximum degree $\Delta$, formally
\[
\GG = \left\{G=(V,E)\mid \max_{u\in V}\deg(u)\leq \Delta\right\}.
\]
Such a graph class is \emph{recursively enumerable}. Throughout this paper we consider $\Delta = 10$, but we stress that choice of the graph family does not matter either that much. We make use of this graph class to show our separation results in \cref{sec:separation}. But for the equivalence results (\cref{section:equivalence_results}), one can use any bounded-degree recursively enumerable graph family.

\paragraph{Networks.}

Fix a finite set of \emph{input labels} $\Sigma_\iin$.
A \emph{network} is a tuple $(G,x,\ddeg,\id,\{p_v\})$ where:
\begin{itemize}[noitemsep]
  \item $G$ is a graph,
  \item $x \colon V(G) \to \Sigma_\iin$ is an \emph{input} to the network,
  \item $\ddeg\colon V(G) \to \N_0$ is a \emph{degree function} such that, for
  every node $v \in V(G)$, $\ddeg(v) = \deg(v)$; and
  \item $\id\colon V(G) \to [n^c]$ is an injective assignment of unique
  \emph{identifiers} for nodes $V(G)$.
  \item $\{p_v\}$ is a family of \emph{port-numbering functions} such that for every node $v$, function $p_v$ is a bijection from edges adjacent to $v$ to port numbers $[\ddeg(v)]$.
\end{itemize}
Here $c$ is a positive constant chosen arbitrarily; for the purposes of this
paper, we assume that $c$ is a \emph{fixed} constant that is known by all
models.
We can encode inputs on edges by allowing the input label set $\Sigma_\iin$ to refer to port numbers.

We generalize neighborhoods to networks.
Given a neighborhood $\NN^r(v)$ of a vertex $v$, the \emph{$r$-neighborhood rooted
at $v$} is the network $(\NN^r(v),x\restriction_{\NN^r(v)},\ddeg\restriction_{\NN^r(v)},\id\restriction_{\NN^r(v)}, \{p_v\}_{v \in \NN^r(v)})$
obtained by restricting the network $(G,x,\ddeg,\id,\{p_v\})$ to $\NN^r(v)$.
In this context, we imagine the node $v$ is distinguished in some way (e.g.,
by extending $x$ with a special label or by extending the triple by an element
designating $v$) and refer to it as the \emph{root} of the neighborhood.
In case it is not important to indicate which node specifically is the root, we
also write simply \emph{rooted neighborhood} to refer to the concept in general.
When $\Sigma_\iin$ is fixed, we write $\rootn$ for the set of all rooted
neighborhoods.
Abusing notation, we use $\NN^r(v)$ to refer both to a \emph{graph}, that is, the
neighborhood of $v$ (in the graph $G$) and to a \emph{network}, that is, the
$r$-neighborhood rooted at $v$.

The concept of isomorphism also readily extends to networks.
Two networks $(G,x,\ddeg,\id,\{p_v\})$ and $(G',x',\ddeg',\id',\{p_v'\})$ are \emph{isomorphic} if
there is a bijective map $\phi\colon V(G) \to V(G')$ that is an isomorphism
of graphs and additionally satisfies $x \comp \phi^{-1} = x'$, $\deg \comp \phi^{-1}
= \deg'$, $\id \comp \phi^{-1} = \id'$, and $p_v(\{v,u\}) = p'_{\phi(v)}\bigl(\{\phi(v),\phi(u)\}\bigr)$ for every $u,v \in V(G)$.
Using this same notion, we occasionally also say that two subgraphs $H \subseteq
G$ and $H' \subseteq G'$ are \emph{isomorphic} (\emph{in the extended sense}) if
we have such an isomorphism $\phi$ between the networks
$(H,x\restriction_H,\ddeg\restriction_H,\id\restriction_H,\{p_v\}_{v\in V(H)})$ and $(H',x'\restriction_{H'},\ddeg'\restriction_{H'},\id'\restriction_{H'},\{p'_v\}_{v\in V(H')})$ (with no
explicit reference to these networks).

\paragraph{Labeled graphs and locally checkable labelings.}
Following the notation in~\cite{balliu-ghaffari-etal-2025-shared-randomness-helps-with}, we define the notion of labeled graph and locally checkable labeling (LCL) problem.

\begin{definition}[Labeled graph]
	Let $\mathcal{V}$ and $\mathcal{E}$ be sets of labels and a graph $G=(V,E)$ is $(\mathcal{V},\mathcal{E})$ labeled if every vertex $v\in V$ is assigned a label from $\mathcal{V}$ and each pair $(u,e)\in V\times E$ such that $u\in e$, is assigned a label from $\mathcal{E}$.
\end{definition}
A node-edge pair $(u,e)$ that satisfies $u\in e$ is also called a \emph{half-edge} incident to $u$.
\begin{definition}
	Let $G$ be a graph, and let $\CC$ be a set of constraints over the labels $\VV$ and $\EE$. The graph $G$ satisfies $\CC$ if  and only if
	\begin{itemize}[noitemsep]
		\item $G$ is $(\VV,\EE)-$labeled, and
		\item the constraints of $\CC$ are satisfied over all nodes of $G$.
	\end{itemize}
\end{definition}
\begin{definition}[LCL, Definition~2.3 in~\cite{balliu-ghaffari-etal-2025-shared-randomness-helps-with}]
	A locally checkable labeling problem $\Pi$ is defined by a tuple $(\mathcal{V}_{\texttt{input}},\mathcal{E}_{\texttt{input}},\mathcal{V}_{\texttt{output}},\mathcal{E}_{\texttt{output}},\mathcal{C} )$ where $\mathcal{C}$ is the set of constraints and given a  $(\mathcal{V}_{\texttt{input}},\mathcal{E}_{\texttt{input}})$-labeled  input graph $G$, the aim is to $(\mathcal{V}_{\texttt{output}},\mathcal{E}_{\texttt{output}})$-label $G$ such that the labelings satisfy the constraints $\mathcal{C}$.
\end{definition}

Given a graph $G$ which is $(\mathcal{V},\mathcal{E})$-labeled, let $P = (u=v_1,v_2,\ldots,v_{k+1})$ be a path in $G$. For each edge $e = (v_i,v_{i+1})$ in $P$, the half-edge $(v_i,e)$ is labeled with $L_{v_i}(e) = L_i$. Then

\[
f(u,L_1,L_2,\ldots, L_k) = \begin{cases}
	v_{k+1}, & \text{if $P$ exists and is unique}\\
	\bot, & \text{otherwise.}
\end{cases}
\]
Intuitively, function $f$ allows us to start from a node and follow a path defined by unique edge labels to find the other end of the path; we will use this to define LCL problems.

\paragraph{Models of computation.}
The \local model~\cite{linial-1987-distributive-graph-algorithms-global} is a distributed model of computation in which each node in the network executes the same algorithm independently. Computation proceeds in synchronous rounds, and in each round every node may exchange messages with all of its neighbors. There is no bound on the size of the messages (unbounded communication). Every node must eventually terminate and produce an output. The complexity measure in this model is the number of rounds required until all nodes output.

We define the \local model in the so-called \emph{ball growing} formalism.
The intuition for this model is the following:
Suppose we are considering the execution of our distributed algorithm from the
perspective of a node $v \in V(G)$.
The node has witnessed $t \in \N_0$ rounds of the algorithm but has not
committed to an output yet.
Since in \local we have access to unbounded communication, without restriction
during these $t$ rounds the node $v$ has gathered complete information about the
rooted neighborhood $\NN^t(v)$ corresponding to it.
At this point, it needs to make a decision:
Does it commit to an output based on what it has seen so far (i.e., $\NN^t(v)$),
or does it explore the graph a bit further, \emph{growing} $\NN^t(v)$ to
$\NN^{t+1}(v)$?

\begin{definition}[Ball-growing \local]
  Fix label sets $\Sigma_\iin$ and $\Sigma_\out$.
  The \local model in the \emph{ball-growing} formalism is defined by a map
  $f\colon \rootn \times \N_0 \to \Sigma_\out \cup \{ \bot \}$, where $\bot
  \notin \Sigma_\out$ is a sentinel value.
  Given a network $(G,x,\ddeg,\id,\{p_v\})$ as input, the output of a node $v \in V(G)$
  is the value $f(\NN^t(v),t)$ where $t \in \N_0$ is minimal such that
  $f(\NN^t(v),t) \neq \bot$.
  This can be thought of as the result of executing the following iterative
  process at each node $v$:
  \begin{enumerate}[noitemsep]
    \item Initialize $t$ to zero.
    \item If $f(\NN^t(v),t) \neq \bot$, output it; otherwise increment $t$ and
    repeat.
  \end{enumerate}
  Without restriction, we consider only \enquote{sensible} algorithms where, if
  $\NN^t(v)$ has radius $t$, then $f(\NN^t(v), t') = f(\NN^t(v),t+1) \neq \bot$ for
  every $t' \ge t+1$.

  Without further qualifications, the above gives us the \ulocal model.
  If we also insist that $f$ is a computable function, then the resulting model
  is \clocal.
  The \emph{complexity} or \emph{locality} of a \local algorithm (in either
  case) is the minimal function $T\colon \N_+ \to \N_0$ such that, if $G$ is a
  graph on $n$ nodes and $v \in V(G)$ is a node in a network $(G,x,\ddeg,\id,\{p_v\})$,
  then there is $t \le T(n)$ for which $f(\NN^t(v),t) \neq \bot$.
\end{definition}
Note that equivalently, $T(n)$ can be seen as the worst-case number of rounds required over all graphs of size $n$. It follows that $T$ is non-decreasing: if $n_1 \leq n_2$, then the set of graphs with $n_1$ nodes is contained in the class of graphs with $n_2$ nodes, and taking the maximum over a larger class cannot yield a smaller value. Hence, $T(n_1) \;\leq\; T(n_2)$.

\section{Separation Result}\label{sec:separation}
In this section, we prove the following result:

\begin{theorem}\label{thm:separation}
  There is a problem $\Pi$ that has round complexity $\Omega(\sqrt{n})$ in
  \clocal but only $\OO(\log n)$ in \ulocal.
\end{theorem}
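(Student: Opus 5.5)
The plan follows the technical overview: construct the promise-free LCL $\Pi = \Pi_\tree \otimes \Pi_\row \otimes \Pi_\grid \otimes \Pi_\turing \otimes \Pi_\consensus$, then prove an $\OO(\log n)$ upper bound in \ulocal and an $\Omega(\sqrt{n})$ lower bound in \clocal.

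\textbf{Construction and local checkability.} Each structural layer $\Pi_i$ splits into $\Pi_i^{\good}$ and $\Pi_i^{\bad}$ with disjoint output alphabets. On the good side, a node outputs a fixed certifying label. On the bad side, a node outputs a pointer chain, followed via the path function $f(u,L_1,\dots,L_k)$, that ends at a locally visible defect. Constraints of later layers apply only to nodes certified good by all earlier layers.

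\begin{itemize}
\item The tree layer certifies balanced binary trees of depth $\OO(\log n)$ whose leaves are the nodes of a row.
\item The row and grid layers certify that rows are oriented paths with aligned north/south neighbours, and that consecutive rows differ in length by exactly one. This is the growing grid.
\item The Turing layer checks, locally between consecutive rows, that row $i+1$ encodes the successor configuration of $U$ from row $i$. Here $U$ is a fixed universal machine with distinct halting accept and reject states.
\item The consensus layer requires each valid region to output a common bit. If the region contains a halting configuration, the bit must equal the accept/reject outcome.
\end{itemize}

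Since the constraints have constant radius and finite alphabets, $\Pi$ is an LCL. A key technical step is to check that every input admits a valid bad-certificate wherever the structure is broken. We also need such certificates to be computable within $\OO(\log n)$ rounds by nodes in malformed parts. The standard approach is that errors are detectable within distance $\OO(\log n)$, using the tree depth bound and the fact that a tree of depth $d$ forces $2^d$ nodes.

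\textbf{Upper bound in \ulocal.} First, every node verifies the structure within $\OO(\log n)$ rounds, or else outputs an error pointer. Next, on a good region, the root of each row's tree gathers the row in $\OO(\log n)$ rounds and decodes its configuration $C$. The root then applies the uncomputable map $H(C) \in \{\mathrm{acc},\mathrm{rej},\infty\}$ and outputs $1$, $0$, or $0$ respectively. It broadcasts this bit down the tree.

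All rows in a valid region encode successive configurations of one deterministic run. They therefore receive the same value of $H$, so the bits are consistent. If the run halts inside the grid, the bit matches the halting state. Correctness near region boundaries, where valid regions are separated by error regions, must also be argued.

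\textbf{Lower bound in \clocal.} Suppose a computable algorithm $\AA$ has locality $T(n)=o(\sqrt n)$. Consider the growing grid $G_k$ with $k$ rows, so $n=\Theta(k^2)$ and $T(n)=o(k)$. Its first row encodes an arbitrary initial configuration $x$, and the rest is filled by the run of $U$, truncated after $k$ steps. For large $k$, nodes in the first rows see neither the bottom nor the far side of the grid. Their view is therefore identical across all sufficiently large $k$, and it is computable from $x$ alone by simulating $U$ for $T(n)+O(1)$ steps. I expect this to be the main obstacle: $\AA$ does not know $n$, and the IDs must be assigned canonically. I would fix a computable ID scheme (row-major numbering) and argue that some fixed $k_x$, found by searching $k$ until the view of a designated first-row node stabilises, is computable from $x$.

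Define $T_\AA(x)$ to be the output bit of this node. It is total and computable. Let $D$ be a machine that on input $\descr{M}$ runs $M$ on $(\descr{D},\descr{M})$ and then halts in the opposite state. Set $x=(\descr{D},\descr{T_\AA})$, or rather the initial configuration of $U$ running $D$ on $\descr{T_\AA}$. This run halts after finitely many steps $s$. Choose $k\ge s$ large, so the grid contains the halting configuration. Then $\AA$'s bit on the first row equals $T_\AA(x)$, while the required bit is its negation, a contradiction. Hence the complexity is $\Omega(\sqrt n)$.
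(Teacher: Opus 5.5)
Your proposal follows the paper's proof in all essentials: the same five-layer composition with distinct halting accept and reject states, and the same oracle-based $\OO(\log n)$ algorithm in \ulocal that outputs $\reject$ on non-halting configurations. It also uses the same diagonalization, in which a total computable machine simulates $\AA$ on canonically numbered growing grids $G_k$ and searches for the least $k$ at which the relevant nodes commit before their ball reaches the last row; placing $x$ on the first row rather than on layer $\abs{x}$ below padding rows, as the paper does, is an inessential variation. The one point to sharpen is that your stabilisation test must be exactly this decidable condition (the designated node halts with a view not touching row $k$), because that is what guarantees identical behaviour on every $G_{k'}$ with $k' \ge k$, whereas observing the output over finitely many values of $k$ would not.
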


\subsection{LCL Backbone}
\label{sec:lcl-backbone}

Before we present the definition of our LCL problem, we introduce a few concepts
that are helpful in keeping the description more digestible.
This approach is implicit in many works~\cite{balliu-ghaffari-etal-2025-shared-randomness-helps-with}; we make
the concept explicit here in order to better structure the presentation.

The first of these concepts is the notion of a \emph{certification LCL problem}.
This is an LCL problem $\Pi = (\Vin, \Ein, \Vout, \Eout, \CC)$
that actually consists of two sub-problems $\Pi^\good$ and $\Pi^\bad$ where the
label sets of $\Pi^\good$ and $\Pi^\bad$ are incompatible with each other.
More precisely, we have the following:
\begin{itemize}
  \item $\Pi^\good$ and $\Pi^\bad$ use the same sets $\Vin$ and $\Ein$ as input
  labels.
  \item As output labels, $\Pi^\good$ uses $\Vout_\good$ and $\Eout_\good$ while
  $\Pi^\bad$ uses $\Vout_\bad$ and $\Eout_\bad$, where we have $\Vout_\good \cap
  \Vout_\bad = \varnothing$ and $\Eout_\good \cap \Eout_\bad = \varnothing$.
  The output label sets of $\Pi$, that is, $\Vout$ and $\Eout$ are the resulting
  disjoint unions thereof, respectively.
  \item $\Pi^\good$ and $\Pi^\bad$ specify different constraint sets $\CC_\good$
  and $\CC_\bad$.
  The constraint set $\CC$ of $\Pi$ is the union $\CC_\good \cup \CC_\bad$.
\end{itemize}
Whenever we deal with such an LCL problem $\Pi$, we write $\Pi = (\Pi^\good,
\Pi^\bad)$ to refer to these two parts.
(In this definition as presented we make no distinction between $\Pi^\good$ and
$\Pi^\bad$ yet; this only appears in connection with the next concept.)

The second concept that we need is the notion of \emph{composing} LCL problems
where one of the problems is a certification LCL problem.
Given a certification LCL problem $\Pi_1 = (\Vin_1, \Ein_1, \Vout_1, \Eout_1,
\CC_1)$ and another (general) LCL problem $\Pi_2 = (\Vin_2, \Ein_2, \Vout_2,
\Eout_2, \CC_2)$, we define $\Pi_1 \otimes \Pi_2$ to be the LCL problem $\Pi$
with the following characteristics:
\begin{itemize}
  \item The sets of input labels of $\Pi$ are $\Vin_1 \times \Vin_2$ and $\Ein_1
  \times \Ein_2$ for nodes and edges, respectively.
  \item The sets of output labels are $\Vout_1 \times (\Vout_2 \cup \{ \bot \})$
  and $\Eout_1 \times (\Eout_2 \cup \{ \bot \})$, where $\bot$ is a sentinel
  value that is not an element of $\Vout_2$ or $\Eout_2$.
  \item The constraints of $\Pi$ are as follows:
  \begin{itemize}
    \item An output $(x,y)$ of a node or half-edge is only admissible if it is
    one of the following:
    \begin{enumerate}
      \item If $x$ is an output label from $\Pi_1^\good$, then $y$ is an output
      label from $\Pi_2$ (i.e., $y \neq \bot$).
      \item If $x$ is an output label from $\Pi_1^\bad$, then $y = \bot$.
    \end{enumerate}
    \item If we project the outputs to their first component everywhere, then
    the resulting graph satisfies $\CC_1$.
    \item If we project the nodes' output to their second component everywhere
    and disregard nodes and edges that have $\bot$ as output (where we disregard
    the entire edge if either half-edge is labeled $\bot$), then the resulting
    graph satisfies $\CC_2$.
  \end{itemize}
\end{itemize}
This definition naturally extends to the composition of multiple problems $\Pi_1
\otimes \dots \otimes \Pi_k$ (where $\Pi_i$ is a certification LCL problem for
every $i < k$) in the obvious manner.

The idea behind these two formalisms is to use the certification LCL problems as
a \enquote{filter} that only allows instances with a structure compatible with
the certification scheme of $\Pi_i^\good$ to pass through.
Undesirable structures can be detected and the offending parts certified using
labels from $\Pi_i^\bad$.
(The latter is required in order for the problem to not be trivially solvable.)
This approach enables us to turn a problem $\Pi_k$ that gives us a separation
assuming a certain structure into a promise-free separation by embedding the
certification of the structure into the problems $\Pi_1,\dots,\Pi_{k-1}$.

We now give an overview of the LCL problem that we construct.
Namely our problem is
\[
    \Pi = \Pi_\tree \otimes \Pi_\row \otimes \Pi_\grid \otimes \Pi_\turing \otimes \Pi_\consensus
\]
where the individual problems are as follows:

\begin{enumerate}
  \item In $\Pi_\tree$, we certify a complete bipartite tree structure
  everywhere the nodes receive as input label a special symbol $\mathrm{T}$.
  The certificate comes from the sub-problem $\Pi_\tree^\good$.
  If any connected component formed by nodes with this input does not form a
  complete bipartite tree, then we can output a certificate for this (using
  labels from $\Pi_\tree^\bad$) that can be checked with constant radius.
  Nodes that are not marked with $\mathrm{T}$ are quiescent; any output is valid for them.

  \item $\Pi_\row$, is similar to $\Pi_\tree$ in that we are also certifying a structure in nodes marked with $\mathrm{G}$, in this case that an oriented row is properly attached to the leaves of a tree.
  In particular, nodes satisfying $\Pi_\row^\good$ form rows that have beginnings and ends, and these match the left-most and right-most nodes of the attached trees.
  Also, every row has only one tree attached to it, which means that all nodes of the row are within distance $O(\log n)$ of each other.
  If this does not hold, then the algorithm can solve $\Pi_\row^\bad$ which certifies that there exists an error in the row.

  \item With $\Pi_\grid$, we continue our certification of the structure of nodes marked with $\mathrm{G}$, this time certifying that the nodes form an oriented grid where each row is larger than the previous one by \emph{exactly one node} at the far right end of the row, which we name the \emph{growing grid}.  
  Again, if the structure is correct, then we can certify it using labels from
  $\Pi_\grid^\good$; if the opposite is the case, then we can show this using
  labels from $\Pi_\grid^\bad$.

  \item The problem $\Pi_\turing$ embeds the trace of a universal Turing
  machine $U$ in the growing grid structure evolving in the downward direction.
  It contains specifications such that every configuration of $U$ corresponds to
  a row in the grid.
  This includes ensuring that the configuration has a single Turing machine head
  (which is checked using the tree structure) and that the configuration of the
  respective row arises after one step of $U$ from the one in the previous row
  (which can be checked locally).
  As before, any detected violations can be certified using the sub-problem
  $\Pi^\bad_\turing$.
  Otherwise the algorithm may solve $\Pi^\good_\turing$.
 
  \item The last problem $\Pi_\consensus$ is the one where we get a separation between computable and uncomputable versions of the \local model.
  In every connected component, if the configuration where $U$ halts is present,
  then we must consistently output a bit corresponding to the accept or reject
  state of $U$; otherwise, that is, if $U$ has not halted yet in the trace that
  we are given, then we are allowed to output any of the two bits as long as it
  is consistent across the component.
\end{enumerate}

\begin{figure}
  \centering
  \begin{subfigure}{0.48\textwidth}
    \centering
    \includegraphics[scale=0.5]{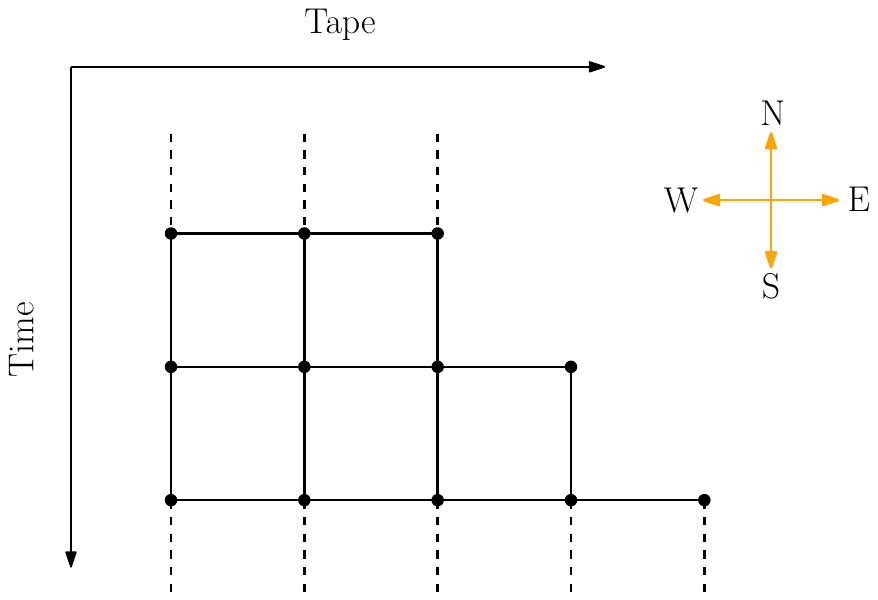}
    \caption{A fragment of a growing grid. The tape grows towards right, and time flows downward.}
    \label{fig:growing-grids}
  \end{subfigure}\hfill
  \begin{subfigure}{0.5\textwidth}
    \centering
    \includegraphics[scale=1.1]{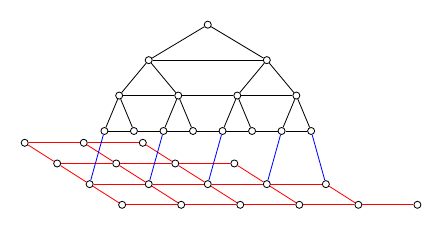}
    \caption{A tree growing on top of a row of a growing grid. In reality, all rows have a tree on top of them.}
    \label{fig:row-certificate}
  \end{subfigure}
  \caption{Visualizations of growing grids (\cref{def:growing-grid}). A tree attached to a row is shown on right.}
  \label{fig:lcl-construction}
\end{figure}
See \cref{fig:lcl-construction} for an illustration of the structures just
described.
In the next section, we formalize this problem and show how to carry out the
certification procedures we just mentioned.

\subsection{Defining the LCL}

\subsubsection{Trees}
We start by lifting certification of tree-like structures from existing work \cite{balliu-ghaffari-etal-2025-shared-randomness-helps-with}.
\begin{definition}[Perfect Tree-like Structure]\label{def:perfect-tree-like structure}
A graph \( G \) is said to be a \emph{perfect tree-like structure} of height \( \ell \) if there exists an assignment of coordinates to its vertices satisfying the following conditions:
\begin{enumerate}
    \item Each node \( u \in V(G) \) is assigned coordinates \((\ell_u, k_u)\), where
    \begin{itemize}
        \item \( 0 \le \ell_u < \ell \) denotes the \emph{level} (or depth) of \(u\) in the tree, and
        \item \( 0 \le k_u < 2^{\ell_u} \) denotes the \emph{horizontal position} of \(u\) within its level, assuming a level-by-level traversal.
    \end{itemize}

    \item Two nodes \( u = (\ell_u, k_u) \) and \( v = (\ell_v, k_v) \) are adjacent in \(G\) if and only if one of the following holds:
    \begin{enumerate}
        \item \(\ell_u = \ell_v\) and \(|k_u - k_v| = 1\) \hfill (horizontal edges),
        \item \(\ell_v = \ell_u + 1\) and \(k_u = \left\lfloor \tfrac{k_v}{2} \right\rfloor\) \hfill (parent–child relation), or
        \item \(\ell_u = \ell_v + 1\) and \(k_v = \left\lfloor \tfrac{k_u}{2} \right\rfloor\) \hfill (child–parent relation).
    \end{enumerate}

    \item All levels are fully populated, i.e. for each $0\leq i < \ell$, there exist nodes with coordinates $(i,j)$ for each $j\in \{0,\ldots,2^i-1 \}$.
\end{enumerate}
\end{definition}

\paragraph{Good perfect tree.}
The LCL problem that certifies perfect tree-like structures is denoted by $\Pi_\tree = (\Pi_\tree^\good,\Pi_\tree^\bad)$. The LCLs $\Pi_\tree^\good$ and $\Pi_\tree^\bad$ are defined as follows:
\[
	\Pi_\tree^\good =
(\Vin_\tree, \Ein_\tree,
 \Vout_\goodtree, \Eout_\goodtree,
 \CC_\goodtree),
\]
where $\Vin_\tree = \{\mathrm{T}, \mathrm{G}\},\ \Ein_\tree = \{\chl, \chr, \Parent, \Left\}$ and $\Vout_\tree = \{\bot\}$ and $\Eout_\tree = \{\bot\}$.
The constraints of $\CC_\goodtree$ are the same set of local constraints of tree-like structure ($\mathcal{C}^{\mathrm{tree}}$) specified in~\cite{balliu-ghaffari-etal-2025-shared-randomness-helps-with}, except that we enforce them only for nodes with input label $\mathrm{T}$.
We reiterate them here:

\begin{framed}
	\noindent\textbf{Constraints $\CC_\goodtree$}

	\noindent
	Every node with input label $\mathrm{T}$ and output label $\bot$ satisfies the following constraints:
	\begin{enumerate}
    		\item
    		For any node $u$ and two distinct incident edges $e,e'$, it holds that $L_u(e) \neq L_u(e')$.

    		\item
    		For any edge $e = \{u,v\}$, if $L_u(e) = \Left$, then $L_v(e) = \Right$, and vice versa.

    		\item
    		For any edge $e = \{u,v\}$, if $L_u(e) = \Parent$, then $L_v(e) \in \{\chl,\chr\}$, and conversely.

    		\item
    		If $u$ has an incident edge $e = \{u,v\}$ such that $L_u(e)=\Parent$ and $L_v(e)=\chl$,
    		then $f(u,\Parent,\chr,\Left)=u$.

    		\item
    		If $u$ has incident edges $\{u,v\}$ and $\{u,w\}$ with
    		$L_u(\{u,v\})=\Parent$, $L_v(\{u,v\})=\chr$, and $L_u(\{u,w\})=\Right$,
    		then $f(u,\Parent,\Right,\chl,\Left)=u$.

		\item If a node has an incident half edge labeled $\chl$ then it must also have an incident half edge labeled $\chr$ and vice versa.

    		\item
   		 A node has no incident half-edge labeled $\Parent$
    		if and only if it has no incident half-edges labeled $\Left$ or $\Right$.

        	\item If a node $v$ has no incident half-edges labeled $\chl$ or $\chr$,
				then neither do $f(v,\Left)$ nor $f(v,\Right)$, if they exist. Also, such a node $v$ can have at most one incident half-edge labeled $\mathrm{Ch}$.
		\item If a node $u$ has an incident edge $e= \{u,v\}$ with label $L_u(e)=\Parent$ and $L_v(e) = \chr$ (respectively $L_v(e)=\chl$), then $u$ has an incident edge labeled $\Right$ (respectively $\Left$) if and only if $f(u,\Parent)$ has an incident edge labeled $\Right$ (respectively $\Left$).
	\end{enumerate}
\end{framed}

In the following theorem, we claim that the constraints not only certify tree-like structures, but certify perfect tree-like structures.
\begin{theorem}
	A graph $G= (V,E)$ is a perfect tree-like structure if and only if it admits a $(\Vin_\tree,\Ein_\tree)$-labeling which satisfies  the constraints $\CC_\goodtree$.
\end{theorem}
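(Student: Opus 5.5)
We prove the two directions separately; the constructive direction is routine and the converse carries the real work.

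\emph{($\Rightarrow$) A perfect tree-like structure admits a valid labeling.} Take the coordinates $(\ell_u,k_u)$ from \cref{def:perfect-tree-like structure}. For a parent–child edge from $u=(\ell,k)$ to $v=(\ell+1,2k)$, set $L_u=\chl$ and $L_v=\Parent$. For the edge from $u$ to $v=(\ell+1,2k+1)$, set $L_u=\chr$ and $L_v=\Parent$. For a horizontal edge from $(\ell,k)$ to $(\ell,k+1)$, set $L=\Right$ on the first half-edge and $L=\Left$ on the second.

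We then check the nine constraints one by one, reading them off the coordinates. Constraints 1--3 hold by construction. Constraint 4 says that for a left child $(\ell+1,2k)$, the walk parent, right child, left returns to the node, since $(\ell,k)\to(\ell+1,2k+1)\to(\ell+1,2k)$. Constraint 5 says that for a right child $(\ell+1,2k+1)$ with a right neighbor, the path parent, right, left child, left returns: $(\ell,k)\to(\ell,k+1)\to(\ell+1,2k+2)\to(\ell+1,2k+1)$.

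Constraint 6 holds because every non-leaf has both children, by full population. Constraint 7 holds because only the root $(0,0)$ has neither a parent nor horizontal neighbors. Constraint 8 holds because leaves are exactly level $\ell-1$, and every node has at most one parent. Constraint 9 follows from the fact that $(\ell+1,2k+1)$ has a right neighbor iff $k<2^\ell-1$, iff $(\ell,k)$ has a right neighbor; the analogous statement holds for left children and left neighbors.

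\emph{($\Leftarrow$) A valid labeling forces a perfect tree-like structure.} We assume $G$ is connected, as implicitly in~\cite{balliu-ghaffari-etal-2025-shared-randomness-helps-with}, and recover coordinates in four steps.

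First, by constraints 1--3 the $\Parent$ pointers define a function $p$ on non-root nodes. By constraint 7 the roots are exactly the nodes with no horizontal edges.

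Second, we show that following $p$ never cycles and always ends at a root. The plan is to argue that horizontal $\Left/\Right$ edges form paths, not cycles. Along such a path, constraints 4--5 and 9 force the parents to form a horizontal path as well, with the first and last nodes of each level determined by constraint 9. Then we induct on the ``layer'' structure starting from a root.

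Third, pick a root $r$ with $\ell_r=k_r=0$. Define coordinates top-down: the left and right children of $(\ell,k)$ get $(\ell+1,2k)$ and $(\ell+1,2k+1)$. Constraints 4--5 give that horizontal edges join consecutive positions on each level: siblings via constraint 4, and cousins across a parent's right edge via constraint 5. Constraint 9 ensures that each level runs exactly from position $0$ to $2^{\ell}-1$, since the leftmost and rightmost nodes lack $\Left$ and $\Right$ respectively, inheriting from their parents.

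Fourth, we establish uniform depth. Constraint 8 propagates leafness horizontally, so if some node of a level is a leaf then the whole level consists of leaves. Constraint 6 makes every non-leaf have two children. Together these give full levels and a common height $\ell$.

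Injectivity of the coordinate map is the remaining issue: two distinct nodes must not receive the same coordinate, and no extra nodes or edges may exist. We handle this by showing that the map is a bijection onto the coordinate set. Surjectivity comes from the construction. For injectivity, each node has a unique parent, so its coordinate is determined by a unique downward path from $r$. Connectivity plus constraint 1 rules out extra edges, because every edge has some label and all labels are accounted for.

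The main obstacle is the second step together with this injectivity check. A local labeling might a priori wrap around, for instance a horizontal cycle whose parents also form a cycle, giving a ``cylinder'' with no root. We would rule this out using constraint 7: a root must exist in a finite graph, because parent chains must terminate or repeat, and a repeat would contradict the level-doubling counting (each level has twice as many nodes as the one above). Constraint 9 then pins the horizontal ends so that each level is a path rather than a cycle. If this counting argument is delicate, we would instead fall back on the proof of the analogous statement in~\cite{balliu-ghaffari-etal-2025-shared-randomness-helps-with} and only verify that the additional constraints 6, 8 and 9 force perfection.
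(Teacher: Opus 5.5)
Your forward direction uses the same labeling as the paper and checks each rule in more detail. One small slip: the second clause of rule~8 is about half-edges labeled $\mathrm{Ch}$, which the pure tree labeling never uses, so it holds vacuously; ``every node has at most one parent'' is not the reason. For the converse you take a genuinely different route. The paper cites \cite[Lemma~4.2]{balliu-ghaffari-etal-2025-shared-randomness-helps-with} for tree-likeness and adds only a leftmost-defect argument, using rules~6, 9 and~5, for full population. You instead rebuild the coordinates from a root:
\begin{itemize}
\item each level is a horizontal path with the correct endpoints, by rules~4, 5 and~9;
\item depth is uniform, by rule~8 together with rule~6;
\item the root's tree is closed under adjacency, because every half-edge carries one of five distinct labels, so by connectivity it is all of $G$.
\end{itemize}
This route is self-contained. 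It also makes explicit the connectivity hypothesis the statement silently needs: a disjoint union of two perfect trees satisfies $\CC_\goodtree$ but is not perfect tree-like. It also handles uniform depth more transparently than the paper does. Its thin spot is root existence, the only genuinely global step, which you only sketch. As phrased, the sketch leans on ``levels'', which do not exist before a root does. Rule~9 likewise pins row endpoints only inside a top-down induction from an existing root, so it cannot rule out wrap-around beforehand.

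Your counting idea does close this step once ``level'' is replaced by \emph{row}, meaning a connected component of $\Left$/$\Right$ edges.
\begin{itemize}
\item Take a row $R$ with $|R|\ge 2$. Every node of $R$ has a parent by rule~7.
\item These parents all lie in a single row $R'$, by rules~1--5. A left child's right neighbor is its sibling (rule~4). A right child's right neighbor is the left child of its parent's right neighbor (rule~5).
\item Every node of $R'$ is a non-leaf with exactly two children, by rules~8, 6 and~1.
\item All these children lie in one row, namely $R$, by rules~4, 9 and~5. Hence $|R| = 2|R'|$.
\end{itemize}
Now suppose there were no root. Then every node has a parent, so by rule~7 every row has size at least~$2$. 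A row of minimum size would then have a parent row of half its size, a contradiction. With a root in hand, your top-down induction and closure argument finish the proof, so the fallback to the cited lemma is not needed.
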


\begin{proof}
	Let us assume that $G$ is perfectly tree-like and it has $\ell$ levels. We refer to the vertices of $G$ with the same coordinate-convention as the definition of perfect tree-like. The labeling is done as follows: Let node $u$ have the coordinate $(i,k)$.
	\begin{enumerate}[noitemsep]
		\item For node $v = (i,k+1)$ and $e = \{u,v\}$, set $L_u(e) = \Right$ and $L_v(e) = \Left$.
		\item For node $v = (i+1,2k)$, set $L_{u}(e) = \chl$ and $L_{v}(e) = \Parent$.
		\item For node $v = (i+1,2k+1)$, set $L_u(e) = \chr$ and $L_v(e) = \Parent$.
	\end{enumerate}
	It is not hard to check that the resultant labeling satisfies the constraints $\labels{C}{good}{tree}$.

	For the other direction, let us assume that the graph $G$ admits a $(\Vin_\tree,\Ein_\tree)$-labeling that satisfies $\CC_\goodtree$. We claim that the graph is perfect tree-like. The fact that the graph is tree-like was established in~\cite[Lemma~4.2]{balliu-ghaffari-etal-2025-shared-randomness-helps-with}. We claim that every level in this tree is fully populated. We proceed inductively. Let us assume that the tree has $\ell$ levels. For $\ell = 2$, rules~1 and~2 ensure that the tree is fully populated at each level. Assume this to be true for some value $\ell = k+1$ where $k\geq 1$.
	Let us assume that the layer $i$ is not full.
	Then there must exist a left-most node $(i,k)$ such that the tree is proper to left of it, and towards right the level is incomplete.
	If $(i,k)$ is a left child of its parent, then by rule~6 its parent must also have a right child which is the right sibling of $(i,k)$, contradicting the assumption that the row is broken right of $(i,k)$.
	If node $(i,k)$ is instead a right child of its parent, then its parent must have a right sibling too by rule 9, as otherwise the row would be complete.
	But then by rule 5 node $(i,k)$ also has a right sibling that is the left child of its parent, again contradicting the fact that the row is broken right of $(i,k)$.
\end{proof}

\paragraph{Bad perfect tree.}
The LCL problem $\Pi_\tree^\bad$ is defined by the tuple
\[
	\Pi_\tree^\bad =
(\Vin_\tree, \Ein_\tree,
 \Vout_\badtree, \Eout_\badtree,
 \CC_\badtree),
\]
where
$\Vout_\badtree =
\{\texttt{Error}, \bot\}
\cup
\{(\texttt{pointer},p) \mid p \in \{\Left,\Right,\chl,\chr\}\}$,
$\Eout_\badtree = \{\bot\}$.

\begin{framed}
\noindent\textbf{Constraints $\labels{C}{bad}{tree}$}
\begin{enumerate}
    \item
    A node $u$ may output $\texttt{Error}$ only if $u$
		violates at least one constraint of $\CC_\badtree$.

    \item
    If $u$ outputs $(\texttt{pointer},p)$,
    then there must exist an incident edge $e$ with $L_u(e)=p$.

    \item
    Let $v=f(u,p)$. Then $v$ must output either $\texttt{Error}$ or $(\texttt{pointer},p')$.
    In the latter case, the following must hold:
    \begin{enumerate}
        \item If $p \in \{\Left,\Right\}$, then $p' = p$.
        \item If $p = \Parent$, then $p' \in \{\Parent,\Left,\Right\}$.
        \item If $p = \chr$, then $p' \in \{\chr,\Left,\Right\}$.
    \end{enumerate}
\end{enumerate}
\end{framed}

We lift the following lemma stating that in perfect tree-like structures, the algorithm cannot convince a verifier that the input is invalid, from previous work:
\begin{lemma}[\cite{balliu-ghaffari-etal-2025-shared-randomness-helps-with}, Lemma~4.4]
	Let $G$ be a $(\Vin_\tree,\Ein_\tree)$-labeled input graph satisfying the constraints of  $\CC_\goodtree$. Then the only possible output for each node is $\bot$.
\end{lemma}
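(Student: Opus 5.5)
The plan is to show that on an input satisfying $\CC_\goodtree$, no node can legally output $\texttt{Error}$ or a pointer. I would then use the certification-LCL convention to conclude that every node is left with $\bot$ (the good-tree output). The argument has two steps: first rule out $\texttt{Error}$ everywhere, then show that any pointer output would generate an infinite pointer chain in a finite graph.

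\textbf{Step 1 (no $\texttt{Error}$).} By the first constraint of $\CC_\badtree$, a node may output $\texttt{Error}$ only if it witnesses a local violation of the tree constraints. I read the reference to $\CC_\badtree$ there as meaning $\CC_\goodtree$. Since $G$ satisfies $\CC_\goodtree$ at every node with input $\mathrm{T}$, no node has such a violation, so no node outputs $\texttt{Error}$.

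\textbf{Step 2 (no pointers).} Suppose some node $u_0$ outputs $(\texttt{pointer},p_0)$. By the second and third constraints of $\CC_\badtree$, the edge labeled $p_0$ exists. Its endpoint $u_1=f(u_0,p_0)$ must output $\texttt{Error}$ or a pointer, and by Step 1 it outputs a pointer $(\texttt{pointer},p_1)$ with $p_1$ constrained by $p_0$. Iterating gives an infinite walk $u_0,u_1,\dots$ in the finite graph. Because $G$ is a perfect tree-like structure (by the preceding theorem), every node has well-defined coordinates $(\ell_u,k_u)$. The allowed transitions control these coordinates:
\begin{itemize}[noitemsep]
\item $\Left$ and $\Right$ are followed only by the same direction, so $k$ moves strictly monotonically within a level.
\item $\chr$ (and symmetrically $\chl$, up to the asymmetry in the rules) is followed only by $\chr$ or horizontal moves, so the level never decreases.
\item $\Parent$ is followed only by $\Parent$ or horizontal moves, so the level never increases.
\end{itemize}
So I would define a potential: the lexicographic pair (phase, level, horizontal position). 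Along the walk the vertical direction is fixed once chosen. After the first horizontal step the walk is trapped moving strictly in one direction along a level. Either way the walk is strictly monotone in a well-founded finite order, and it must eventually hit a boundary where the required edge does not exist: the root has no $\Parent$, a leaf has no $\mathrm{Ch}$, and the end of a level has no $\Right$ or $\Left$. That node can then output neither a pointer nor $\texttt{Error}$, which is a contradiction.

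The main obstacle is making the case analysis on $(p,p')$ airtight, since the constraints as stated are asymmetric: $\chl$ is not listed, and item (b) refers to $\Parent$ even though item 2 only lists $\Left,\Right,\chl,\chr$ as pointer directions. I would follow \cite{balliu-ghaffari-etal-2025-shared-randomness-helps-with}, Lemma~4.4, and treat $\chl$ symmetrically to $\chr$ and $\Parent$ as an allowed pointer. I would then verify that a vertical move can never follow a horizontal one, so each chain is "vertical run, then horizontal run." This gives termination within $O(\ell + 2^{\ell})$ steps, which is the contradiction.
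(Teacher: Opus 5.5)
The paper does not prove this lemma itself; it imports it verbatim as Lemma~4.4 of \cite{balliu-ghaffari-etal-2025-shared-randomness-helps-with}. Your proposal gives a correct self-contained proof along the standard lines:
\begin{itemize}
\item No node may output $\texttt{Error}$, because no constraint is violated.
\item Every pointer must therefore be answered by another pointer, so a pointer chain never stops.
\item The transition rules force any chain to be a monotone vertical run followed by a monotone horizontal run, so it must stop. This is the contradiction.
\end{itemize}

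Compared with the bare citation, your route checks that the lemma survives the constraint list as printed here, which differs from the source in three ways. Rule~1 refers to $\CC_\badtree$ instead of $\CC_\goodtree$. The pointer set contains $\chl$ but not $\Parent$. Rule~3 says nothing about $\chl$.

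You do not actually need the assumption that $\chl$ behaves like $\chr$. Suppose $w$ points along $\chl$ to its child $c$ and $c$ points along $\Parent$. By rules~1 and~3 of $\CC_\goodtree$, the $\Parent$ edge of $c$ leads back to $w$. Then $w$'s label $(\texttt{pointer},\chl)$ violates rule~3(b), since $\chl\notin\{\Parent,\Left,\Right\}$. Hence a downward step is never followed by an upward one even under the literal rules. If $\Parent$ is not a pointer at all, the level is simply non-decreasing along the chain.

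One point you should state explicitly: the coordinates in your potential must be the ones induced by the labels. That means depth is counted by $\Parent$ steps, and $\Left$/$\Right$ edges move along a path within a single level. It is not enough to have some coordinates witnessing perfect tree-likeness. The converse direction of the preceding theorem (via Lemma~4.2 of the cited work) establishes exactly this, so this is a matter of citing it precisely rather than a gap.
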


The following lemma also exists in the literature, but the previous proof assumes that the algorithm knows an upper bound on $n$.
Here we prove that this works even without any knowledge on the size of the graph:
\begin{lemma}\label{tree_certification_algm}
	Let $G$ be a  $(\Vin_\tree,\Ein_\tree)$-labeled connected input graph. Then there exists an $O(\log n)$-round \local algorithm for $\Pi_\tree^\bad$ such that 
	\begin{enumerate}[noitemsep]
		\item If $G$ is perfect tree-like, then every node outputs $\bot$.
		\item Otherwise, every node produces output other than $\bot$.
	\end{enumerate}
\end{lemma}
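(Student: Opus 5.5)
We would design a ball-growing algorithm with no dependence on $n$. It exploits the fact that $O(\log n)$ rounds suffice to see the whole tree when the structure is correct, and that any error is reachable within $O(\log n)$ hops along the certificate pointers.

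Each node $v$ starts by checking the constant-radius constraints of $\CC_\goodtree$ at itself. If one is violated, $v$ outputs $\texttt{Error}$ at once, which rule~1 of $\CC_\badtree$ allows.

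Otherwise $v$ grows its view step by step. At radius $t$ it follows unique edge labels: $\Parent$ pointers to reach its candidate root, and from there $\chl/\chr$ pointers and $\Left/\Right$ walks along levels. If at radius $t$ it sees that every node within distance $t$ is locally valid and that the explored region closes up into a perfect tree-like structure, it outputs $\bot$. Closing up means the root has no parent, all leaves lie on the same level, and each level is complete with endpoints lacking $\Left$ or $\Right$ as appropriate. The earlier theorem characterising perfect tree-like structures via $\CC_\goodtree$ guarantees that this correctly identifies the good case. Since $G$ is connected, a perfect tree-like $G$ has diameter $O(\ell)=O(\log n)$, so $v$ sees all of $G$ and halts with $\bot$ within $O(\log n)$ rounds. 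Here $v$ does not need $n$: it stops as soon as the view is finite and closed, i.e.\ there are no dangling half-edges at the boundary. This covers item~1.

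In the bad case the view never closes into a perfect tree-like structure. We must show that some node with a locally detectable error lies within distance $O(\log n)$ of $v$, reachable along a path of the shape allowed by rule~3 of $\CC_\badtree$. The path first goes up via $\Parent$, then sideways via $\Left/\Right$ in one direction, then down via $\chr$ and sideways. Node $v$ then outputs the first pointer of the canonical such path, for example the lexicographically first shortest one, and consistency of the pointer chain follows. Each node along the path, computing the same canonical choice, points onward with the matching label, which is exactly what rule~3 requires.

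The key step, and the main obstacle, is the distance bound. The plan is to argue as in~\cite[Lemma~4.4 and surrounding]{balliu-ghaffari-etal-2025-shared-randomness-helps-with}, but without $N$.

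First, following $\Parent$ pointers from $v$ either reaches a root or reveals an error within $O(\log n)$ steps. The reason is that, as long as the local constraints hold, the ancestors of $v$ at increasing height have exponentially growing subtrees of distinct nodes (rules~4, 5, 6 and~9 force the sibling structure). So height $h$ forces at least $2^{h}$ nodes, giving $h\le\log n$.

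Second, once a locally consistent root $r$ is found, the same counting applies: if every node in the radius-$(2\log n+2)$ ball around $r$ were locally valid, that ball would already contain a complete tree. Any deviation from perfectness, such as an incomplete level or a leaf mismatch across rule~8, must therefore occur within that radius. The reason is that the completeness induction in the proof of the earlier theorem propagates locally from the leftmost defect.

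Getting the pointer directions to respect rule~3 requires some care; we would handle it by pointing to the nearest error via the canonical up-sideways-down route. This yields a non-$\bot$ output for every node within $O(\log n)$ rounds, which is item~2.
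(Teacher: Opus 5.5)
Your overall strategy matches the paper's. You grow the ball, output $\bot$ once the view is the whole (connected) graph and it is perfect tree-like, and otherwise show that a violating node lies within distance $O(\log n)$. The paper's stopping test is having seen both the leftmost and rightmost leaf; your ``no dangling half-edges'' test certifies the same thing more directly. Your counting argument for the distance bound is more convincing than the paper's, which only invokes the $O(\log n)$ diameter of a \emph{correct} perfect tree. The argument is: the $h$-th ancestor of $v$ has $2^h$ distinct depth-$h$ descendants in the radius-$2h$ ball as long as all nodes there are valid. Distinctness comes from uniqueness of the $\Parent$ half-edge, and completeness from rules~4--6 and~9 plus rule~8, which you should add to exclude premature leaves.

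The genuine gap is the pointer construction, i.e., showing that the non-$\bot$ outputs form a valid solution of $\Pi_\tree^\bad$. Your route (up via $\Parent$, sideways, then down via $\chr$ and sideways) is not admissible. By rule~3(a) of $\CC_\badtree$, a node reached by a $\Left$ (resp.\ $\Right$) pointer must output $\texttt{Error}$ or point $\Left$ (resp.\ $\Right$) again, so a chain can never turn downward after a horizontal step. By rule~3(b), a $\Parent$ step cannot be followed by $\chr$ either. So the short routes through the tree are unavailable, and admissible chains can have length $\Theta(n)$ even when the nearest error is at graph distance $O(\log n)$. An example is the leftmost leaf of a perfect tree whose only violations sit at the right end of the two bottom levels.

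Your consistency argument also fails. Having the next node $w$ continue along the same canonical route relies on suffixes of optimal routes being optimal. That holds for unrestricted shortest paths in $G$ but not under shape restrictions. Node $w$ computes its own route from scratch, and nothing forces it to begin with a label admissible after the incoming one (e.g., $v$ points $\Parent$ into $w$ while $w$'s own best route begins with $\chr$). Still less is it forced to be admissible simultaneously for pointers entering $w$ from a child, from its parent, and from a horizontal neighbour.

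What is needed is a rule, evaluated on the $O(\log n)$-view, that is closed under the continuations of rule~3. One candidate is the priority ``point horizontally toward the nearest error on your own level; else point $\Parent$ if an error is reachable by going up and then sideways; else point $\chr$''. You would then have to prove that this rule is locally computable and never strands a node in a bad instance (such as a leaf forced to point $\chr$). The paper's own proof is equally terse here, saying only that the pointers follow the allowed transition rules, but your explicit construction contradicts rule~3 as stated.
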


\begin{proof}
	Here is a \local certification algorithm which runs in $O(\log n)$ rounds.

	\paragraph{Certification algorithm.} Each node $v$ maintains the explored ball $N_G^r(v)$ after round $r$. In round $r$, for every newly discovered node $x\in N_G^r(v)\setminus N_G^{r-1}(v)$, node $v$ locally checks whether $x$ violates any constraint of $\CC_\goodtree$. If such a node is found, then $v$ starts outputting either $\texttt{Error}$ (if $v$ itself violates
a good-constraint) or a pointer $(\texttt{pointer},p)$ toward an error witness, following the allowed transition rules of $\CC_\badtree$. If no violation has been found so far, $v$ continues expanding until it can identify both the leftmost and the rightmost leaf of the (candidate) tree-like structure; in that case it outputs~$\bot$.

The correctness of the algorithm follows from the following two properties.

\begin{enumerate}
	\item If there is a node $u$ which doesn't satisfy the constraints $\CC_\goodtree$, then there is one at a  distance at most $O(\log n)$ from $v$.
	\item If node $v$ has seen both the leftmost and the rightmost leaf node in the tree (which are locally recognizable) without encountering any error node, then it has explored the whole graph $G$.
\end{enumerate}

	The first item follows from the fact that the perfect tree has all nodes with degree at least 3 (except the root node) and therefore, has diameter $O(\log n)$. Therefore, either node $v$ encounters the error node within distance $O(\log n)$ or it exhausts all nodes of $G$.

	For the second item, w.l.o.g., let us assume that the rightmost leaf node ($y$) is the farthest from $v$ among the two extreme leaf nodes. For every other node $x$ in $G$, $\dist_G(x,v) \leq \dist_G(x,r) +\dist_G(r,v)\leq \dist(v,r) + \dist(r,y) $ where $r$ is the root of the perfect tree. The inequality follows from the fact that $y$ is farthest from $r$ among all other nodes of $G$ (as it is a leaf node). Therefore, if $v$ encounters $y$ (without encountering any node with errors), then it has essentially explored the whole $G$ without finding any node which violate the constraints $\CC_\goodtree$ and therefore, can safely output $\bot$. Since the diameter of $G$ is at most $O(\log n)$, the certification algorithm takes at most $O(\log n)$ rounds.
\end{proof}

\subsubsection{Growing Grids}
Next, we show how we can use perfect trees to construct and certify a growing grid with the following properties:
\begin{itemize}[noitemsep]
    \item each node has a globally consistent orientation,
	\item each row is exactly one longer than the row to the north of it, and
	\item the distance between any two nodes on the same row is at most $O(\log n)$ using an attached perfect tree.
\end{itemize}

We certify this structure in two parts.
First, we certify that each row of the grid is proper and has an attached tree.
We then prove that these rows are attached together to form a growing grid.
\begin{definition}[Growing grid]
    \label{def:growing-grid}
	A graph $G = (V,E)$ is a growing grid of dimension $h$ and $\ell$ for $h,\ell\geq 1$ with vertex and edge sets defined as follows.
	\begin{enumerate}[noitemsep]
		\item $V = \{(i,j)\ \mid\ i\in [h], j\in [\ell+i-1]\}$
		\item $E_{\text{rows}} = \bigl\{\{(i,j),(i,j+1)\}\ \mid\ i\in [h], j\in [\ell+i-1]\bigr\}$
		\item $E_{\text{col}} = \bigl\{\{(i,j),(i+1,j)\}\ \mid\ i\in[h-1], j\in [\ell+i-1]\bigr\}$
		\item $E = E_{\text{rows}} \cup E_{\text{col}}$
        \end{enumerate}
\end{definition}

\noindent
\paragraph{LCL for rows.} $\prob{rows}{} = (\prob{rows}{good}, \prob{rows}{bad})$

\paragraph{LCL for good rows.}

\[
	\prob{rows}{good} = (\labels{V}{in}{row},\labels{E}{in}{row}, \labels{V}{out}{goodrow},\labels{E}{out}{goodrow}, \labels{C}{rows}{good}) 
\]
where 
$\labels{V}{in}{good} = \{\mathrm{T},\mathrm{G}\}$
$\labels{E}{in}{good} = \{\Parent, \ch, \east,\west\}$
$\labels{V}{out}{good} = \{\bot\}$
$\labels{E}{out}{good} = \{\bot\}$

If a node $v$ with label $\mathrm{G}$ has a node $u = f(u, \Parent)$ with label $\mathrm{T}$, we call node $u$ the \emph{parent of $v$}.
This way we treat the rows almost like they were layers of the tree, just that their width is not a power of two.

\begin{framed}
	\noindent\textbf{Constraints $\labels{C}{goodrows}{}$}

	\noindent
	Nodes which output $\bot$ must satisfy the following constraints.
    \begin{enumerate}
	    \item We assume that all nodes labeled $\mathrm{T}$ satisfy $\labels{C}{tree}{good}$. 
	
	    \item Every node labeled $G$ must have at most one half-edge labeled $\east$ and at most one half-edge labeled $\west$ and exactly one half-edge labeled $\Parent$. 
	    
	    \item Every node labeled $T$ can have at most one half-edge labeled $\ch$.
		  
	    \item \textbf{Every grid node is adjacent to some tree leaf node.}\label{rule:grid_adj_tree}
		
		    A node $u$ with label $\mathrm{G}$ must have an adjacent edge $e = \{u, v\}$ with $L_u(e) = \Parent$, $L_v(e) = \ch$, and node $v$ must have label $\mathrm{T}$. Moreover, $v$ must not have any half-edge incident on it labeled $\chl$ or $\chr$.
	    \item \textbf{A grid node adjacent to a tree node having a right neighbor must have an east neighbor. Moreover the grid node should participate in a $\mathbf 4$-cycle or a $\mathbf 5$-cycle with the adjacent tree.}\label{rule:left-east}
		
		A node $u$ with label $\mathrm{G}$ and a neighbor $v$ with an edge $e$ labeled with $L_v(e) = \Right$ must have an edge labeled with $\east$.
        Moreover, either $f(u, \Parent, \Right, \ch, \west) = u$ or $f(u, \Parent, \Right, \Right, \ch, \west) = u$.
		    
	    \item A node $u$ with label $\mathrm{G}$ has a neighboring edge with label $\west$ iff its parent has a neighboring edge with label $\Left$.
        \item A node $u$ with label $\mathrm{G}$ has a neighboring edge with label $\east$ iff its parent has a neighboring edge with label $\Right$.
    \end{enumerate}
\end{framed}

\paragraph{LCL for bad rows.}
\[
	\prob{rows}{bad} = (\labels{V}{in}{row},\labels{E}{in}{row}, \labels{V}{out}{badrow},\labels{E}{out}{badrow}, \labels{C}{rows}{bad}) 
\]

where 
$\labels{V}{out}{bad} = \{\texttt{error}\}\cup \{(\texttt{pointer},p)\ \mid\ p\in\{\east, \west\}\}$

\begin{framed}
	\noindent\textbf{Constraints $\labels{C}{badrows}{}$}
	\begin{enumerate}
		\item A node $u$ outputs \texttt{error} only if $u$ violates at least one constraint of $\labels{C}{badrows}{}$.
	        \item If $u$ outputs $(\texttt{pointer},p)$, then there must exist an incident edge $e$ with $L_u(e)=p$.
		\item
    Let $v=f(u,p)$. Then $v$ must output either $\texttt{Error}$ or $(\texttt{pointer},p')$.
    In the latter case, the following must hold: if $p \in \{\east,\west\}$, then $p' = p$.
    \end{enumerate}
\end{framed}

\begin{lemma}\label{lem:log-diameter-of-grid-tree}
	Let $G$ be a $(\labels{V}{tree}{in}\times \labels{V}{rows}{in}, \labels{E}{tree}{in}\times \labels{E}{rows}{in})$-labeled connected graph which satisfies the constraints of $\Pi_\tree\otimes \Pi_\row$. Then, it has diameter at most $O(\log n)$. 	
\end{lemma}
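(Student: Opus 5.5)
The plan is to show that such a $G$ consists of a single perfect tree-like structure together with one row of $\mathrm{G}$-nodes hanging below its leaves. Both pieces have diameter $O(\log n)$, and every row node is within $O(1)$ of a leaf. I will interpret ``satisfies the constraints of $\Pi_\tree\otimes\Pi_\row$'' as: all nodes output good labels in both layers, so $\CC_\goodtree$ holds at every $\mathrm{T}$-node and $\labels{C}{goodrows}{}$ holds at every $\mathrm{G}$-node.

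First I handle the $\mathrm{T}$-part. Let $G_T$ be the subgraph induced by $\mathrm{T}$-nodes; each connected component of $G_T$ satisfies $\CC_\goodtree$. By the earlier theorem, each such component is a perfect tree-like structure. A perfect tree of height $\ell$ has $2^\ell-1$ nodes, and any two nodes are joined through the root by a path of length at most $2(\ell-1)$. So each component has diameter $O(\log n)$.

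Next I handle the $\mathrm{G}$-part. By rule~2 of $\labels{C}{goodrows}{}$, each $\mathrm{G}$-node has exactly one $\Parent$ half-edge. By rule~\ref{rule:grid_adj_tree}, that edge leads to a leaf $v$ of some tree component. By rule~3, each leaf has at most one $\ch$ half-edge, so the parent map is injective from $\mathrm{G}$-nodes to leaves. Now consider the $\east$/$\west$ edges, which form paths (degree at most one in each direction by rule~2).

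The key step is to show that following $\east$ from $u$ reaches a node $u'$ whose parent lies in the same tree component. Rule~\ref{rule:left-east} gives this: $f(u,\Parent,\Right,\ch,\west)=u$ or $f(u,\Parent,\Right,\Right,\ch,\west)=u$. Hence the east neighbor of $u$ has as parent the leaf one or two steps to the $\Right$ of $u$'s parent, and in particular a leaf of the same tree. Rules~6 and~7 then guarantee that a $\mathrm{G}$-node has an $\east$ (resp.\ $\west$) edge exactly when its parent is not the rightmost (resp.\ leftmost) leaf. So each maximal row path has all its parents in one tree component. I must also check that the edges are labeled consistently, i.e.\ that an edge with $L_u=\east$ is the same edge as one with $L_{u'}=\west$. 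This follows from the $\west$-closure in the cycle condition.

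Then I glue everything into a diameter bound. Every $\mathrm{G}$-node is at distance $1$ from a leaf of some tree. Any $\mathrm{T}$–$\mathrm{T}$ edge joins nodes in the same component, by definition of component. The remaining issue is edges between different trees, and this is the step I expect to be the main obstacle. A $\mathrm{G}$-node's incident edges are only $\Parent$, $\east$ and $\west$, and by the previous step $\east$/$\west$ neighbors have parents in the same tree. A leaf's extra edge is only the $\ch$ edge, whose other endpoint has its parent in that same tree.

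I must additionally rule out stray unlabeled or misused edges. This means invoking rule~1 of $\CC_\goodtree$ (distinct labels on incident edges) together with the finite label set $\Ein$, so that every edge carries one of the listed labels and hence is of one of the types just analyzed. Given that, connectivity of $G$ forces a single tree component $\mathcal{T}$, with all $\mathrm{G}$-nodes attached to it.

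Finally, for any $x,y$ in $G$ I take $d(x,y)\le 1+\operatorname{diam}(\mathcal{T})+1=O(\log |\mathcal{T}|)=O(\log n)$.
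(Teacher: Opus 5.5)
Your proposal follows essentially the same route as the paper. Each $\mathrm{T}$-component is a perfect tree of logarithmic diameter, and the $\mathrm{G}$-nodes hang off its leaves as a single row. The row rules then forbid edges between different tree--row units: the paper uses the rules tying a $\mathrm{G}$-node's $\east$/$\west$ edges to its parent's $\Right$/$\Left$ edges at the row endpoints, while you use the cycle condition of rule~\ref{rule:left-east} on every $\east$/$\west$ edge. So connectivity leaves one tree and the diameter is $2+O(\log n)$. One small fix: read the fact that a $\mathrm{G}$-node has no half-edges besides $\Parent$, $\east$, $\west$ off rule~2 of $\labels{C}{goodrows}{}$, as the paper implicitly does, rather than rule~1 of $\CC_\goodtree$, which only constrains $\mathrm{T}$-nodes.
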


\begin{proof}
	As the nodes labeled $\mathrm{T}$ satisfy the constraints of $\CC_\goodtree$, we have that each graph induced on the tree nodes forms a perfect tree-like structure. We claim that there is exactly one tree-like structure in $G$. Let us assume that there are two sets of nodes labeled $\mathrm{T}$ such that they induce two disjoint trees $T_1$ and $T_2$. Rule~\ref{rule:left-east} of $\CC_\goodrow$ ensures that the nodes labeled $\mathrm{G}$ which are connected to $T_1$ and $T_2$ respectively form paths (let the paths be $P_1$ and $P_2$ respectively). Moreover, the endpoints of $P_1$ must be adjacent to the leftmost and rightmost leaf nodes of $T_1$ (and same for $T_2$ and $P_2$). Since we assume that $G$ is connected and $T_1$ and $T_2$ induce disjoint graphs, the only possible connection between $T_1\cup P_1$ and $T_2\cup P_2$ is via the endpoints of $P_1$ and $P_2$ (otherwise rule~2 or rule~4 is violated). But then rule~5 or rule~6 is violated (depending on whether the  edge is labeled $\Left$ or $\Right$ on the nodes that it is incident on) which contradicts our assumption. As the nodes of $\mathrm{T}$ are a part of the same tree-like structure and every node labeled $G$ is connected to a node labeled $T$, the diameter of $G$ is $O(\log n)$.          
\end{proof}

\begin{lemma}\label{alg:certify_rows}
	Let $G$ be a $(\labels{V}{tree}{in}\times \labels{V}{rows}{in}, \labels{E}{tree}{in}\times \labels{E}{rows}{in})$-labeled graph. There exists an $O(\log n)$-round \local algorithm solving $\prob{}{tree}\otimes \prob{}{row}$ such that 
	\begin{enumerate}
		\item If all the tree nodes satisfy $\labels{C}{goodtree}{}$ then they output $\bot$. Else every tree node produces  output other than $\bot$.
		\item If all grid nodes satisfy $\labels{C}{goodrows}{}$ (which implicitly conditions on the fact that every tree node outputs $\bot$), then they output $\bot$. Else every grid node produces output other than $\bot$.
	\end{enumerate}
\end{lemma}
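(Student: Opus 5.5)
We run two phases in sequence: first the tree-certification algorithm of \cref{tree_certification_algm} on the nodes labeled $\mathrm{T}$, then a ball-growing certification for the nodes labeled $\mathrm{G}$ in the same style. Both phases work without any knowledge of $n$; each node decides when to stop purely from its local view.

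\textbf{Phase 1 (tree nodes).} Each tree node $v$ grows its ball, looking only at $\mathrm{T}$-labeled nodes of its component. It does two things:
\begin{itemize}[noitemsep]
  \item If it finds a node violating $\CC_\goodtree$, it outputs $\texttt{Error}$ or a pointer along a shortest route to that witness, using the pointer transitions allowed by $\CC_\badtree$.
  \item If it sees both the leftmost and the rightmost leaf of its candidate tree with no violation, it outputs $\bot$.
\end{itemize}
By \cref{tree_certification_algm} this takes $O(\log n)$ rounds, and the two outcomes are mutually exclusive within a $\mathrm{T}$-component. This gives item~1 of the statement.

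\textbf{Phase 2 (grid nodes).} Each grid node $u$ first uses its $\Parent$ edge to reach its parent leaf (rule~\ref{rule:grid_adj_tree}); if the parent is missing, $u$ is itself an error. Once the tree has certified itself, $u$ grows its ball through that tree. After $O(\log n)$ rounds it has seen the whole perfect tree and, through the $\ch$ edges of the leaves, the whole row attached to it. It then checks every constraint of $\labels{C}{goodrows}{}$ on all grid nodes attached to this tree.

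\textbf{Output rule.} If $u$ finds no violation, it outputs $\bot$. Otherwise it orients along the path of grid nodes. If a violating node lies in the $\east$ direction (respectively $\west$), $u$ outputs $(\texttt{pointer},\east)$ (respectively $\west$); a violating node itself outputs $\texttt{error}$. The pointer chain is valid because each node on the path toward the witness makes the same choice. To keep this consistent, fix one rule: point toward the nearest violation, breaking ties toward $\east$. Grid nodes whose tree is bad, or that are not cleanly attached, are either errors themselves or point along $\east$/$\west$ to one. They are then handled by the same rule, since the offending node is within distance $O(\log n)$ via the tree or is adjacent.

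\textbf{Main obstacle: the running-time bound without knowing $n$.} Correctness of the local checks is routine; the hard part is showing that every grid node finishes within $O(\log n)$ rounds even though it cannot know $n$.
\begin{itemize}[noitemsep]
  \item If the tree is good and the row is good, \cref{lem:log-diameter-of-grid-tree} gives diameter $O(\log n)$, so $u$ sees everything in $O(\log n)$ rounds.
  \item If the row is bad, the row may be a long path whose attachment to the tree is broken. Here I would argue as follows. A grid node that stays correctly attached, with 4- or 5-cycles to consecutive leaves via rule~\ref{rule:left-east}, reaches every attached grid node within $O(\log n)$ through the tree. So the first grid node violating the rules along the $\east$/$\west$ path is within $O(\log n)$ hops via tree-plus-cycle routes.
  \item Nodes beyond a broken attachment see the violation at distance $1$ locally.
\end{itemize}
So the nearest witness is always within $O(\log n)$, and stopping is justified locally as in \cref{tree_certification_algm}: stop once a witness is seen, or once both row endpoints (locally recognizable via rules 6–7) have been reached.
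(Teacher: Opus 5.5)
Your proposal follows essentially the same route as the paper's proof. Tree nodes run the algorithm of \cref{tree_certification_algm}; each grid node checks its parent leaf, waits for the tree's verdict, and then grows its ball through the attached tree until it finds a violating row node or both row endpoints, with the $O(\log n)$ bound coming from the logarithmic depth of the good tree. Your explicit ``nearest violation along the row, ties toward $\east$'' pointer rule is a detail the paper leaves implicit. For it to give consistent pointers, each node must commit only after seeing the whole tree and row (as your Phase~2 does), not at the first witness its ball reaches, because graph distance through the tree and path distance along the row need not rank witnesses the same way.
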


\begin{proof}
	The algorithm proceeds as follows at a node $v\in V(G)$. If $v$ is labeled $\mathrm{T}$, then the algorithm executes exactly as in Lemma~\ref{tree_certification_algm}. Else if the node $v$ is labeled $\mathrm{G}$, then by rule~\ref{rule:grid_adj_tree}, it must be adjacent to a node labeled $\mathrm{T}$ (else it outputs \texttt{error}). It keeps querying its tree neighbor to check its output. If the tree-node  outputs $\bot$, then the tree is perfect-treelike. It then continues to grow its neighborhood till it either encounters a row-node which doesn't satisfy the constraints $\CC_\goodrow$ or it encounters the eastmost and the westmost node of the row. Since every grid node is adjacent to a tree-node of the same tree (refer to proof of Lemma~\ref{lem:log-diameter-of-grid-tree}), by exactly the same arguments as in Lemma~\ref{tree_certification_algm}, we have that in $O(\log n)$ rounds, either $v$ encounters an error grid node (and it outputs $(\texttt{pointer},p)$) following the convention of $\CC_\goodrow$ or it encounters  both the extreme nodes of the row, which then implies that it has explored the whole row which is free of erroneous nodes and therefore, it outputs~$\bot$.    
\end{proof}

\noindent
\paragraph{LCL for grids.}

Formally, the LCL for growing grid (with tree-like structure attached on each row) is defined as $\Pi_\grid=(\Pi_\grid^\good,\Pi_\grid^\bad)$. 
\paragraph{Good grid.}
\[
	\prob{grid}{good} =(\labels{V}{grid}{in}, \labels{E}{grid}{in},\labels{V}{goodgrid}{out},\labels{E}{goodgrid}{out}, \labels{C}{goodgrid}{})
\]
where $\labels{V}{in}{grid} = \{G\}$,
$\Ein_\grid = \{\west,\east,\north, \south\}\cup \labels{E}{in}{tree}$,
$\Vout_\grid = \{\bot\}$,
$\Eout_\grid = \{\bot\}$

\begin{framed}
	\noindent\textbf{Constraints $\CC_\goodgrid$}
	
    \noindent
    Nodes with output label $\bot$ must follow the following constraints.
    \begin{enumerate}
	 \item We assume that all nodes labeled $T$ satisfy the $\CC_\goodtree$ and all nodes labeled $G$ satisfy $\CC_\goodrow$. 
	 \item \textbf{Every grid node, except the right-most nodes, must have both $\north$ and $\south$ neighbors}.
        
		 If node $u$ has a neighboring edge $e = \{u, v\}$ with label $L_u(e) = \east$, then it must have a neighboring edge with label $\south$ exactly if $v$ has a neighboring edge with label $\south$.
        This ensures that the south-most row of the grid is consistent.
	\item If node $v$ has an adjacent half-edge labeled $\east$, then $f(v, \east, \south, \west, \north) = v$.\label{rule:cycle}
	\item If node $v$ doesn't have an adjacent half-edge labeled $\east$, then it doesn't have an adjacent half-edge labeled with $\north$.
	
		This ensures that the graph looks like a grid and expands by at least one each row.
    \end{enumerate}
\end{framed}

\paragraph{Bad grid.} The LCL $\Pi^{\badgrid}$ is defined as follows.

\[
	\Pi^{\badgrid} = (\Vin_\badgrid,\Ein_\badgrid, \Vout_\badgrid, \Eout_\badgrid,\CC_\badgrid)
\]

where  $\Eout_\badgrid = \{\bot\}$ and

$\Vout_\grid = \Vout_\badtree\cup \{(\texttt{pointer},p)\ \mid\ p\in \{\east,\west,\Parent,\ch\}\}$

\begin{framed}
	\noindent\textbf{Constraints $\CC_\badgrid$}
	\begin{enumerate}
		\item A node $u$ outputs \texttt{error} only if $u$ violates at least one constraint of $\labels{C}{badrows}{}$.
	        \item If $u$ outputs $(\texttt{pointer},p)$, then there must exist an incident edge $e$ with $L_u(e)=p$.
		\item
    Let $v=f(u,p)$. Then $v$ must output either $\texttt{Error}$ or $(\texttt{pointer},p')$.
    In the latter case, the following must hold: if $p \in \{\east,\west\}$, then $p' = p$.
    \end{enumerate}
\end{framed}

\begin{lemma}
    \label{lem:grid-algorithm}
	There exists an $O(\log n)$-round labeling algorithm which labels an $n$-vertex graph $G$ such that the following holds.
	\begin{enumerate}
		\item The algorithm solves $\Pi_\tree\otimes\Pi_\row\otimes\Pi_\grid$
		\item Nodes satisfying $\Pi_\goodgrid$ form a valid growing grid.
	\end{enumerate}
\end{lemma}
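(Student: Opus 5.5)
We will prove \cref{lem:grid-algorithm} by extending the algorithm of \cref{alg:certify_rows} with one further certification layer, and then arguing separately that the constraints $\CC_\goodgrid$ force the growing-grid shape.

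\textbf{The algorithm.} Every node first runs the $O(\log n)$-round algorithm of \cref{alg:certify_rows}, which settles the $\Pi_\tree\otimes\Pi_\row$ components. A grid node whose row certificate is $\bot$ then checks the grid constraints $\CC_\goodgrid$ at itself. These are of constant radius: they are the $\south$-consistency along $\east$ edges, the $4$-cycle $f(v,\east,\south,\west,\north)=v$, and the rule that a node without an $\east$ edge has no $\north$ edge. If the node violates one of them, it outputs \texttt{error}. Otherwise it keeps growing its ball. It outputs $\bot$ if it has seen the whole connected structure without meeting a violation, and it outputs a pointer towards a violation it does find, with pointers following the rules of $\CC_\badgrid$. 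This mirrors \cref{tree_certification_algm}.

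The delicate point is the running time. A good growing grid with $h$ rows has diameter $\Theta(h)$, which can be as large as $\Theta(\sqrt n)$. So "explore everything" would not give $O(\log n)$ rounds. I would therefore make the checks row-local:
\begin{itemize}
\item A node inspects only its own row, using the attached tree. By \cref{lem:log-diameter-of-grid-tree}, a row together with its tree has diameter $O(\log n)$.
\item It also inspects the two neighbouring rows, reached via the $\north$/$\south$ edges, each again of diameter $O(\log n)$.
\end{itemize}
Errors are then witnessed within row-plus-adjacent-row distance. Pointers travel along $\east$/$\west$ inside a row, or along $\Parent$/$\ch$ through the tree, to reach an erroneous node in the same row or an adjacent row. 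This keeps every pointer chain $O(\log n)$ long and consistent with the pointer rules of $\CC_\badgrid$.

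\textbf{Correctness.} Nodes on perfect-grid instances never see violations, so they output $\bot$. All other outputs are locally checked pointers that terminate in an \texttt{error}, so the composed constraints of $\Pi_\tree\otimes\Pi_\row\otimes\Pi_\grid$ are satisfied. Soundness of \texttt{error} labels is by construction, since a node outputs \texttt{error} only when it genuinely violates a constraint.

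\textbf{Structure.} For the second claim, I would take a connected component of nodes outputting $\bot$ and assign coordinates. Fix a row; by the row lemma it is an oriented path, so its nodes get $j=1,\dots,\ell_i$ from west to east. Its row index $i$ is determined by following $\north$ edges upward until a row with no $\north$ edges is reached; that row gets $i=1$. The steps are then:
\begin{enumerate}
\item Show by induction along $\east$ edges that the $4$-cycle rule maps the $j$-th node of row $i$ to the $j$-th node of row $i+1$ via $\south$. The $\south$-consistency rule ensures that either all nodes of a row except the easternmost have south neighbours, or none do.
\item Show that the easternmost node of row $i$ has no $\north$ edge while its predecessor does. Combined with the square rule, this gives $\ell_{i+1}\ge \ell_i+1$.
\item Show the matching upper bound $\ell_{i+1}\le\ell_i+1$. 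The node $(i+1,\ell_i+1)$ has no $\north$ edge, so by the rules it has no $\east$ edge, and it is therefore the end of its row.
\end{enumerate}
I expect the main obstacle to be this last step together with the running-time bound. The constraints as written need to forbid a southern row that overshoots by more than one node. If they do not, I would add or strengthen a local rule, for example requiring that a node with a $\west$ neighbour having a $\north$ edge but with no $\north$ edge itself has no $\east$ edge. The proof would then go through with that rule included.
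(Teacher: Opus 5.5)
The gap is your step~3, together with the alignment silently used in step~1. Rule~4 of $\CC_\goodgrid$ says only ``no $\east$ edge $\Rightarrow$ no $\north$ edge''. Its contrapositive gives growth by \emph{at least} one, which is all the paper's own comment on rule~4 claims. Step~3 instead uses the converse ``no $\north$ edge $\Rightarrow$ no $\east$ edge'', and that is not a constraint. Adding a rule changes the LCL, so it cannot prove the lemma as stated. The rule you propose would not suffice anyway, for two reasons.

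(a) The node $(i+1,\ell_i+1)$ may well have a $\north$ edge. Let $(i+1,\ell_i+1),\dots,(i+1,\ell_i+K)$ have $\north$ edges to the nodes of a second row $R'$ of length $K$. Give $R'$ its own tree, no $\north$ edges, and a westmost node without a $\west$ edge. All squares $f(u,\east,\south,\west,\north)=u$ at the nodes of $R'$ close, and rule~2 holds along $R'$. Rule~4 then forces a node after $(i+1,\ell_i+K)$, so row $i+1$ is $K+1$ longer than row $i$. Every rule of $\CC_\goodrow$ and $\CC_\goodgrid$ holds at these rows, and your extra rule never fires because these nodes do have $\north$ edges.

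(b) Row $i+1$ may also extend west of the node below $(i,1)$ by nodes without $\north$ edges, with the rows further south extending west accordingly. Rules~2 and~3 are satisfied, rule~4 only constrains nodes without an $\east$ edge, and your rule only fires at a node whose $\west$ neighbour has a $\north$ edge. So the ``$j$-th node to $j$-th node'' claim of step~1 is unjustified as well.

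A smaller point on step~2: the relevant fact is not that $(i,\ell_i)$ lacks a $\north$ edge. It is that the square at $(i,\ell_i-1)$ gives the node below $(i,\ell_i)$ a $\north$ edge, hence an $\east$ edge by rule~4.

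Ruling out (a) and (b) needs constraints in the other direction. One option is to require that every node with an $\east$ edge has a $\north$ edge, as the heading of rule~2 suggests; this also forces a single-node top row, as in the instances $G_k$. Add to that a reverse square $f(u,\west,\north,\east,\south)=u$ at every node having both a $\west$ and a $\north$ edge. Be aware that the paper's own proof does not close this either. Its coordinate induction uses ``if $j<\ell+i-1$, then $(i,j)$ is not right-most'', which presupposes the row lengths it is meant to establish. So with $\CC_\goodgrid$ as written, item~2 needs a strengthened rule set in either write-up.

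On the algorithmic side your plan coincides with the paper's: grow only within your own row and its tree, and stop once both row ends or a violation have been seen. Your remark that exploring the whole component would cost $\Theta(\sqrt n)$ is exactly why the algorithm must be row-local. One correction: pointers cannot lead to an error in an adjacent row. $\CC_\badgrid$ offers only $\east,\west,\Parent,\ch$, and none of these leaves the layer. This costs nothing, because every rule of $\CC_\goodgrid$ is evaluated at a node and reaches only $O(1)$ steps into the row below. Each violation therefore sits at a node of the row being checked, and a row should be marked bad only because of its own violating nodes.
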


\begin{proof}
	The algorithm proceeds exactly as in Lemma~\ref{alg:certify_rows}. The algorithm proceeds as follows at a node $v\in V(G)$. If $v$ is labeled $\mathrm{T}$, then the algorithm executes Lemma~\ref{tree_certification_algm}. Else if the node $v$ is labeled $\mathrm{G}$, then it must be adjacent to a node labeled $\mathrm{T}$ (else it outputs \texttt{error}). It keeps querying its tree neighbor to check its output. If the tree-node  outputs $\bot$, then the tree is perfect-treelike. It then continues to grow its neighborhood till it either encounters a row-node which doesn't satisfy the constraints $\CC_\goodrow$ or $\CC_\goodgrid$ or it encounters the eastmost and the westmost node of the row. The fact that the algorithm terminates in $O(\log n)$ rounds follows directly from Lemma~\ref{alg:certify_rows}.  Every tree node accepts $\bot$ if and only if it satisfies $\CC_{\goodgrid}$ otherwise it produces an output satisfying the convention of $\CC_\badtree$. Every grid node outputs $\bot$ if and only if it satisfies the constraints of $\CC_\goodgrid$ and $\CC_\goodrow$. Otherwise it produces an output following the convention of $\CC_\badrow$ or $\CC_\badgrid$ (depending on which set of constraints are violated). Hence the algorithm solves $\Pi_\tree\otimes\Pi_\row\otimes\Pi_\grid$.

To prove the second item, let us assume that the set of grid nodes satisfy $\Pi_\goodgrid$. W.l.o.g., let us assume that they induce a connected graph. We start by labeling each node with a label from $\mathbb{N}\times \mathbb{N}$ inductively as follows:

	\begin{enumerate}
		\item The node which doesn't have a north or west neighbor is labeled $(1,1)$.
		\item For every node $u$ labeled $(i,j)$, if it satisfies rule~\ref{rule:cycle}, then for nodes $v_1,\ v_2,\ v_3$, such that $f(u,\east) = v_1$,$f(u,\east,\south) = v_2$ and $f(u,\east, \south, \west) = v_3$, the labeling is as follows: $v_1$ is labeled $(i,j+1)$, $v_2$ is labeled $(i+1,j+1)$ and  $v_3$ is labeled $(i+1,j)$.
	\end{enumerate}

Let the maximum value for the first and the second arguments in the tuple $(i,j)$ in the above labeling be $h$ and $\ell$ respectively.

We claim that for every $i\in [h]$ and $j\in [\ell+i-1]$, the following holds:
\begin{enumerate}[noitemsep]
        \item if $j<\ell+i-1$, then $\{(i,j),(i,j+1)\}\in E(G)$;
        \item if $i<h$, then $\{(i,j),(i+1,j)\}\in E(G)$.
\end{enumerate}

	We prove this via a double induction on $(i,j)$.

	\begin{enumerate}
		\item \textbf{Base $(1,1)$.} If $\ell>1$, then $(1,1)$ is not the right-most node of its row, hence it has an $\east$-neighbor by $\CC_\goodrow$, and therefore $\{(1,1),(1,2)\}\in E(G)$.
If $h>1$ and $\ell>1$, then $(1,1)$ is not right-most, so by rule~2 of $\CC_\goodgrid$ it has a $\south$-neighbor, and hence $\{(1,1),(2,1)\}\in E(G)$.

 	 \item \textbf{Inductive step.}
Assume the statement holds for all $(i',j')\prec(i,j)$.

\begin{enumerate}[noitemsep]
        \item If $j<\ell+i-1$, then $(i,j)$ is not right-most, hence it has an $\east$-half-edge.
        Applying rule~\ref{rule:cycle} at $(i,j)$ yields $f((i,j),\east)=(i,j+1)$, and therefore
        $\{(i,j),(i,j+1)\}\in E(G)$.

        \item If $i<h$, we show that $\{(i,j),(i+1,j)\}\in E(G)$.
        If $j<\ell+i-1$, then $(i,j)$ is not right-most, and by rule~2 of $\CC_\goodgrid$ it has a $\south$-neighbor;
        hence $\{(i,j),(i+1,j)\}\in E(G)$.
        Otherwise $j=\ell+i-1$ (the right-most node). Then $(i,j-1)\prec(i,j)$ and $j-1<\ell+i-1$, so by the inductive
        hypothesis $\{(i,j-1),(i,j)\}\in E(G)$ and thus $(i,j-1)$ has an $\east$-half-edge.
        Applying rule~\ref{rule:cycle} at $(i,j-1)$ forces the completing $4$-cycle, and in particular yields the vertical
        edge $\{(i,j),(i+1,j)\}$.
\end{enumerate}
	\end{enumerate}

	Therefore, the induced subgraph on the grid nodes contains exactly the edges in $E_{\text{rows}}\cup E_{\text{col}}$ that are contained in a growing grid of dimension $(h,\ell)$. 
\end{proof}

\subsubsection{Turing Machine Execution}

We now move on to defining a Turing machine on the growing grid.
We start by fixing our definition of Turing machines so that a machine \(M\) is a tuple \((Q,\Sigma,q_0,F,\delta)\) where:
\begin{itemize}[noitemsep]
    \item \(Q\) is a set of states;
    \item \(\Sigma\) is the set of all symbols that can appear on the tape;
    \item $F = \{ \accept, \reject \}$ is the set of halting states (where of
    course $\accept \neq \reject$);
    \item \(\delta: Q\times\Sigma\rightarrow Q\times\Sigma\times\{\mathtt{L},\mathtt{R},\mathtt{S}\}\) is the transition function that describes how a single computation step is performed.
    If the machine is either in state $\accept$ or $\reject$, then the output Turing machine state must not evolve, that is
    \[
        \delta(q, b) = (q, b, \mathtt{S}) \quad \text{for $q \in \{\accept, \reject\}$.}
    \]
\end{itemize}
Note that we consider Turing machines that explicitly either accept or reject the input. With the goal of maintaining a promise free approach, we adopt the following scheme: First, we define constraints on the input labels of \(\Pi_\turing\) that certify whether the input labeling encodes a valid computation of a universal Turing machine. If this is not the case, then we define constraints that make the output labels point towards a mistake inside a row of the grid. This essentially partitions the graph into regions of valid computations, separated by regions with errors. Then, within the region of valid computations, the output node labels have to satisfy constraints that are dependent on the final state of the Turing-machine computation. In effect, the output \emph{predicts} whether the Turing machine halts in an accepting or a rejecting state. This final step is the key ingredient to show a separation between computable and uncomputable \local.

\paragraph{Turing LCL.}
Here we define the LCL problem $\Pi_\turing$, which will allow us to show the separation between computable and uncomputable \local. We start by picking a universal Turing machine $U$ on binary alphabet, and we denote $\descr{M}$ the description of Turing machine $M$ on the tape of $U$. Note that we aim to encode the computation of a universal Turing machine \(U\). This allows us to keep the size of its representation constant without restricting expressive power. Otherwise, the transition function would change as we try to encode a different Turing machine, and we would not be able to define an LCL. 
We define a certification LCL $\Pi_\turing = \bigl(\Pi^\good_\turing, \Pi^\bad_\turing\bigr)$ as follows:
Let \(\Sigma_{\operatorname{tape}}=\{0,1\}\), \(\Sigma_{\operatorname{head}}=\{H,\botTM\}\) and \(\Sigma_{\operatorname{state}}=Q\cup\{\botTM\}\).
Then, the good and bad Turing LCLs are defined as follows:
\begin{align*}
  \Pi^\good_\turing &= \bigl(\Vin_\turing,\Ein_\turing,\Vout_\goodturing,\Eout_\goodturing,\CC_\goodturing\bigr), \text{ and} \\
  \Pi^\bad_\turing &= \bigl(\Vin_\turing,\Ein_\turing,\Vout_\badturing,\Eout_\badturing,\CC_{\badturing}\bigr),
\end{align*}
where
\begin{itemize}[noitemsep]
  \item \(\Vin_\turing = \Sigma_{\operatorname{tape}}\times\Sigma_{\operatorname{head}}\times\Sigma_{\operatorname{state}}\)
  \item \(\Ein_\turing=\{I_H,O_H\}\)
  \item \(\Vout_\goodturing=\{\bot\}\)
  \item \(\Eout_\goodturing=\{\bot\}\)
  \item \(\Vout_\badturing=\{\bot\}\)
  \item \(\Eout_\badturing=\{I_\mathtt{E},O_\mathtt{E}\}\)
\end{itemize}

We start by defining \(\CC_\goodturing\), which certifies that the input labelings for nodes and edges correctly represent the computation of a universal Turing-machine \(U\). To give an intuition, Constraints~\ref{item:constGoodTuring-1a}--\ref{item:constGoodTuring-1e} use pointers to guarantee that the head of the Turing machine is unique, Constraint~\ref{item:constGoodTuring-2} enforces a proper formatting of the tape, and finally Constraints~\ref{item:constGoodTuring-5a}--\ref{item:constGoodTuring-5c} and~\ref{item:constGoodTuring-6} check the correctness of the transition function. Wherever we use labels \(\north\), \(\south\), \(\east\) and \(\west\) that are not in \(\Ein\), these refer to the labels used to encode the grid in \(\CC_\goodgrid\).

\begin{framed}
  \noindent\textbf{Constraints \(\CC_\goodturing\)}

  \noindent
  Let node \(u\) have input \((b,h,q)\in\Sigma_{\operatorname{tape}}\times\Sigma_{\operatorname{head}}\times\Sigma_{\operatorname{state}}\) and let \(w=f(u,\west)\), \(v=f(u,\east)\), \(e_w=(u,w)\) and \(e_v=(u,v)\), if they exist.
  \begin{enumerate}

    \item \textbf{The head is unique}
      \begin{enumerate}
      \item \label{item:constGoodTuring-1a}
      \textbf{The neighbors point toward the head}\\
      If \(h=H\), then \(L^\text{in}_u(e_w)=I_H\), \(L^\text{in}_w(e_w)=O_H\), \(L^\text{in}_u(e_v)=I_H\) and \(L^\text{in}_v(e_v)=O_H\).

      \item \label{item:constGoodTuring-1b}
      \textbf{The leftmost node points right if it is not the head}\\
      If \(h\neq H\) and \(w\) does not exist, then \(L^\text{in}_u(e_v)=O_H\) and \(L^\text{in}_v(e_v)=I_H\).

      \item \label{item:constGoodTuring-1c}
      \textbf{The rightmost node points left if it is not the head}\\
      If \(h\neq H\) and \(v\) does not exist, then \(L^\text{in}_u(e_w)=O_H\) and \(L^\text{in}_w(e_w)=I_H\).

      \item \label{item:constGoodTuring-1d}
      \textbf{The chain of pointers is propagated right}\\
      If \(h\neq H\) and \(L^\text{in}_w(e_w)=I_H\), then \(L^\text{in}_u(e_w)=O_H\) and \(L^\text{in}_u(e_v)=I_H\).

      \item \label{item:constGoodTuring-1e}
      \textbf{The chain of pointers is propagated left}\\
      If \(h\neq H\) and \(L^\text{in}_v(e_v)=I_H\), then \(L^\text{in}_u(e_v)=O_H\) and \(L^\text{in}_u(e_w)=I_H\).
    \end{enumerate}

    \item \label{item:constGoodTuring-2}
    \textbf{Only the node with the head has its state label different from \(\botTM\)}\\
    \(q \in Q\) if and only if \(h=H\) (otherwise \(q=\botTM\)).

    \item \textbf{Computation is valid}\\
    If \(h=H\), then \(\delta(b,q)=(b_\next,q_\next,d)\) and moreover:
    \begin{enumerate}
      \item \label{item:constGoodTuring-5a}
      \textbf{The head does not move}\\
      If $d = \mathtt{S}$ and node $f(u,\south)$ exists, then it must have input $(b_\next, H, q_\next)$.

      \item \label{item:constGoodTuring-5b}
      \textbf{The head moves right}\\
      If $d = \mathtt{R}$ and node $f(u,\south,\east)$ exists, then it must have input $(b_{\south\east}, H, q_\next)$ for some value $b_{\south\east}$.
      Also, node $f(u,\south)$ must have input $(b_\next, \botTM, \botTM)$.

      \item \label{item:constGoodTuring-5c}
      \textbf{The head moves left}\\
      If $d = \mathtt{L}$ and node $f(u,\south,\west)$ exists, then it must have input $(b_{\south\west}, H, q_\next)$ for some value $b_{\south\west}$.
      Also, node $f(u,\south)$ must have input $(b_\next, \botTM, \botTM)$.
    \end{enumerate}

    \item \label{item:constGoodTuring-6}
    \textbf{The symbols on the rest of the tape are copied to the next row}\\
    If \(h\neq H\) and \(f(u,\south)\) exists, then it must have input \((b,h_\south,q_\south)\) for some values $h_\south$ and $q_\south$.
  \end{enumerate}
\end{framed}

Next, we define \(\CC_\badturing\). These constraints use pointers to invalidate an entire row of the grid when at least one node detects an error in the input labeling.

\begin{framed}
  \noindent\textbf{Constraints \(\CC_\badturing\)}

  \noindent
  Let node \(u\) have input \((b,h,q)\in\Sigma_{\operatorname{tape}}\times\Sigma_{\operatorname{head}}\times\Sigma_{\operatorname{state}}\) and let \(w=f(u,\west)\), \(v=f(u,\east)\), \(e_w=(u,w)\) and \(e_v=(u,v)\), if they exist.
  \begin{enumerate}
    \item \label{item:constBadTuring-edges}
    \textbf{Outgoing edges have a matching incoming end}\\
    For edge $e_w$, it must hold that if $\Lout_u(e_w) = O_\mathtt{E}$, then $\Lout_w(e_w) = I_\mathtt{E}$.
    Note that an edge may have $I_\mathtt{E}$ at both ends.

    \item \label{item:constBadTuring-sink}
    \textbf{The neighbors point toward a node with an error}\\
    If \(u\) does not respect at least one constraint in \(\CC_\goodturing\), then \(\Lout_u(e_w)=I_\mathtt{E}\) and \(\Lout_u(e_v)=I_\mathtt{E}\), if they exist.

    \item \label{item:constBadTuring-propagate}
    \textbf{The chain of error pointers is propagated}\\
    If $u$ satisfies $\CC_\goodturing$, then either $\Lout_u(e_w) = O_\mathtt{E}$ or $\Lout_u(e_v) = O_\mathtt{E}$.
  \end{enumerate}
\end{framed}

We now prove that solving \(\Pi_\turing\) means partitioning the rows of the grid in regions with errors and regions where we encode a proper computation of a Turing-machine.
In the following lemmas, we refer to \(G\), a \((\Vi,\Ei)\)-labeled input graph satisfying the constraints \(\CC_\goodgrid\).
We also use the notation \(r=(u_1,\ldots,u_m)\) to denote a row of \(G\), that is a path in \(G\) such that for every \(e_i=(u_i,u_{i+1})\), where \(i\in[1,m-1]\), it holds that \(\Lin_{u_i}(e_i)=\east\) and \(\Lin_{u_{i+1}}(e_i)=\west\), and also \(u_m\) has no half-edge labeled \(\east\) on its side.
Moreover, we remind that \(U\) is the universal Turing machine that we encoded in our constraints, and thus let us give the following definition.
\begin{definition}
  Let \((b_i,h_i,q_i)\) be the input for node \(u_i\) in a row \(r=(u_1,\ldots,u_m)\) of \(G\). Row \(r\) \emph{represents} a tape of \(U\) if the following hold:
  \begin{enumerate}[noitemsep]
    \item\label{def:item:tape-representation-1} there exists exactly one \(i\) such that \(h_i=H\);
    \item\label{def:item:tape-representation-2} for every \(i\in[1,m]\), if \(h_i=H\) then \(q_i\in Q\), otherwise \(q_i=\botTM\)
  \end{enumerate}
\end{definition}

Then, constraints \(\CC_\goodturing\) ensure the following properties on the rows of \(G\). First, we ensure we can encode a proper representation of the tape.
\begin{lemma}
\label{lem:row-is-TM-tape}
  Let \(r=(u_1,\ldots,u_m)\) be a row of \(G\). If every node in \(r\) respects constraints \(\CC_\goodturing\), then \(r\) represents a tape of \(U\).
\end{lemma}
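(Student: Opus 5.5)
Item~\ref{def:item:tape-representation-2} is immediate, since Constraint~\ref{item:constGoodTuring-2} holds at every node of $r$. So the work is in item~\ref{def:item:tape-representation-1}: the row has exactly one node with $h_i=H$. I will argue this purely along the row path $u_1,\dots,u_m$, using the edge input labels $I_H,O_H$ on the row edges $e_i=(u_i,u_{i+1})$. Constraints~\ref{item:constGoodTuring-1a}--\ref{item:constGoodTuring-1e} then behave like a $1$-dimensional "pointers toward the head" certificate.

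\emph{At most one head.} Suppose $h_i=h_j=H$ with $i<j$ and, without loss of generality, no head strictly between them. By Constraint~\ref{item:constGoodTuring-1a} at $u_i$, the half-edge of $u_{i+1}$ on $e_i$ is labeled $O_H$ and that of $u_i$ is $I_H$. I then induct rightwards. Whenever $u_k$ is not a head and $L^\text{in}_{u_{k-1}}(e_{k-1})=I_H$, Constraint~\ref{item:constGoodTuring-1d} gives $L^\text{in}_{u_k}(e_{k-1})=O_H$ and $L^\text{in}_{u_k}(e_k)=I_H$. So every edge from $e_i$ up to $e_{j-1}$ carries $I_H$ at its left endpoint. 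In particular $L^\text{in}_{u_{j-1}}(e_{j-1})=I_H$. But Constraint~\ref{item:constGoodTuring-1a} at $u_j$ forces $L^\text{in}_{u_{j-1}}(e_{j-1})=O_H$, a contradiction. The case $j=i+1$ is the same clash directly.

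\emph{At least one head.} Suppose no node of $r$ is a head. If $m=1$, Constraint~\ref{item:constGoodTuring-1b} demands a label on a nonexistent edge $e_v$, so that case is excluded; I read the constraint as requiring the edge to exist. For $m\ge 2$, Constraint~\ref{item:constGoodTuring-1b} at $u_1$ gives $L^\text{in}_{u_2}(e_1)=I_H$. Now propagate leftwards: by Constraint~\ref{item:constGoodTuring-1e} at $u_k$, since $L^\text{in}_{u_{k+1}}(e_k)=I_H$, we get $L^\text{in}_{u_k}(e_{k-1})=I_H$. Iterating from $u_1$ and Constraint~\ref{item:constGoodTuring-1c} at $u_m$ gives the rightward chain $L^\text{in}_{u_k}(e_k)=O_H$ and $L^\text{in}_{u_{k+1}}(e_k)=I_H$ for all $k$. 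Constraint~\ref{item:constGoodTuring-1c} also gives $L^\text{in}_{u_{m-1}}(e_{m-1})=I_H$. Propagating that leftwards with Constraint~\ref{item:constGoodTuring-1e} yields $L^\text{in}_{u_1}(e_1)=I_H$, contradicting $L^\text{in}_{u_1}(e_1)=O_H$ from Constraint~\ref{item:constGoodTuring-1b}. Each half-edge carries a single label, so these clash.

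\emph{Main obstacle.} The delicate step is the bookkeeping of which endpoint's half-edge each of \ref{item:constGoodTuring-1d} and \ref{item:constGoodTuring-1e} constrains. These rules are stated somewhat asymmetrically, so I will fix the convention that $L^\text{in}_x(e)$ is the label on $x$'s side of $e$ and check each propagation step against it. If the literal reading of one rule fails to give the needed clash, I would instead run the propagation from both ends simultaneously and locate the first index where a half-edge is forced to be both $I_H$ and $O_H$.
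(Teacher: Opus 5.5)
Your argument for item~\ref{def:item:tape-representation-2} and for uniqueness of the head is correct. Your uniqueness argument pushes $L^{\mathrm{in}}_{u_k}(e_k)=I_H$ east from a head via Constraint~\ref{item:constGoodTuring-1d} until it collides with Constraint~\ref{item:constGoodTuring-1a} at the next head, and it is more explicit than the paper's. The gap is in the existence of a head. Read literally, Constraint~\ref{item:constGoodTuring-1d} carries a \emph{left}-pointing edge ($I_H$ on the west end, $O_H$ on the east end) one step east, and Constraint~\ref{item:constGoodTuring-1e} carries a \emph{right}-pointing edge one step west. These rules only spread the pattern outward from an existing head. The chain that Constraint~\ref{item:constGoodTuring-1b} starts at the west end is right-pointing, so neither rule moves it inward:
\begin{itemize}
\item at $u_2$, Constraint~\ref{item:constGoodTuring-1d} needs $L^{\mathrm{in}}_{u_1}(e_1)=I_H$, which is false;
\item Constraint~\ref{item:constGoodTuring-1e} at $u_1$ only constrains a nonexistent edge.
\end{itemize}
So your claim that iterating from $u_1$ yields $L^{\mathrm{in}}_{u_k}(e_k)=O_H$ and $L^{\mathrm{in}}_{u_{k+1}}(e_k)=I_H$ for all $k$ is unsupported. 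Symmetrically, from $L^{\mathrm{in}}_{u_{m-1}}(e_{m-1})=I_H$ you cannot fire Constraint~\ref{item:constGoodTuring-1e} at $u_{m-2}$, since that would need $L^{\mathrm{in}}_{u_{m-1}}(e_{m-2})=I_H$. Hence the derivation of $L^{\mathrm{in}}_{u_1}(e_1)=I_H$ also fails.

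No bookkeeping repairs this, and your fallback of running both chains until some half-edge is doubly constrained fails too. Take $m\ge 3$ and any $2\le k\le m-1$, put no head in the row, orient $e_1,\dots,e_{k-1}$ rightward ($O_H$ at $u_j$, $I_H$ at $u_{j+1}$), and orient $e_k,\dots,e_{m-1}$ leftward. Then:
\begin{itemize}
\item Constraint~\ref{item:constGoodTuring-1a} is vacuous, and Constraints~\ref{item:constGoodTuring-1b} and~\ref{item:constGoodTuring-1c} hold at the endpoints;
\item Constraint~\ref{item:constGoodTuring-1d} fires exactly at $u_{k+1},\dots,u_m$ and Constraint~\ref{item:constGoodTuring-1e} exactly at $u_1,\dots,u_{k-1}$, and all these firings are satisfied;
\item Constraint~\ref{item:constGoodTuring-2} holds with every state equal to $\botTM$.
\end{itemize}
The non-head node $u_k$ simply carries $I_H$ on both sides.

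The missing ingredient is what the paper's proof takes for granted: it says ``this is the only case in which two $I_H$ labels can be on the half-edges of the same node'' and treats the chains from the endpoints as propagated inward. What you need is a forward rule: a non-head node that receives the pointer on one side passes it on. That is, if $h\neq H$ and $L^{\mathrm{in}}_u(e_w)=I_H$, then $L^{\mathrm{in}}_u(e_v)=O_H$ and $L^{\mathrm{in}}_v(e_v)=I_H$, and symmetrically from the east. With this rule the chain from $u_1$ reaches $u_m$ and contradicts Constraint~\ref{item:constGoodTuring-1c}, which is the argument you were after. You must adopt it explicitly, either as the intended reading of Constraints~\ref{item:constGoodTuring-1d} and~\ref{item:constGoodTuring-1e} or as an added constraint, because it does not follow from the rules as written. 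Merely forbidding two $I_H$ labels at a non-head node would not suffice. With $m=4$, orient $e_1$ rightward and $e_3$ leftward, and put $O_H$ on both ends of $e_2$. This headless row still satisfies every constraint, and no node has two $I_H$ labels.
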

\begin{proof}
  We analyze case by case the conditions that must be satisfied to obtain a valid tape. Condition~\ref{def:item:tape-representation-2} is trivially guaranteed by Constraint~\ref{item:constGoodTuring-2} in \(\CC_\goodturing\).
  Condition~\ref{def:item:tape-representation-1} is guaranteed by Constraints~\ref{item:constGoodTuring-1a}-\ref{item:constGoodTuring-1e} in the following way. If \(h_i=H\), then Constraint~\ref{item:constGoodTuring-1a} forces the neighbors of \(u_i\) to point towards it, and this is the case even when \(u_i\) is the leftmost node \(u_1\) or the rightmost one \(u_m\). The key property here is that this is the only case in which two \(I_H\) labels can be on the half-edges of the same node. Then, if \(h_1\neq H\), Constraint~\ref{item:constGoodTuring-1b} starts a chain of pointers at \(u_1\) pointing right, and this is propagated by Constraint~\ref{item:constGoodTuring-1d}. Similarly, if \(h_m\neq H\), Constraint~\ref{item:constGoodTuring-1c} starts a chain of pointers at \(u_m\) pointing left, and this is propagated by Constraint~\ref{item:constGoodTuring-1e}. The only way to reconcile these two chains of pointers is to have a single node \(u_i\) with \(h_i=H\), as it is able to have label \(I_H\) on both half-edges.
  
  To see why we cannot have multiple heads, assume there are two nodes \(u_i\) and \(u_j\) where \(h_i=h_j=H\) for \(i<j\). If the two nodes are next to each other, that is \(i+1=j\), Constraint~\ref{item:constGoodTuring-1a} is violated. Otherwise, Constraints~\ref{item:constGoodTuring-1d} and~\ref{item:constGoodTuring-1e} force either the existence of a node \(u_k\) between \(u_i\) and \(u_j\) with both half-edges labelled \(O_H\), or the existence of an edge with both labels \(I_H\), and this contradicts the very Constraints~\ref{item:constGoodTuring-1d} and~\ref{item:constGoodTuring-1e} themselves.
\end{proof}

To prove that the computation is correct, we ensure that we encode one application of the transition function correctly.
\begin{lemma}
\label{lem:TM-computation-step}
  Let \(r_1\) and \(r_2\) be consecutive rows of \(G\). If every node in both \(r_1\) and \(r_2\) respects constraints \(\CC_\goodturing\), then \(r_2\) represents the tape obtained by applying the transition function \(\delta\) of universal Turing machine \(U\) to the tape represented by \(r_1\).
\end{lemma}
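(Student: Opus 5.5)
We first apply \cref{lem:row-is-TM-tape} to both rows: since every node of $r_1=(u_1,\dots,u_m)$ and of $r_2$ respects $\CC_\goodturing$, each row represents a tape of $U$. So $r_1$ has a unique head node $u_i$ with state $q_i\in Q$, and $r_2$ has a unique head node with some state. Because $G$ satisfies $\CC_\goodgrid$, \cref{lem:grid-algorithm} lets us view the two rows as consecutive rows $k$ and $k+1$ of a growing grid. Thus $r_2=(u'_1,\dots,u'_{m+1})$ with $f(u_j,\south)=u'_j$ for all $j\le m$, and $u'_{m+1}$ is the extra cell at the east end. We interpret the tape of $r_1$ as the $m$ symbols $b_1,\dots,b_m$ followed by blanks (value $0$), and compare it with the tape obtained from one application of $\delta$.

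\textbf{Non-head cells.} For every $j\neq i$ with $j\le m$, Constraint~\ref{item:constGoodTuring-6} applied at $u_j$ gives that $u'_j$ carries tape symbol $b_j$. This is exactly what $\delta$ prescribes for cells the head is not on. The extra cell $u'_{m+1}$ has no north neighbour, so no constraint fixes its symbol. We therefore either argue that it plays the role of a fresh blank cell, or note that the tape semantics allows any symbol there as a newly revealed cell. This is the detail to settle carefully against the intended convention.

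\textbf{Head cell.} Let $\delta(b_i,q_i)=(b_\next,q_\next,d)$. We split into cases on $d$. If $d=\mathtt S$, Constraint~\ref{item:constGoodTuring-5a} puts $(b_\next,H,q_\next)$ at $u'_i$. If $d=\mathtt R$, Constraint~\ref{item:constGoodTuring-5b} puts symbol $b_\next$ with no head at $u'_i$ and a head with state $q_\next$ at $u'_{i+1}=f(u_i,\south,\east)$. This node always exists, because $r_2$ is one cell longer than $r_1$, so $i+1\le m+1$. Its tape symbol is not touched by this constraint; it is fixed by Constraint~\ref{item:constGoodTuring-6} from $u_{i+1}$ (or is the fresh cell if $i=m$). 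If $d=\mathtt L$, Constraint~\ref{item:constGoodTuring-5c} acts symmetrically when $i>1$.

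\textbf{Expected obstacles.} The hard part is the boundary bookkeeping. One issue is the left edge with $d=\mathtt L$ at $i=1$, where $f(u_1,\south,\west)$ does not exist. The constraint then imposes only the symbol at $u'_1$, and we need the unique-head property of $r_2$ from \cref{lem:row-is-TM-tape} to place the head. We plan to adopt the standard convention that a left move at cell $1$ keeps the head in place, and then check that uniqueness forces the head to be at $u'_1$ with state $q_\next$. This may require interpreting the constraint accordingly. In every case, uniqueness of the head in $r_2$ rules out any additional heads, so the row $r_2$ is exactly the successor configuration.
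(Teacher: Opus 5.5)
\paragraph{Verdict: same route as the paper; the two boundary cases you add do not close.} Your main case analysis is exactly the paper's proof, which stops there and never looks at the boundary cells. It consists of \cref{lem:row-is-TM-tape} applied to both rows, Constraint~\ref{item:constGoodTuring-6} for the non-head cells, and Constraints~\ref{item:constGoodTuring-5a}--\ref{item:constGoodTuring-5c} for the head cell according to $d$. The gap is in the two boundary cases you add. Neither can be settled the way you propose, because the constraints as written do not determine them.

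\textbf{West end ($i=1$, $d=\mathtt{L}$).} Uniqueness of the head in $r_2$ only says that $r_2$ has exactly one head. It says nothing about its position or state. The only rules that ever place a head in $r_2$ are Constraints~\ref{item:constGoodTuring-5a}--\ref{item:constGoodTuring-5c} at the head of $r_1$. Constraint~\ref{item:constGoodTuring-6} at the other nodes of $r_1$ fixes only the tape symbol below them and leaves $h_{\south},q_{\south}$ free. The constraints at nodes of $r_2$ concern only their own pointers, Constraint~\ref{item:constGoodTuring-2}, and the row \emph{below} $r_2$. The main clause of Constraint~\ref{item:constGoodTuring-5c} is vacuous when $f(u_1,\south,\west)$ does not exist. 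Hence $r_2$ may legally carry its head at any $u'_j$ with $j\ge 2$, in any state of $Q$.

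The ``Also'' sentence makes things worse. Suppose ``Also, node $f(u,\south)$ must have input $(b_{\mathrm{next}},\botTM,\botTM)$'' is conditioned only on $d=\mathtt{L}$, as you read it. Then it forces head label $\botTM$ at $u'_1$, which directly contradicts the ``head stays in place'' convention you want. If it is also conditioned on $f(u,\south,\west)$ existing, then even the symbol at $u'_1$ is unconstrained. So no interpretation rescues this case: the statement is false there for the LCL as written. You need either an extra constraint or an explicit exclusion of this case. For example, the constraint could require that a head at the west end with $d=\mathtt{L}$ forces $f(u,\south)$ to have input $(b_{\mathrm{next}},H,q_{\mathrm{next}})$.

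\textbf{East end.} Nothing fixes the tape symbol of $u'_{m+1}$: it has no north neighbour, and Constraint~\ref{item:constGoodTuring-5b} with $i=m$ fixes only its head and state. So the ``fresh blank'' option is not available. You must commit to the weaker reading, in which the newly revealed cell is arbitrary. The lemma then asserts agreement with $\delta$ on columns $1,\dots,m$, plus the head position and state.

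Keep in mind that with $i=m$ and $d=\mathtt{R}$ the head then lands on this arbitrary symbol. So the trace in a component is not determined by its first row. The later oracle argument for the $O(\log n)$ upper bound tacitly relies on exactly that determinacy.
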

\begin{proof}
  Thanks to \Cref{lem:row-is-TM-tape}, we know that both \(r_1\) and \(r_2\) represent valid Turing machine tapes. Let \(u_1\) be a node in \(r_1\) with input \(b_1,h_1,q_1\) and let \(\delta(b,q)=(b_\next,q_\next,d)\), where \(\delta\) is the transition function of universal Turing machine \(U\). We consider two cases.

  \textbf{Case \(h_1=H\).} The new character in this position of the tape must be \(b_\next\), and all the three Constraints~\ref{item:constGoodTuring-5a}, \ref{item:constGoodTuring-5b} and~\ref{item:constGoodTuring-5c} ensure this. If the head does not move, Constraint~\ref{item:constGoodTuring-5a} guarantees that the node south to \(u_1\) must hold \(b_\next\) and \(q_\next\). If the head moves right, Constraint~\ref{item:constGoodTuring-5b} forces the new symbol to be written in the current position of the tape by asking the node south to \(u_1\) to have input \((b_\next,\botTM,\botTM)\). Constraint~\ref{item:constGoodTuring-5b} also forces the node south-east to \(u_1\) to have input \((b_{\south\east},H,q_{\south\east})\), which means the head is moving to the right. An analogous behavior is enforced by Constraint~\ref{item:constGoodTuring-5c} when the head moves left.

  \textbf{Case \(h_1\neq H\).} In this case, the only thing to ensure is that every tape symbol on row \(r_1\) is copied to the next row \(r_2\). This is guaranteed by Constraint~\ref{item:constGoodTuring-6}. Notice that this holds also for those nodes next to the head, and this constraint poses a condition only on the value of the symbol on the tape. Thus, this does not conflict with Constraints~\ref{item:constGoodTuring-5b} and~\ref{item:constGoodTuring-5c}, as they pose a condition only on the value of the head and the state.
\end{proof}

Finally, we prove that if the input encodes valid tapes of universal Turing machine $U$, then the algorithm cannot solve problem $\CC_\badturing$:
\begin{lemma}
\label{lem:TM-error-pointers}
  Let \(r_1=(u_1,\ldots,u_m)\) and \(r_2\) be consecutive rows of \(G\).
  If row $r_1$ represents a tape of $U$, then there exists an input labeling on the edges and on $r_2$ such that nodes of $r_1$ can satisfy $\CC_\goodturing$.
  Moreover, they cannot satisfy $\CC_\badturing$.
\end{lemma}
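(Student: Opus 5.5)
The proof splits into two parts: first I build the required input labeling explicitly, and then I show by a counting argument that the bad constraints cannot be met on a row where no node violates the good constraints.

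\emph{Existence.} Let $r_1$ represent a tape of $U$, so by Condition~\ref{def:item:tape-representation-1} there is exactly one index $i$ with $h_i=H$. I label the row edges as follows. For every $j<i$, set $\Lin_{u_j}(e)=O_H$ and $\Lin_{u_{j+1}}(e)=I_H$ on the edge $e=(u_j,u_{j+1})$. For every $j\ge i$, set $\Lin_{u_j}(e)=I_H$ and $\Lin_{u_{j+1}}(e)=O_H$. Thus every pointer in the row points toward the head.

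I then check Constraints~\ref{item:constGoodTuring-1a}--\ref{item:constGoodTuring-1e} case by case:
\begin{itemize}
\item At the head $u_i$, both half-edges carry $I_H$ and the neighbouring ends carry $O_H$.
\item At $u_1$ and $u_m$, when they are not the head, the pointer starts in the correct direction.
\item At interior non-head nodes, the pattern $O_H$ on the side toward the head and $I_H$ on the other side matches Constraints~\ref{item:constGoodTuring-1d} and~\ref{item:constGoodTuring-1e}.
\end{itemize}
Constraint~\ref{item:constGoodTuring-2} is exactly Condition~\ref{def:item:tape-representation-2}.

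Next, I define the inputs on $r_2$ by applying one step of $\delta$. Let $\delta(b_i,q_i)=(b_\next,q_\next,d)$.
\begin{itemize}
\item The node south of $u_j$, for $j\neq i$, receives $(b_j,\botTM,\botTM)$.
\item If $d=\mathtt{S}$, the node south of $u_i$ receives $(b_\next,H,q_\next)$.
\item If $d\in\{\mathtt{L},\mathtt{R}\}$, the node south of $u_i$ receives $(b_\next,\botTM,\botTM)$, and the south-west or south-east neighbour receives $(b,H,q_\next)$, where $b$ is the symbol copied from above.
\end{itemize}
The extra rightmost node of $r_2$, which exists because the grid grows, gets tape symbol $0$ unless the head lands there. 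The growing grid guarantees that $f(u_i,\south,\east)$ always exists. When $d=\mathtt{L}$ and $i=1$, Constraint~\ref{item:constGoodTuring-5c} is vacuous because $f(u_1,\south,\west)$ does not exist.

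Under this labeling, Constraints~\ref{item:constGoodTuring-5a}--\ref{item:constGoodTuring-5c} and~\ref{item:constGoodTuring-6} hold at every node of $r_1$. The step needing care is consistency between the copied symbol and the head-move constraints. As noted after Lemma~\ref{lem:TM-computation-step}, the head-move constraints constrain only the head and state components, so they do not conflict with copying.

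\emph{Impossibility of the bad constraints.} Suppose the outputs of $r_1$ satisfy $\CC_\badturing$ while every node of $r_1$ satisfies $\CC_\goodturing$. Then Constraint~\ref{item:constBadTuring-sink} never triggers inside $r_1$. By Constraint~\ref{item:constBadTuring-propagate}, each of the $m$ nodes of $r_1$ puts $O_\mathtt{E}$ on at least one of its row half-edges.

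By Constraint~\ref{item:constBadTuring-edges}, an $O_\mathtt{E}$ end forces $I_\mathtt{E}$ at the other end. Hence each of the $m-1$ row edges carries at most one $O_\mathtt{E}$, and the $m$ nodes would need at least $m$ outgoing ends among $m-1$ edges. By pigeonhole this is impossible. Equivalently, following the outgoing pointers from any node gives a strictly monotone walk along the path, which must eventually leave the row or reach a node with no outgoing edge, and neither is possible.

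This counting argument is the main obstacle I expect. I need to confirm that Constraint~\ref{item:constBadTuring-edges} applies to both directions along the row, even though it is stated only for $e_w$. Applied at the right endpoint of each edge, its $e_w$ is that edge, so every row edge is covered. I also need to rule out that the walk could escape through a vertical edge: pointer labels live only on $e_w$ and $e_v$, so any $O_\mathtt{E}$ required by Constraint~\ref{item:constBadTuring-propagate} lies on a row edge, and the pigeonhole count stays inside the row.
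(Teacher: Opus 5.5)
Your existence construction is the paper's: orient the $I_H/O_H$ pointers toward the unique head and write the successor configuration into $r_2$. You check the cases the paper leaves implicit. Your pigeonhole count (equivalently, the monotone pointer walk) is a precise version of the paper's argument that consistently oriented error pointers would have to end at a node violating $\CC_\goodturing$. Your handling of Constraint~\ref{item:constBadTuring-edges} is correct in the form you actually use. Literally, it only says that $O_\mathtt{E}$ on the \emph{east} end of a row edge forces $I_\mathtt{E}$ on the west end, so ``an $O_\mathtt{E}$ end forces $I_\mathtt{E}$ at the other end'' overstates it. But this already forbids an edge with $O_\mathtt{E}$ at both ends, which is all the count needs.

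The missing step is a reduction the paper makes first. You assume every node of $r_1$ is in bad mode, since you apply Constraint~\ref{item:constBadTuring-propagate} to all $m$ nodes. The paper, however, proves that \emph{none} of the nodes of $r_1$ may satisfy $\CC_\badturing$. Your count says nothing about a mixed row, and $\CC_\badturing$ alone does not rule one out. Because Constraint~\ref{item:constBadTuring-edges} is one-sided, consider a lone node with $I_\mathtt{E}$ on its west half-edge and $O_\mathtt{E}$ on its east half-edge, sitting between neighbours that output $\bot$. It violates none of the three bad constraints. The paper closes this by appealing to the incompatibility of the good and bad output alphabets: one node of $r_1$ in bad mode then forces the whole connected row into bad mode. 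Add that reduction before your counting argument, and your proof is complete.
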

\begin{proof}
  If row $r_1$ represents a tape of $U$, then it suffices to make row $r_2$ encode the state of the tape after one step of the universal Turing machine $U$.
  This clearly satisfies Constraints~\ref{item:constGoodTuring-2}, \ref{item:constGoodTuring-5a}--\ref{item:constGoodTuring-5c} and \ref{item:constGoodTuring-6} of $\CC_\goodturing$.
  To satisfy Constraints~\ref{item:constGoodTuring-1a}--\ref{item:constGoodTuring-1e}, it suffices to orient the input edges on $r_1$ towards the head of the Turing machine.
  
  To show that none of the nodes may satisfy $\CC_\badturing$, it is clear that none of the nodes may satisfy Constraint~\ref{item:constBadTuring-sink} as they satisfy $\CC_\goodturing$.
  Assume for contradiction that at least one node of row $r_1$ satisfies $\CC_\badturing$.
  This immediately implies that all nodes of $r_1$ satisfy $\CC_\badturing$ as the output labels of $\Pi_\goodturing$ and $\Pi_\badturing$ are incompatible with each other.
  Hence all nodes must satisfy Constraint~\ref{item:constBadTuring-propagate}.
  However, this, along with Constraint~\ref{item:constBadTuring-edges}, enforces that all nodes must point towards a node not satisfying $\CC_\goodturing$ in a consistent manner.
  As Constraint~\ref{item:constBadTuring-propagate} also ensures that the pointers cannot propagate over the endpoints of the row, and none of the nodes satisfy Constraint~\ref{item:constBadTuring-sink}, it is evident that the labeling will be invalid at least somewhere on $r_1$.
\end{proof}

We have now shown that $\Pi_\turing$ indeed certifies that the input consists of fragments of traces of execution of universal Turing machine $U$:
\begin{corollary}
    \label{cor:turing-components}
    Let $G$ be a labeled graph satisfying $\Pi_\tree \otimes \Pi_\row \otimes \Pi_\grid \otimes \Pi_\goodturing$.
    Then it consists of components which encode a correct execution of the universal Turing machine $U$.
\end{corollary}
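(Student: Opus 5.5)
The plan is to chain the structural lemmas already proved, working from the bottom layer upward, and then read off the component structure from Lemmas~\ref{lem:row-is-TM-tape} and~\ref{lem:TM-computation-step}.

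\emph{First step: restrict to the certified region.} Let $G'$ be the subgraph on nodes whose first three output components lie in the good parts of $\Pi_\tree$, $\Pi_\row$ and $\Pi_\grid$. By the definition of $\otimes$, these nodes satisfy $\CC_\goodtree$, $\CC_\goodrow$ and $\CC_\goodgrid$. Take a connected component $C$ of the grid nodes in $G'$, with the trees attached to its rows. By the second item of Lemma~\ref{lem:grid-algorithm}, $C$ is a growing grid of some dimensions $(h,\ell)$. Every row of $C$ is a path from a westmost to an eastmost node, which is exactly the notion of row used in Lemmas~\ref{lem:row-is-TM-tape}--\ref{lem:TM-error-pointers}.

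\emph{Second step: apply the Turing lemmas.} By hypothesis, the nodes of $C$ additionally satisfy $\CC_\goodturing$.
\begin{itemize}
\item Lemma~\ref{lem:row-is-TM-tape} then shows that every row $r_i$ of $C$ represents a tape of $U$: a unique head, carrying a state in $Q$.
\item Lemma~\ref{lem:TM-computation-step}, applied to each pair of consecutive rows $(r_i,r_{i+1})$, shows that $r_{i+1}$ is obtained from $r_i$ by one application of $\delta$.
\end{itemize}
Inducting on $i$, the sequence of rows $r_1,\dots,r_h$ is a valid run of $U$ for $h-1$ steps, started from the configuration $r_1$. Since $\delta$ fixes $\accept$ and $\reject$ configurations, a run that reaches a halting state stays there. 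The encoding therefore remains consistent in the rows after halting.

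\emph{Main obstacle: tape length and head moves.} Each row is one cell longer than the row above. A move $\mathtt{R}$ at the eastmost cell of $r_i$ therefore lands on the new cell of $r_{i+1}$, which matches $U$'s tape growing by at most one cell per step. A move $\mathtt{L}$ at the westmost cell would fall off the tape. I would handle this the standard way: adopt the convention that $U$ uses a one-sided tape and never moves left of cell $1$. Alternatively, in that case constraint~\ref{item:constGoodTuring-5c} has no $f(u,\south,\west)$ and the next row is forced to have no head, which contradicts Lemma~\ref{lem:row-is-TM-tape}; such components simply cannot occur.

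Similarly, the unused cells east of the head are unconstrained input bits. They are legitimately part of $U$'s tape contents, since $r_1$ is an arbitrary initial configuration. I would state this explicitly so that ``correct execution'' means a run from the configuration written in the first row of the component. Components on which some constraint fails are exactly the separated error regions, so what remains is a disjoint union of such valid traces.
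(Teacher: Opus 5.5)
Your main chain (restrict to the certified region, get a growing grid from the second item of \cref{lem:grid-algorithm}, apply \cref{lem:row-is-TM-tape} to every row and \cref{lem:TM-computation-step} to consecutive rows, then induct) is exactly what the paper intends. The paper records the corollary as an immediate consequence of those lemmas. The gap is in your treatment of the left boundary, where neither of your resolutions works. Suppose the head of $r_i$ sits on the westmost cell $u$, in state $q$ over symbol $b$, with $\delta(q,b)=(q',b',\mathtt{L})$. Constraint~\ref{item:constGoodTuring-5c} says nothing about where the head goes, since $f(u,\south,\west)$ does not exist; at most it forces $f(u,\south)$ to be headless. Constraint~\ref{item:constGoodTuring-6} copies only the tape symbol and leaves $h_\south,q_\south$ free. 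So \cref{lem:row-is-TM-tape} applied to $r_{i+1}$ only says the head is \emph{somewhere}. Placing it at any other cell of $r_{i+1}$ (including the new eastmost one), in any state of $Q$, is compatible with every constraint coming from $r_i$. Your claim that ``the next row is forced to have no head'' is therefore false. Such components do occur, and in them $r_{i+1}$ is not the successor of $r_i$. Nor can the problem be avoided by choosing $U$. The first row of a component may encode an arbitrary configuration, and $U$ certainly has some pair $(q,b)$ with a left move. So the configuration with the head on cell $1$ in state $q$ over $b$ is a legal first row.

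What is missing is a local rule that enforces the one-sided-tape convention inside the LCL. For instance, add to $\CC_\goodturing$: if $h=H$, $\delta(q,b)=(q',b',\mathtt{L})$, $u$ has no $\west$ neighbor and $f(u,\south)$ exists, then $f(u,\south)$ has input $(b',H,q')$. With this rule, \cref{lem:TM-computation-step} covers every case and your induction goes through. The paper's lemmas silently skip this case as well, so the corollary as stated relies on such a patch. There is also a smaller imprecision. The cell appended at the east end of $r_{i+1}$ is not part of $r_1$, so ``$r_1$ is an arbitrary initial configuration'' does not explain its unconstrained symbol. Instead, read the trace as a run on an infinite tape whose $j$-th cell initially holds the symbol found in the first row that contains column $j$. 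The head of each $r_{i'}$ lies within its $\ell+i'-1$ cells, so column $\ell+i$ is never visited before $r_{i+1}$. Hence its arbitrary content is consistent with an untouched initial cell.
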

We also note that there exists a locality-$O(\log n)$ \clocal algorithm for certifying this structure:
\begin{lemma}
    \label{lem:turing-algorithm}
    There exists a locality-$O(\log n)$ \clocal algorithm which, given a labeled graph $G$,
    \begin{enumerate}[noitemsep]
        \item Solves $\Pi_\tree \otimes \Pi_\row \otimes \Pi_\grid \otimes \Pi_\turing$, and
        \item Those regions where the algorithm solves $\Pi_\goodturing$ form a valid execution trace of Turing machine $U$.
    \end{enumerate}
\end{lemma}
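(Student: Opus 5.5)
We build on the algorithm of \cref{lem:grid-algorithm} and add one more stage that handles the Turing layer. Each node first runs the $O(\log n)$-round procedure of \cref{lem:grid-algorithm}. That procedure is computable: every step is a finite check of constant-radius constraints or a search for the extreme nodes of a row or tree. After it finishes, every node knows whether it belongs to the region that is certified as a good growing grid. Nodes outside that region output $\bot$ in the Turing component, since the composition rule forces $y=\bot$ whenever the first component is a bad label. Nodes inside the good region take part in the new stage.

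For a node $v$ in a certified row $r$, we note that by \cref{lem:log-diameter-of-grid-tree} the row together with its attached perfect tree has diameter $O(\log n)$. So in $O(\log n)$ additional rounds, $v$ can collect the whole row $r$ and also the rows directly north and south of it, each of which has its own tree of logarithmic diameter and is at distance one. The node $v$ then checks $\CC_\goodturing$ at every node of $r$. All these constraints have constant radius and involve only the fixed, finite transition table of $U$, so the check is computable.

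If every node of $r$ passes, $v$ outputs the $\Pi_\goodturing$ label $\bot$. If some node of $r$ fails, $v$ outputs labels from $\Pi_\badturing$. A failing node puts $I_\mathtt{E}$ on both of its row half-edges. A passing node orients its half-edge $O_\mathtt{E}$ toward the nearest failing node in $r$, breaking ties by a fixed rule such as preferring west, and puts $I_\mathtt{E}$ on its other half-edge. Because all nodes of $r$ see the same row and use the same rule, their choices are consistent. This gives \cref{item:constBadTuring-edges,item:constBadTuring-sink,item:constBadTuring-propagate}: along the row, the pointers form chains that end at failing nodes, and each edge either has $O_\mathtt{E}$ matched with $I_\mathtt{E}$ or carries $I_\mathtt{E}$ on both ends.

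For the second item, \cref{lem:TM-error-pointers} shows that a row representing a valid tape can never be labeled bad. It follows that the good output appears exactly on maximal runs of consecutive rows that all satisfy $\CC_\goodturing$. Applying \cref{lem:row-is-TM-tape} to each such row and \cref{lem:TM-computation-step} to each pair of consecutive rows shows that each run encodes a valid execution trace of $U$, in agreement with \cref{cor:turing-components}.

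The main obstacle is consistency at the boundaries of a row. We must make sure that the pointer orientation is determined by the row as a whole, so that neighbouring nodes never disagree, and that no pointer runs off the end of a row. A subtler point is that rows with errors should only invalidate themselves. Constraint~\ref{item:constGoodTuring-6} and the transition constraints look southward, so a violation caused by a bad south row is charged to the north row as well. This is still acceptable, because the statement only requires the good regions to be valid traces, not that they be as large as possible.
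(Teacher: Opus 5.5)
Your proposal is correct and takes essentially the same route as the paper: after the $O(\log n)$-round grid certification of \cref{lem:grid-algorithm}, each node uses the attached trees to gather its whole row and the row to the south, outputs $\bot$ if every node of the row satisfies $\CC_\goodturing$, and otherwise places $I_\mathtt{E}/O_\mathtt{E}$ labels pointing toward an erroneous node of the same row. Your explicit nearest-failing-node rule with westward tie-breaking is a concrete instance of the paper's ``point towards an error'' step and does satisfy all three constraints of $\CC_\badturing$; one small slip is that the good rows being exactly the fully passing ones follows directly from your algorithm's definition, so the appeal to \cref{lem:TM-error-pointers} there is unnecessary.
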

\begin{proof}
	By \cref{lem:grid-algorithm}, we can solve problem $\Pi_\tree \otimes \Pi_\row \otimes \Pi_\grid$ with locality $O(\log n)$.
	We can concentrate on components which solve $\Pi_\goodgrid$; those form a growing grid.
	Moreover, the nodes of every row are within distance $O(\log n)$ from each other via the trees attached to the rows.
	
	Now the algorithm is as follows:
	Every node $u$ collects the whole row $r_1$ it is on, and the row $r_2$ south of it; it can do this with locality $O(\log n)$ using the trees.
	It then checks whether all nodes of $r_1$ satisfy constraints $\CC_\goodturing$; if so, it outputs $\bot$.
	Otherwise it outputs labels $I_\mathtt{E}$ and $O_\mathtt{E}$ on its adjacent edges such that they point towards an error on row $r_1$.
	
	It is easy to see that either all nodes of a row output $\bot$, in which case the whole row solves $\Pi_\goodturing$, or all nodes output error pointers, in which case the whole row solves $\Pi_\badturing$.
\end{proof}

\subsubsection{Consensus Problem}

Finally, we define the consensus LCL problem as follows:
\[
  \Pi_\consensus = (\{\bot\},\{\bot\},\{\accept, \reject\},\{\bot\},\CC_\consensus).
\]

The constraint $\CC_\consensus$ simply imposes that those regions of the grid where a proper encoding of a Turing machine computation is found, the output of all nodes is consistent.
Moreover, if the Turing machine halts, then the output will coincide with the halting state of the Turing machine.
Here, like in the definition of $\CC_\goodturing$, input labels $\north$, $\south$, $\east$ and $\west$ on edges refer to the labels on $\Pi_\grid$, and input labels $\Sigma_{\operatorname{tape}}\times\Sigma_{\operatorname{head}}\times\Sigma_{\operatorname{state}}$ on nodes refer to input labels on $\Pi_\turing$.
\begin{framed}
  \noindent\textbf{Constraints \(\CC_\consensus\)}

  \noindent
  Let node \(u\) have input \((b,h,q)\in\Sigma_{\operatorname{tape}}\times\Sigma_{\operatorname{head}}\times\Sigma_{\operatorname{state}}\) and let \(w=f(u,\west)\), \(v=f(u,\east)\), \(e_w=(u,w)\) and \(e_v=(u,v)\), if they exist.
  \begin{enumerate}
    \item \label{item:constTuringConsensus-1}
    \textbf{The output of a node in a final state is the symbol on the tape}\\
    If \(q \in \{\accept, \reject\}\), then $u$ outputs \(q\).

    \item \label{item:constTuringConsensus-2}
    \textbf{Outputs must be the same for all nodes in the same row}\\
    If node \(u\) outputs $x$, then both nodes \(w\) and \(v\) must output $x$, if they exist.

    \item \label{item:constTuringConsensus-3}
    \textbf{Outputs propagate upwards in a column}\\
    If node \(u\) outputs $x$, then node \(f(u,\north)\) must output $x$, if it exists.
  \end{enumerate}
\end{framed}

\subsection{Separation}

The final LCL we defined in the last section is
\[
    \Pi = \Pi_\tree \otimes \Pi_\row \otimes \Pi_\grid \otimes \Pi_\turing \otimes \Pi_\consensus .
\]
Now that $\Pi$ is defined, we may present the actual separation result.
First we deal with the upper bound part.

\begin{lemma}
  There is an \ulocal algorithm that solves $\Pi$ with locality $O(\log
  n)$.
\end{lemma}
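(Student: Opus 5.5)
We build the algorithm in two stages. First we run the $O(\log n)$-locality algorithm of \cref{lem:turing-algorithm}, which solves $\Pi_\tree \otimes \Pi_\row \otimes \Pi_\grid \otimes \Pi_\turing$. After this, every node is either in the bad part of some layer, where it outputs $\bot$ in the consensus component, or it lies in a region certified by $\Pi_\goodturing$. By \cref{cor:turing-components}, each such region is a connected block of consecutive rows of a growing grid encoding a correct execution of $U$. It remains to choose the consensus bit for nodes in these good regions.

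For the consensus layer we use an uncomputable map $H$. Given a configuration $C$ of $U$, $H$ returns $\accept$ if $U$ started from $C$ eventually halts in $\accept$. It returns $\reject$ if $U$ halts in $\reject$, and returns $\reject$ (say) if $U$ never halts. A node $u$ in a good region first gathers its entire row $r$ using the attached tree, which takes $O(\log n)$ rounds by \cref{lem:log-diameter-of-grid-tree} and \cref{alg:certify_rows}. It then reads off the configuration $C_r$ encoded by $r$; \cref{lem:row-is-TM-tape} guarantees that $r$ represents a tape of $U$. Finally, $u$ outputs $H(C_r)$.

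In a ball-growing formalism, this is a single function $f$ of the radius-$O(\log n)$ view. It is well defined because every node of $r$ sees the same row, and determining whether the view is complete only requires seeing the leftmost and rightmost nodes of the row, as in \cref{tree_certification_algm}. The locality is therefore $O(\log n)$.

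Correctness is checked against the three constraints of $\CC_\consensus$.
\begin{itemize}
\item \textbf{Row consistency.} All nodes in a row compute the same $H(C_r)$, so constraint~\ref{item:constTuringConsensus-2} holds.
\item \textbf{Column propagation.} Consider consecutive good rows $r_1$ and $r_2$. By \cref{lem:TM-computation-step}, $C_{r_2} = \delta(C_{r_1})$. Since $U$ is deterministic, its eventual fate from $C_{r_1}$ equals its fate from $C_{r_2}$, so $H(C_{r_1}) = H(C_{r_2})$ and constraint~\ref{item:constTuringConsensus-3} holds. A subtlety is the boundary between a good row and an adjacent bad row. We need that constraint~\ref{item:constTuringConsensus-3} is only imposed when both endpoints carry non-$\bot$ consensus outputs; by the composition semantics, edges touching $\bot$ are disregarded.
\item \textbf{Halting state.} Suppose a node in a good row holds $q \in \{\accept,\reject\}$. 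Then $C_r$ is already halting, and since halting states are absorbing, $H(C_r) = q$. This gives constraint~\ref{item:constTuringConsensus-1}.
\end{itemize}

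The main obstacle is making sure the good regions really are unions of whole consecutive rows, so that the vertical consistency argument applies. Concretely, we must confirm that \cref{lem:turing-algorithm} marks a row good exactly when all its nodes satisfy $\CC_\goodturing$, including the transition check against the next row. Otherwise $C_{r_2} = \delta(C_{r_1})$ could fail between two good rows.

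We must also handle a last good row whose south neighbor row is bad. In that case the transition constraints referring to $f(u,\south)$ would fail, so the row would itself be bad; this is consistent with the argument above. We plan to verify this carefully against the "if $f(u,\south)$ exists" phrasing of $\CC_\goodturing$.
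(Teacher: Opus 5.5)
Your proposal is correct and follows essentially the same route as the paper. You certify the structure with \cref{lem:turing-algorithm}, let each good row read off its configuration through its attached tree in $O(\log n)$ rounds, and output the oracle's halting state (defaulting to $\reject$), with consistency between consecutive good rows coming from the determinism of $U$. One inaccuracy, which your argument never uses: your closing claim that a good row whose south row is bad must itself be bad is false in general, because the south row may be bad solely because of its own transition to the row below it; as you already observed, the composition semantics discards the consensus constraints across any boundary with a $\bot$ row, so nothing breaks.
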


\begin{proof}
  Since we are designing an \ulocal algorithm, we may as well assume that
  we have an oracle $H$ that, given a configuration $c$ of our universal Turing
  machine $U$, returns the state $H(c)$ in which $U$ halts when starting from
  $c$ or some special symbol to indicate $U$ does not halt.
  
  By \cref{lem:turing-algorithm}, we have a \clocal algorithm that solves
  \[
    \Pi_\tree \otimes \Pi_\row \otimes \Pi_\grid \otimes \Pi_\turing
  \]
  with locality $O(\log n)$.
  
  It therefore suffices us to consider here only the components which satisfy $\Pi_\goodturing$; by \cref{cor:turing-components}, these components form a correct execution of the universal Turing machine $U$.

  As all nodes of each row are within distance $O(\log n)$ from each other with the help of the trees, we can operate each row as a single entity which is controlled by the west-most node of the row.
  From the input labels, the row can then determine the configuration $c$ that is encoded in the row and query $H$.
  After doing so, it sets the output for the nodes it controls to $b$, where
  \[
    b = \begin{cases}
      H(c), & \text{$U$ halts when starting from $c$} \\
      \reject, & \text{otherwise.}
    \end{cases}
  \]
  Since all rows use the same oracle $H$ and $H(c) = H(c')$ if $c'$ is a
  predecessor configuration of $c$ (and $U$ is deterministic), the output of all
  rows is consistent across the same connected component: if $U$ halts when
  starting from the very first configuration $c_0$ in the component, then all
  nodes output $H(c_0)$; otherwise, $U$ does not halt starting from $c_0$,
  and hence all nodes output $\reject$.
\end{proof}

The lower bound part of the separation is as follows:

\begin{lemma}
  \label{lem:separation-lb}
  Any \clocal algorithm that solves $\Pi$ requires $\Omega(\sqrt{n})$
  locality.
\end{lemma}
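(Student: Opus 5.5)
We argue by contradiction via diagonalization. Suppose a \clocal algorithm $\AA$ (a computable ball-growing map $f$) solves $\Pi$ with locality $T(n) = o(\sqrt{n})$. We work on valid instances $G_k$: a growing grid with $k$ rows, each row equipped with its perfect tree, and inputs encoding a correct trace of $U$ for $k$ steps starting from a chosen initial configuration $c_x$ that depends on a string $x$. Row lengths are $\ell, \ell+1, \dots, \ell+k-1$ and each row's tree has $O(\ell+k)$ nodes, so with $\ell = \Theta(k)$ we have $n = \Theta(k^2)$ and $T(n) = o(k)$. Identifiers are assigned canonically, e.g.\ row by row in increasing order, so that the radius-$t$ view of a node in the first row depends only on $x$ and on the first $O(t)$ rows of the instance, and not on $k$. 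By the lemmas certifying the structure (\cref{lem:grid-algorithm}, \cref{lem:turing-algorithm} together with Lemma~4.4), every node in $G_k$ must output a good label on the certification layers, so its $\Pi_\consensus$ output must be $\accept$ or $\reject$. By $\CC_\consensus$ this bit is common to the whole component, and it must equal the halting state whenever $U$ halts from $c_x$ within $k$ steps.

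The key step shows that the output of the west-most node $v$ of row~$1$ is determined by a radius that does not depend on $k$, and that it can be computed. Since $T(n)=o(\sqrt n)$, there is $k_0$ such that $T(\Theta(k^2)) < k/2$ for all $k\ge k_0$. Then for every $k \ge k_0$ the node $v$ halts before it sees row $k/2$, so its view at the halting radius is identical in $G_k$ and $G_{k'}$ for all large $k'$. Hence the output bit $b(x)$ is well defined independently of $k$, provided the view does not include the bottom boundary.

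We then define a Turing machine $T_\AA$. On input $x$ it builds the top rows of $G_k$ for increasing radius $t$ and evaluates $f(\NN^t(v),t)$ on the canonical neighborhood until it obtains a non-$\bot$ value. It outputs the consensus bit of that value. Because $f$ is computable and $\AA$ halts on $G_{k_0'}$ for a suitable $k_0'$, $T_\AA$ always halts. This is the step I expect to be the main obstacle. The question is whether $T_\AA$ can find a sufficiently large $k$ without knowing $k_0$ or $T$, and whether it can be sure the explored view is truly a prefix of some valid $G_k$.

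My plan is to avoid needing $k_0$ altogether. $T_\AA$ simulates $v$ on the growing ball of a conceptually infinite instance (the top $t+1$ rows plus their trees, truncated so that they look unbounded), taking $x$ into account, so that every finite view is also a view in some finite $G_k$. $\AA$ must stop at some finite radius $t^*$ on each $G_k$. Using $T(n)=o(\sqrt n)$, the views in $G_k$ for $k > 2t^*$ coincide with the infinite-instance views. Therefore the search terminates, and the resulting bit agrees with $\AA$'s output on all sufficiently large $G_k$.

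Finally we diagonalize. Let $D$ be the machine that, on input $y$, simulates $T_\AA(x_y)$ where $x_y$ encodes ``run $D$ on $y$'', and then enters $\reject$ if the result is $\accept$, and $\accept$ otherwise. Concretely, we set $x = (\descr{D},\descr{T_\AA})$, using the recursion theorem or self-reference via $\descr{D}$ so that $c_x$ is the configuration of $U$ simulating $D$ on this input. Since $T_\AA$ halts, $D$ halts after some finite number $s$ of steps of $U$, in the state opposite to $b(x)$. Choose $k \ge \max(k_0, s+1)$, with $\ell$ large enough that the tape fits. Then the trace in $G_k$ contains the halting configuration, so $\CC_\consensus$ forces every node to output the halting state, whereas $v$ outputs $b(x)$, which is the opposite. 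This contradicts correctness, so $T(n) = \Omega(\sqrt n)$.
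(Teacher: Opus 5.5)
Your argument is, in substance, the paper's diagonalization, but you implement the key computability step differently.

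\textbf{Comparison with the paper.} The paper's machine searches over $k$ for an instance $G_k$ on which the layer-$m$ nodes halt without seeing layer $k$, and then simulates $\AA$ on that finite network. You instead evaluate $f$ on the growing balls of one node $v$ in a single canonical infinite instance $G_\infty(x)$ until it returns a non-$\bot$ value.

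Your route removes the search, and with it the paper's claim that the search returns exactly $k_0$, which is not needed anyway. Termination is justified by fixing any one $k\ge k_0$. In $G_k$, $v$ halts at some radius $t_k<k/2$, and its balls of radius at most $t_k$ coincide with those in $G_\infty(x)$. So the search stops at $t_k$ with the same output that $v$ produces on every $G_{k'}$ with $k'\ge k$. The paper's route, in exchange, only ever runs $\AA$ on genuine finite networks.

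Your instances (the trace of $U$ starting in row~$1$, of length $\ell$) replace the paper's single-node first row with filler rows above layer $m=\lvert x\rvert$. Your diagonalization takes $x$ to be the $U$-input meaning ``run $D$ on $y$'', so that the halting state reached from $c_x$ is $D(y)$. This is more careful than the paper's version, which tacitly identifies the run of $U$ on $x_T$ with $D(x_T)$.

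\textbf{A setup error you must fix.} You cannot take $\ell=\Theta(k)$. The first row of $G_k$ would then depend on $k$. Since that row carries a tree of depth $O(\log \ell)$, the node $v$ sees the entire first row, and hence essentially $k$, within $O(\log k)$ rounds. That is well inside an $o(\sqrt n)$ budget. So your claim that $v$'s view depends on $x$ but not on $k$ fails, and there is no single $G_\infty(x)$ whose prefixes are the $G_k$.

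Instead, let $\ell$ be a computable function of $x$ alone, e.g.\ $\ell=\lvert x\rvert$. The tape never overflows, since each row is one cell longer than the one above and the head moves at most one cell per step. You still get $n=\Theta(k\ell+k^2)=\Theta(k^2)$, and $T_\AA$ can compute $\ell$ when it builds $G_\infty(x)$.

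\textbf{A smaller overstatement.} Not every node of $G_k$ is forced into good certification labels. Bottom-row nodes with an $\east$ edge cannot close the $4$-cycle required by rule~3 of $\CC_\goodgrid$, which is why the paper exempts layers near $k$. Take $k\ge s+2$, so that the halting configuration lies in a row that is forced to be good. From there $\CC_\consensus$ propagates the halting state up to $v$.
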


To prove this, we need the following elementary diagonalization strategy:

\begin{lemma}
  \label{lem:diagonalization}
  There is a Turing machine $D$ such that, for every Turing machine $T$, the
  following holds:
  \begin{enumerate}[noitemsep]
    \item If $T$ does not halt on the input $x_T = (\descr{D}, \descr{T})$, then
    $D$ also does not halt on $x_T$.
    \item Otherwise $D$ and $T$ disagree on $x_T$, that is, we have $D(x_T) \neq
    T(x_T)$.
  \end{enumerate}
\end{lemma}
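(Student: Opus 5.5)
We construct $D$ explicitly as a machine that parses its input as a pair and simulates the second component on the pair itself. On input $x$, $D$ first checks whether $x$ has the form $(\descr{D'}, \descr{T})$ for some machine descriptions; the precise behaviour on malformed inputs is irrelevant. Then $D$ simulates $T$ on the input $x$ itself, using a universal machine.

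If the simulation halts, $T$ has ended in $\accept$ or $\reject$, since by our convention $F = \{\accept, \reject\}$ are the only halting states. $D$ then halts in the opposite state: it rejects if $T$ accepted and accepts if $T$ rejected. If the simulation never halts, $D$ never halts either.

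The only subtle point is that $D$ does not need to know its own description. It never has to compute $\descr{D}$; it simply copies whatever first component it receives and passes the whole input $x$ to $T$. So no recursion theorem is needed. With $x_T = (\descr{D}, \descr{T})$, the machine $D$ on $x_T$ literally runs $T$ on $x_T$.

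Both items then follow directly from this construction.
\begin{itemize}
\item If $T$ does not halt on $x_T$, the simulation never ends, so $D$ does not halt on $x_T$.
\item If $T$ halts on $x_T$, it halts in one of the two states $\accept$ or $\reject$, and $D$ outputs the other one. Hence $D(x_T) \neq T(x_T)$.
\end{itemize}

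I expect no real obstacle. The only care needed is that "disagree" is well defined because halting outputs are exactly $\{\accept,\reject\}$, and that the description pairing is computable. This is why the paper insists that $U$, and hence every simulated machine, has both halting accept and reject states.
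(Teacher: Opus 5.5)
Your proof is correct, and its core is the same as the paper's: $D$ simulates $T$ on $x_T$ and, if $T$ halts, outputs the opposite value. The difference is in self-reference. The paper first gives $D$ access to its own description via quines (the recursion theorem). You observe that this is unnecessary, because the input $x_T = (\descr{D}, \descr{T})$ already contains everything $D$ needs, so $D$ can simply forward its whole input to the simulation of $T$.

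Your $D$ is defined with no self-reference at all. It even satisfies both items on every input of the form $(y, \descr{T})$, whatever $y$ is. This is a more elementary argument, and it fully proves the lemma as stated. Your use of the halting states $\accept$ and $\reject$ instead of the paper's bit $1 - T(x_T)$ is just an equivalent encoding.

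The paper's self-reference would matter under a different interface. Suppose $D$ were handed only $\descr{T}$, as happens when the universal machine $U$ on input $(\descr{M}, w)$ simulates $M$ on $w$ alone, so that $U$ on $x_T$ computes $D(\descr{T})$. Then $D$ would have to build $(\descr{D}, \descr{T})$ itself, and the recursion theorem would be genuinely needed.

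Your version suffices for the application in \cref{lem:separation-lb} under the convention implicit in the sentence ``the output of $U$ on $x_T$ is identical to $D(x_T)$''. That convention is that $U$ runs the encoded machine on the full pair.
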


\begin{proof}
  We describe the operation of a Turing machine $D$.
  Using standard techniques from computability theory (i.e., so-called
  \enquote{quines}; see, e.g., \cite{sipser13_introduction_book}), we may assume
  that $D$ has access to its own description.
  Given $x_T$, $D$ simply does the following:
  \begin{enumerate}[noitemsep]
    \item Simulate $T$ on $x_T$.
    \item If $T$ halts, then output $1 - T(x_T)$.
  \end{enumerate}
  Hence, if $T$ does not halt, then so will $D$ not halt.
  Otherwise $D$ outputs exactly the opposite of what $T$ outputs on $x_T$.
\end{proof}

As usual in computability theory, it might seem at first confusing what results
from executing $D$ on the input $x_D$; however, notice that the description is
consistent and this just results in an infinite recursion loop.

The roadmap for the proof is as follows:
First we define a sequence of instances $G_k$ for $k \in \N_+$, in particular
setting identifiers on all the nodes.
Then, assuming we have a \local algorithm $\mathcal{A}$ that runs in
$o(\sqrt{n})$ time and is computable, we can derive a Turing machine $T$ that
simulates $\mathcal{A}$ and outputs whatever it does.
Finally, using \cref{lem:diagonalization}, $D$ gives us an adversarial instance
that $\mathcal{A}$ fails to solve correctly.

\begin{proof}[Proof of \cref{lem:separation-lb}]
  Let us first fix a sequence of instances $(G_k)_{k \in \N_+}$.
  The specification of the instance $G_k$ is as follows:
  \begin{itemize}
    \item $G_k$ consists of a single connected component that contains a
    growing grid (with tree structures placed on top of it) over exactly $k$
    rows, where the first row is a single node.

    \item We refer to the bundle of a row together with the tree structure on
    top of it as a \emph{layer} of $G_k$.
    The layers are numbered from $1$ to $k$ from top to bottom.
    Hence the $i$-th layer has $i$ nodes in its row part.

    \item We place identifiers in increasing order of layer, following some
    arbitrary but fixed scheme to assign identifiers within each layer.
    Hence the largest identifier is identical to the number of nodes $n$ and
    also every node in layer $i$ has an identifier that is strictly less than
    the identifiers appearing in a subsequent layer $j > i$.

    \item The inputs are chosen so that all nodes, except for nodes on constantly-many layers close to layers $1$ and $k$, must solve $\Pi_\goodgrid$.
    In addition, we define a binary \emph{input} $x$ with length $m \le k$ to $G_k$.
    Let $t_x \in \N_0 \cup \{ \infty \}$ be the number of steps that $U$ runs
    for when given input $x$.
    Then $x$ defines the rest of the inputs to the nodes in $G_k$ as follows:
    \begin{itemize}
      \item For the row nodes that are in a layer preceding the $m$-th one, the
      input is $o$ throughout the row.
      \item For the row nodes in the $m$-th layer, place $x$ written from left
      to right on that row, with the Turing machine head placed on the first
      symbol of $x$.
      \item For the row nodes that are in a layer subsequent to the $m$-th one,
      we extend the configuration inductively starting from the $m$-th layer by
      applying the transition function specified by $U$.
      Hence the $m'$-th layer is the $(m'-m)$-th configuration that $U$ reaches
      starting from $m$.
      If $m' > t_x$, then we place the halting configuration across the row.
    \end{itemize}
  \end{itemize}
  Notice that $G_k$ has $\Theta(k^2)$ nodes.
  Furthermore, we have that $G_k$ and $G_{k'}$ are identical in the first
  $\min\{k,k'\}$ layers.

  We now fix a computable distributed algorithm $\mathcal{A}$ with locality $o(\sqrt{n})$.
  Observe that, together with how we constructed $G_k$, this implies we can make
  the following statement:
  For every given $m \in \N_+$, there is a \emph{constant} $k_0 = k_0(m) > m$
  such that, for any $k \ge k_0$, when we run $\mathcal{A}$ on $G_k$ with any
  input $x$ of our choosing, all the nodes in the $m$-th layer of $G_k$ stop
  their operation and produce an output within the first $k-m-1$ rounds.
  In particular, this means that the nodes in layer $m$ do not see farther than
  the first $k-1$ layers in $G_k$ even as $k$ goes to infinity.
  Thus the number of rounds it takes for these nodes to commit to an output is
  \emph{constant} and independent of $k$, at least for large enough values of
  $k$.
  For the sake of uniqueness, we fix $k_0$ to be the least such value for which
  this statement holds.

  In addition to the above, we assume that $\mathcal{A}$ always produces a
  consistent output to $\Pi_\turing$ in each layer (i.e., there are no
  disagreements concerning the output bit) since otherwise the claim is trivial.

  With this in place, we now define a Turing machine $T$ as follows:
  \begin{enumerate}
    \item Given an input $x$ of length $m$, iterate over $k$ to determine the
    least $k > m$ for which, when we run $\mathcal{A}$ (which is computable) on
    $G_k$, the nodes in the $m$-th layer of $G_k$ do not see farther than the
    first $k-1$ layers, no matter what input $x$ is chosen for $G_k$.
    By construction, the value computed by $T$ will equal $k_0$.
    \item Construct $G_{k_0}$ with input $x$.
    \item Simulate the operation of $\mathcal{A}$ on $G_{k_0}$ and determine
    what bit $b$ the nodes of $\mathcal{A}$ produce in their output to
    $\Pi_\turing$ in the $m$-th layer.
    \item Output $b$.
  \end{enumerate}
  Note that the first step always terminates due to the existence of $k_0$, and
  so $T$ always halts.

  Finally, let $D$ be the Turing machine from \cref{lem:diagonalization}, $x_T =
  (\descr{D}, \descr{T})$, and $m = \abs{x_T}$.
  Consider the behavior of $\mathcal{A}$ on the instance $G_{k_0}$ with input
  $x_T$.
  Since $T$ always halts, then so does $D$ on input $x_T$.
  Whatever bit the nodes of $\mathcal{A}$ output in the $m$-th layer of
  $G_{k_0}$ is, by construction of $T$, identical to $T(x_T)$.
  However, the output of $U$ on $x_T$ is identical to $D(x_T) = 1 - T(x_T)$.
  Hence $\mathcal{A}$ fails to solve the problem $\Pi$.
\end{proof}

This concludes the proof of \cref{thm:separation}.

\section{Equivalence Results}\label{section:equivalence_results}
In this section we prove our equivalence results. First, we show that once nodes are provided with any global upper bound~$N$ on the network size (and optionally a promise of the form $N \leq f(n)$ for a fixed computable~$f$), the gap between computable and uncomputable \local vanishes: every \ulocal $T$-round algorithm can be converted into a
\clocal algorithm with the same locality.
Second, we show that in uncomputable \local the bound~$N$ is in fact unnecessary: from a family $\{\AA_N\}$ of $T$-round algorithms that work for all $n \leq N$, we construct a single bound-free algorithm that still runs in at most $T(n)$ rounds and agrees with $\AA_N$ for a suitable choice of~$N$. Both arguments are organized around the notion of \emph{maximally $T$-safe neighborhoods} that we describe in the following subsection.

\subsection{Maximally Safe Neighborhoods---and How to Find Them}\label{sec:maximally_safe_neighs}
Here we discuss the concept of maximally safe neighborhood, a useful tool that we will use in our equivalence proofs.
Note that the concept depends only on a given complexity function $T$ and is
\emph{independent} of any particular algorithm under consideration.

\begin{definition}
  Let $T\colon \N_+ \to \N_0$ be a function and $\mathcal{G}$ a class of graphs.
  For a network $(G,x,\ddeg,\id,\{p_v\})$ with $G \in \mathcal{G}$ and a node $v
  \in V(G)$, we say $t \in \N_0$ is \emph{$T$-safe} for $v$ in $\mathcal{G}$ if
  there is no network $(G',x',\ddeg',\id',\{p_v'\})$ with $G' \in \mathcal{G}$
  and $v' \in V(G')$ such that the $t$-neighborhoods rooted at $v$ and $v'$ are
  isomorphic and $t > T(\abs{V(G')})$.
  Moreover, we say $t$ is \emph{maximally $T$-safe} for $v$ if it is maximal
  with this property (i.e., $t$ is safe but $t+1$ is not).
  Without ambiguity, we extend these notions to the rooted neighborhoods
  themselves; that is, if $t$ is (maximally) $T$-safe for $v$, then we say that
  $\NN^t(v)$ is also \emph{(maximally) $T$-safe}.
\end{definition}

Note that, if we fix $v$, then there is always a maximally $T$-safe $t$.
This is because we may always pick $G' = G$, $x' = x$, $\id' = \id$, and $v' =
v$ in the condition above, in which case the only non-trivial requirement is
that $t > T(\abs{V(G)})$, which must hold for all but finitely many $t$.

\begin{lemma}
  \label{lem:max-safe-re}
  Let $T$ be non-decreasing and computable, and let $\mathcal{G}$ be an
  enumerable class of graphs.
  Then, given a rooted neighborhood $X$ of radius $t \in \N_0$, it is
  decidable if $X$ is maximally $T$-safe or not.
\end{lemma}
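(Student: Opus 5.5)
The plan is to reduce both safety and maximal safety of a rooted neighborhood $X$ of radius $t$ to a single computable quantity: the minimum number of nodes of a network in $\mathcal{G}$ that realizes $X$. Define $n_{\min}(X)$ as the least $n$ such that some network $(G',x',\ddeg',\id',\{p'_v\})$ with $G' \in \mathcal{G}$, $\abs{V(G')} = n$, and a node $v'$ has $\NN^t(v')$ isomorphic to $X$. If no realization exists, $X$ does not arise in $\mathcal{G}$ and the question is vacuous; I would restrict to realizable inputs, which are the only ones the later arguments need.

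\textbf{Step 1: characterize safety through $n_{\min}$.} By definition, $X$ is $T$-safe iff every realization $G'$ satisfies $T(\abs{V(G')}) \ge t$. Since $T$ is non-decreasing, this holds iff $T(n_{\min}(X)) \ge t$.

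\textbf{Step 2: characterize maximal safety.} Since $X$ alone does not fix the $(t+1)$-neighborhood of the root, I read ``$X$ is maximally $T$-safe'' as: $X$ is safe, and for some realization the radius-$(t+1)$ view is unsafe. Under this reading I claim that $X$ is maximally $T$-safe iff $T(n_{\min}(X)) = t$.
\begin{itemize}[noitemsep]
\item Suppose $T(n_{\min}(X)) = t$. Take a minimum-size realization $G'$ with root $v'$. Its $(t+1)$-view is realized by $G'$ itself, and $t+1 > T(\abs{V(G')})$, so $t+1$ is not safe for $v'$. Safety of $t$ holds by Step 1.
\item Conversely, suppose $t$ is safe and $t+1$ is unsafe for some realizing root $v$. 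Then some network $G''$ contains a node whose $(t+1)$-neighborhood matches that of $v$ and with $T(\abs{V(G'')}) \le t$. That node's $t$-neighborhood is also isomorphic to $X$, so $G''$ realizes $X$. Hence $n_{\min}(X) \le \abs{V(G'')}$, and monotonicity gives $T(n_{\min}(X)) \le t$. Combined with Step 1, $T(n_{\min}(X)) = t$.
\end{itemize}

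\textbf{Step 3: compute $n_{\min}(X)$.} Enumerate $n = 1, 2, \dots$ and, for each $n$, the networks on $n$ nodes. There are finitely many of them: degree at most $\Delta$, finitely many input labels, identifiers in $[n^c]$, and finitely many port numberings. Membership in $\mathcal{G}$ is semi-decided using the enumeration of $\mathcal{G}$. For each candidate network and each node, compute the radius-$t$ neighborhood and test isomorphism with $X$; isomorphism of finite labeled networks is decidable by brute force. The first $n$ with a hit is $n_{\min}(X)$. Finally, compute $T(n_{\min}(X))$, which is possible because $T$ is computable, and compare it with $t$.

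\textbf{Main obstacle.} The delicate point is making Step 3 an actual decision procedure. For each fixed $n$, the search must be exhaustive and finite; this needs $\mathcal{G}$ to be decidable, not merely enumerable. For the class of degree-$\le\Delta$ graphs it is trivially decidable, so I would either assume decidability of $\mathcal{G}$ or interleave the enumeration of $\mathcal{G}$ with size bounds. Termination relies on $X$ being realizable. I would also double-check that the precise notion of neighborhood, where edges between nodes at distance exactly $t$ are omitted, is preserved when passing from $(t+1)$-views to $t$-views in Step 2.
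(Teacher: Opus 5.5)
Your Steps 1 and 2 are correct. The characterization ``$X$ is maximally $T$-safe iff $T(n_{\min}(X)) = t$'' is a different and somewhat sharper route than the paper's. The paper computes an $N$ with $t < T(N)$ and argues only that any counterexample to the \emph{safety} of $t$ lives on fewer than $N$ nodes, leaving the ``$t+1$ is not safe'' half implicit. Your Step 2 handles that half cleanly. The passage from $(t+1)$-views to $t$-views is also fine, since all distances up to $t+1$ from the root are preserved inside $\NN^{t+1}(v)$.

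The gap is in Step 3. The lemma asks for a procedure that halts on \emph{every} rooted neighborhood $X$. When $X$ has no realization in $\mathcal{G}$, this is not a vacuous case you may discard: the correct answer is ``no'', because maximal safety of $X$ is defined through some node $v$ with $\NN^t(v) \cong X$. Your unbounded search for $n_{\min}(X)$ never halts on such $X$. For a general decidable class you also cannot first test realizability. Put into $\mathcal{G}$ a bounded-degree encoding of $\langle M\rangle$ followed by a path of length $s$ exactly when $M$ halts in $s$ steps; this class is decidable, but realizability of the view around the encoding is not. So as written you prove only the restriction to realizable inputs, which is admittedly all the later theorems use.

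The missing ingredient is exactly the paper's cutoff. By your own Step 2, a ``yes'' needs a realization on $n$ nodes with $T(n) = t$. So iterate $n = 1, 2, \dots$ as follows:
\begin{itemize}
\item At the first $n$ carrying a realization, answer ``yes'' iff $T(n) = t$.
\item If you reach an $n$ with $T(n) > t$ before finding any realization, answer ``no''. By monotonicity every later realization would have $T > t$.
\end{itemize}
This halts for arbitrary $X$ whenever $T$ is unbounded.

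If $T$ is bounded, no cutoff exists for $t \ge \sup T$. There you need finite information about $T$, as the paper hard-codes, and for $t = \sup T$ also a realizability test. For the concrete class of graphs of maximum degree $\Delta$, that test is available directly:
\begin{itemize}
\item check local consistency of degree labels, ports and identifiers in $X$;
\item fill each boundary node's missing degree with fresh pendant neighbors;
\item pad with isolated nodes until $n^c$ covers the identifiers.
\end{itemize}
None of this changes the radius-$t$ view. With this pre-check, your procedure becomes total for every computable non-decreasing $T$, without the paper's case split.

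Your remark that $\mathcal{G}$ must be decidable, not merely enumerable, is correct. It applies equally to the paper's enumeration of all graphs in $\mathcal{G}$ on fewer than $N$ nodes.
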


Since $T$ is non-decreasing, it can be argued that there are only finitely many
counter-examples to test in order to check whether $X$ is maximally $T$-safe
or not.
That being said, we must first deal with some technicalities concerning the
function $T$.

\begin{proof}
  Let $X$ be given.
  Since $T$ is non-decreasing, there are two options for its behavior relative
  to the value $t$:
  \begin{enumerate}[noitemsep]
    \item Either there is $N \in \N_+$ such that $t < T(N)$; or
    \item There is a constant $C \in \N_0$ and a value $N_0 \in \N_+$ for which,
    for every $N' \ge N_0$, $t = T(N') = C$.
  \end{enumerate}
  In the latter case, $C$ is finite information that we can encode in our
  decision procedure, so we can assume it to be known and fixed (together with
  $T$).
  For convenience, we can imagine we have $C = \infty$ in the former case and
  then handle both cases simultaneously.
  Moreover, note that, since $T$ is computable, it is possible in the former
  case to compute a value $N$ with said property (i.e., $t < T(N)$).

  Now notice that, if $t$ is not maximally $T$-safe, then any network with $n$
  nodes that proves this (i.e., containing a rooted neighborhood $X'$
  isomorphic to $X$ that violates $t \le T(n)$) must be such that $n < N$.
  Hence we can execute the following strategy:
  \begin{enumerate}[noitemsep]
    \item If $t = C$, directly accept.
    \item Otherwise, enumerate over every graph $G \in \mathcal{G}$ on $n < N$
    nodes and over every node $v \in V(G)$.
    \item Reject if $X$ is isomorphic to the radius-$t$ neighborhood of $v$ in
    $G$ and $t < T(n)$.
    \item If the enumeration is complete and $X$ has not been rejected, then
    accept.
  \end{enumerate}
  Clearly, this procedure only rejects if it finds a counter-example to $X$
  being maximally $T$-safe.
  Moreover, as we have just reasoned, any such counter-example must come from a
  network on less than $N$ nodes (or, if $t = C$, then it is clear such a
  counter-example cannot exist).
  Hence the procedure correctly decides whether $X$ is maximally $T$-safe.
\end{proof}

\subsection{Equivalence Between Computable and Arbitrary}\label{sec:equivalence_computable_arbitrary}

\begin{theorem}
  \label{thm:equiv-comp-uncomp}
  Let $\Pi$ be an LCL problem, and let $T \colon \N_+ \to \N_0$ and $f \colon \N_+ \to \N \cup \{\infty\}$ be computable non-decreasing functions.
  If there exists an \ulocal algorithm solving $\Pi$ with locality $T$ and knowledge of an upper bound $N$ on the instance size such that $n \le N \le f(n)$, then there exists a \clocal algorithm solving $\Pi$ with locality $T$ and knowledge of an upper bound $N$ with the same assumptions.
\end{theorem}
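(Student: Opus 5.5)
The plan is to have the \clocal algorithm $\BB$, on input $N$, compute by exhaustive search a finite "lookup table" with which the \ulocal algorithm could have been replaced, and then consult that table. Fix $N$ and let $\mathcal{I}_N$ be the set of all networks $(G,x,\ddeg,\id,\{p_v\})$ with $G\in\GG$, $|V(G)|=n$, $n\le N\le f(n)$ and identifiers in $[n^c]$. This set is finite up to isomorphism and computably enumerable from $N$, because $\GG$ has bounded degree and the label sets are finite. Since nodes know $N$, we may regard the ball-growing function as $f_N(\cdot,\cdot)$, indexed by the known $N$. The \ulocal algorithm $\AA$ gives, for this $N$, a map $f^{\AA}_N$ such that on every network in $\mathcal{I}_N$ each node $v$ stops at some radius $t_v\le T(n)$, and the resulting outputs satisfy $\Pi$.

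The key step is to show that a \emph{valid stopping rule together with a labeling} can be found by finite search. Consider the finite set $\mathcal{R}_N$ of rooted neighborhoods $\NN^t(v)$ with $t\le T(N)$ that occur in networks of $\mathcal{I}_N$; it is computable since $T$ is computable. A candidate is a partial function $g\colon \mathcal{R}_N\to \Sigma_\out\cup\{\bot\}$, and there are finitely many of them. A candidate is \emph{valid} if two conditions hold on every network of $\mathcal{I}_N$, with every node $v$ stopping at the least $t$ with $g(\NN^t(v))\ne\bot$. First, every node stops at some $t\le T(n)$, where $n$ is the size of that network. Second, the resulting labeling satisfies the LCL constraints $\CC$. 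Validity is decidable, since it is a finite check over finitely many networks and $\CC$ is decidable. Restricting $f^{\AA}_N$ to $\mathcal{R}_N$ gives a valid candidate, so one exists. Then $\BB$ computes $\mathcal{R}_N$, enumerates the candidates in a fixed computable order and takes the first valid one $g_N$. On its actual neighborhood it outputs $g_N(\NN^t(v))$ at the first radius where this is not $\bot$. Because the real network belongs to $\mathcal{I}_N$, correctness and locality $T(n)$ follow directly from validity. All of this is computable in $N$ and the local view.

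The main obstacle I expect is keeping the stopping rule consistent across networks of different sizes that share a view, so that locality is $T(n)$ and not merely $T(N)$. A node at radius $t$ cannot tell which $n$ it is in, and the rule "stop at $T(n)$" is not itself locally implementable. This is why I quantify validity over all of $\mathcal{I}_N$ simultaneously and require stopping by $T(n)$ separately in each network, rather than building the rule from maximally $T$-safe neighborhoods. Existence is then inherited from $\AA$, and there is no need to re-derive a stopping rule. If one prefers the paper's framing, the same argument can be phrased as assigning outputs to the finitely many maximally $T$-safe views (computable by \cref{lem:max-safe-re} with $\mathcal{G}$ restricted to $\mathcal{I}_N$); the searched object is the same. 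A second subtlety is the case $f=\infty$, i.e.\ only $n\le N$ is promised. There $\mathcal{I}_N$ simply drops the lower constraint and remains finite, so the argument is unchanged.
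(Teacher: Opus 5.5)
Your proof is correct, but it takes a different route from the paper at the key step.

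\textbf{The paper's route.} The paper fixes an algorithm-independent stopping rule: each node grows its view until it is maximally $T$-safe, recognized via \cref{lem:max-safe-re}. It then searches only over maps $g\colon\mathcal{M}\to\Sigma_\out$ from maximally safe views to output labels. Existence of a good $g^\star$ therefore needs a separate argument: $\AA$ has already halted by the maximally safe radius $t$. This comes from a network $G'$ witnessing that $t+1$ is unsafe, on which $\AA$ would otherwise exceed $T(|V(G')|)$.

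\textbf{Your route.} You search jointly over stopping rules and outputs, i.e.\ maps into $\Sigma_\out\cup\{\bot\}$ on all views of radius at most $T(N)$. You then check $t_v\le T(n)$ network by network. So existence is literally the restriction of $f^{\AA}_N$, and no notion of safety is needed.

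\textbf{What each buys.} The paper's route gives a canonical normal form: the output depends only on the maximally safe view, and the stopping radius does not depend on the algorithm. This is the same sandboxing device reused in \cref{thm:eliminate-upper-bound}. Yours is self-contained and avoids the delicate points of that machinery:
\begin{itemize}
\item Safety must be taken relative to the promise class $\mathcal{I}_N$. Otherwise the witness $G'$ need not be an instance on which $\AA$ is obliged to halt.
\item Whether $t+1$ is unsafe is a property of $\NN^{t+1}(v)$, not of $\NN^t(v)$. So recognizing the maximally safe radius locally within $T(n)$ rounds needs care.
\end{itemize}

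\textbf{Two small corrections.} First, your aside that the safe-view framing searches ``the same object'' is not accurate. There, existence is not inherited by restriction but needs the halting argument above. Second, key your candidates by pairs $(\NN^t(v),t)$ rather than by the view alone. On small graphs $\NN^t(v)=\NN^{t+1}(v)$, while $f^{\AA}_N$ may take different values at the two radii. The restriction of $f^{\AA}_N$ is therefore only well defined on pairs.
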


Given the existence of an \ulocal algorithm solving $\Pi$, we give a \clocal algorithm that also solves $\Pi$ with the same locality.
The idea of the computable algorithm is to construct all possible mappings from maximally $T$-safe neighborhoods to output labelings.
There are only finitely many maximally $T$-safe neighborhoods on at-most-$N$-node-graphs, and only finitely many output labels, so our computable algorithm can enumerate over all mappings from $T$-safe neighborhoods to output labels.
As we know that the uncomputable algorithm exists, we know that this process must find at least one valid mapping.

\begin{proof}
  Let $\mathcal{A}$ be the \ulocal algorithm solving $\Pi$ with the knowledge of an upper bound $N$ on the size of the instance.
  We provide a \clocal algorithm that solves $\Pi$ with the knowledge of an upper bound $N$ on the size of the instance.
  The algorithm works in two phases: it first locally finds a valid mapping from input neighborhoods to outputs, and then applies this mapping.

  The algorithm starts phase one by constructing the set of all valid input
  networks $\mathcal{G}$ such that for every $(G, x, \deg, \id,\{p_v\}) \in \mathcal{G}$
  it holds that $|V(G)| \le N \le f(|V(G)|)$.
  Set $\mathcal{G}$ is finite as we have an upper bound on the size of each network, the input labels come from a finite set, and we have an upper bound on the identifier space.
  The algorithm then collects all maximally $T$-safe neighborhoods $\mathcal{M}$ from these graphs; this is possible by \cref{lem:max-safe-re}.
  Finally, the algorithm enumerates over all possible mappings $g \colon \mathcal{M} \to \Sigma_{\mathrm{out}}$ from these neighborhoods to output labels; this is possible as the set of neighborhoods and the set of possible output labels is finite.
  For each possible mapping, the algorithm tests whether this mapping solves all of the constructed instances and stops once it finds a suitable mapping.
  We defer arguing why such mapping~$g^\star$ exists to the end of the proof.

  Now every node of the network knows the mapping $g^\star$.
  In the second phase, every node finds its maximally $T$-safe neighborhood by enumerating over all radii of the neighborhoods until it finds the maximally $T$-safe neighborhood; this is again possible by~\cref{lem:max-safe-re}.
  Every node then applies mapping $g^\star$ on this neighborhood and halts.
  It is clear that this solves $\Pi$ as $g^\star$ was chosen such that it works on all graphs that satisfy $|V(G)| \le N \le f(|V(G)|)$; in particular the real input instance is such a graph.

  The only thing left to argue is that such function $g^\star$ exists in the first place; if such a function exists, then our algorithm will find it by just enumerating over all of the possible functions.
  Assume now for contradiction that no such function exists.
  Then for every candidate mapping $g$ from maximally $T$-safe neighborhoods to output labels there must exist a network $(G, x, \deg, \id, \{p_v\}) \in \mathcal{G}$ that $g$ fails to label the graph correctly.
  In particular, this holds for the mapping of the \ulocal algorithm~$\mathcal{A}$: either the mapping produces an invalid solution, in which case algorithm~$\mathcal{A}$ doesn't work correctly, or the algorithm produces~$\bot$ to signify that the neighborhood is not large enough.
  Consider this neighborhood~$\mathcal{N}$ for which algorithm~$\mathcal{A}$ produces $\bot$.
  This neighborhood is maximal $T$-safe neighborhood by construction.
  Hence there exists some concrete network $(G', x', \deg', \id', \{p_v'\}) \in \mathcal{G}$ containing $\mathcal{N}$ such that the radius of $\mathcal{N}$ is $T(|G'|)$.
  Consider now executing algorithm~$\mathcal{A}$ on this network: $\mathcal{A}$ requires more than locality~$T(|G'|)$ to label this graph, which is a contradiction on the locality of~$\mathcal{A}$ being $T$.
  Hence we conclude that mapping~$g^\star$ must exist.
\end{proof}

\subsection{Equivalence Between Upper Bound and No Upper Bound on \texorpdfstring{$n$}{n}}\label{sec:equivalence_between-upper-no-upper}

Here we show that, under our assumptions, and  for computable running times, access to an upper bound on the network size does not provide any additional power to deterministic \ulocal algorithms. More precisely, we show that whenever an LCL problem $\Pi$ can be solved by a family of algorithms $\{\AA_N\}$ that know an upper bound $N$ on the number of nodes $|V(G)|$, there also exists an algorithm $\BB$ that solves the same $\Pi$ in the same round complexity without any knowledge of the network size. This establishes an equivalence between the two models: with or without upper bounds, the set of solvable LCL problems is the same.

To this end, fix a computable bound $T:\N_{+}\to\N_{0}$. We know that the set of maximally $T$-safe neighborhoods is enumerable by \cref{lem:max-safe-re}. Let $X_{1},X_{2},X_{3},\cdots$ be a fixed enumeration of all maximally $T$-safe neighborhoods.

\begin{theorem}\label{thm:eliminate-upper-bound}
	Let $\Pi$ be an LCL problem, $\{\AA_{N}\}_{N\in\N_{+}}$ a family of (possibly uncomputable) \local algorithms and $(G,x,\ddeg,\id,\{p_v\})$ a network such that, for every $N\geq n$,
	\begin{enumerate}[noitemsep]
		\item[(i)] the algorithm knows the global upper bound $N$, and
		\item[(ii)] on all graphs with at most $N$ vertices it solves $\Pi$ within $T(n)$ rounds.
	\end{enumerate}
	Then there exists a \local algorithm $\BB$ that
	\begin{enumerate}[noitemsep]
		\item[(a)] has \emph{no information} about $n$, and
		\item[(b)] still solves $\Pi$ on every graph in at most $T(n)$ rounds.
	\end{enumerate}
	Moreover, the output produced by $\BB$ coincides with that of $\AA_{N}$ for some choice of $N$.
\end{theorem}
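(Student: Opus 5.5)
Throughout, the sequence $X_1,X_2,\dots$ is the fixed enumeration of maximally $T$-safe neighborhoods from \cref{lem:max-safe-re}. The plan has three steps: sandbox each $\AA_N$ so that it becomes a map from maximally $T$-safe neighborhoods to outputs, extract a stable limit of these maps by compactness, and check that the limit solves $\Pi$ with no knowledge of $n$.

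\textbf{Step 1 (sandboxing).} For each $N$ we define a function $g_N$ on the $X_j$. Given $X_j$, simulate $\AA_N$ at the root of $X_j$ as a ball-growing process. If $\AA_N$ commits to an output at some radius not exceeding the radius $t_j$ of $X_j$, let $g_N(X_j)$ be that output. Otherwise let $g_N(X_j)=\bot$, which covers the case where the simulation would need to look outside $X_j$. Now take any network with $n\le N$ nodes and a node $v$ in it. Let $t$ be maximally $T$-safe for $v$; such a $t$ exists, as noted after the definition. Then $t\ge T(n)$: indeed $t+1$ is not safe, so some network $G'$ has an isomorphic $(t+1)$-view with $t+1>T(|V(G')|)$. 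I would establish $t\ge T(n)$ carefully from this fact using monotonicity of $T$, in the same way as in the proof of \cref{thm:equiv-comp-uncomp}. Since $\AA_N$ halts within $T(n)$ rounds, it halts inside $\NN^t(v)$. Hence $g_N(\NN^t(v))\neq\bot$, and it equals $\AA_N$'s output at $v$.

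\textbf{Step 2 (stabilization).} Each $g_N$ maps into the finite set $\Sigma_\out\cup\{\bot\}$. Apply the standard diagonal (K\H{o}nig-type) argument. First choose an infinite set $S_1\subseteq\N_+$ on which $g_N(X_1)$ is constant. Then choose an infinite $S_2\subseteq S_1$ on which $g_N(X_2)$ is constant, and so on. Define $g(X_j)$ to be the common value of $g_N(X_j)$ for $N\in S_j$. The resulting $g$ has the following property: for every finite set $F$ of indices, there are infinitely many $N$ with $g_N(X_j)=g(X_j)$ for all $j\in F$. This step is existential and non-constructive, which is exactly why $\BB$ lives in \ulocal.

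\textbf{Step 3 (the algorithm $\BB$ and correctness).} $\BB$ grows its ball until it reaches its maximally $T$-safe radius $t_v$. It recognizes this radius using only the view and $T$, with no knowledge of $n$. It then outputs $g(\NN^{t_v}(v))$. Fix an input network on $n$ nodes. Let $F$ be the indices of the finitely many views $\NN^{t_v}(v)$ that occur in it. Choose $N\ge n$ so that $g_N$ agrees with $g$ on $F$; this is possible because infinitely many $N$ work. By Step 1, $g_N$ reproduces $\AA_N$'s outputs on this network, and these form a valid solution of $\Pi$ since $n\le N$. Hence $\BB$'s output coincides with $\AA_N$ and is valid, and in particular it is never $\bot$. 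This gives (a), (b) and the ``moreover'' clause.

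The main obstacle is the round bound. $\BB$ stops at radius $t_v$, so we need $t_v\le T(n)$, whereas Step 1 needs $t_v\ge T(n)$. Safety of $t_v$ itself, applied with $G'=G$, gives $t_v\le T(n)$, so $t_v=T(n)$ exactly. Two points need care. First, I must check that taking the \emph{least} maximally safe radius is well defined and that safety is monotone in $t$. Second, I must handle the case where $T$ is eventually constant, using the case split from the proof of \cref{lem:max-safe-re}.
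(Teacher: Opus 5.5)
Your architecture is the same as the paper's: sandbox each $\AA_N$ on the maximally $T$-safe view, stabilize along nested infinite sets of bounds, and let $\BB$ output the stabilized label. You carry $\bot$ through the stabilization and rule it out at the end, instead of excluding it up front via \cref{lem:threshold}; that variation is harmless.

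The gap is the inequality your Step~1 rests on. Non-safety of $t+1$ gives a network $G'$ with a node $v'$ such that $\NN^{t+1}(v')\cong\NN^{t+1}(v)$ and $T(|V(G')|)\le t$. Nothing forces $|V(G')|\ge n$, so monotonicity of $T$ says nothing about $T(n)$. Typically $G'$ is a much \emph{smaller} network that looks the same around $v$, which is exactly what maximal safety is designed to detect. For instance, take $T(n)=\lfloor\log_2 n\rfloor$ and let $v$ be an endpoint of an $n$-node path with identifiers $1,2,\dots,n$ in order from $v$. The radius-$3$ view of $v$ is also the view of an endpoint of a $5$-node path, and $T(5)=2<3$. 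Hence $t_v\le 2$, while $T(n)$ is unbounded. So both ``$t\ge T(n)$'' and ``$t_v=T(n)$ exactly'' are false, and with them your reason why $\AA_N$ halts inside $\NN^{t_v}(v)$.

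The conclusion of Step~1 is still true. Proving it requires applying the hypothesis on $\AA_N$ to the witness network $G'$ rather than to $G$, and your proposal never does this.
\begin{itemize}
\item Safety of $t_v$, with $G$ itself as the candidate, gives $t_v\le T(n)$.
\item If $t_v=T(n)$, then $\AA_N$ halts at $v$ within $t_v$ rounds directly.
\item If $t_v<T(n)$, then $T(|V(G')|)\le t_v<T(n)$ forces $|V(G')|<n\le N$. So $\AA_N$ is correct on $G'$ and stops at $v'$ at some radius $s\le T(|V(G')|)\le t_v$.
\item Since $v$ and $v'$ have isomorphic views up to radius $t_v+1$, the ball-growing rule of $\AA_N$ returns $\bot$ at exactly the same radii at both nodes. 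So $\AA_N$ also stops at $v$ at radius $s$, with the same output.
\end{itemize}
This indistinguishability step is also what \cref{lem:threshold} genuinely needs. With it, your Steps~2 and~3 go through.

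Two smaller remarks.
\begin{itemize}
\item Do not try to prove that safety is monotone in $t$. It fails in general, because a larger view can reveal a huge identifier that rules out every small witness. You do not need it anyway: you only use that $t_v$ is safe and $t_v+1$ is not.
\item As defined, whether $t_v+1$ is unsafe depends on $\NN^{t_v+1}(v)$, so recognizing $t_v$ from the view costs one extra round. You can avoid this by stopping at the least $t$ for which some network has a node with the same radius-$t$ view and $T(|V(G')|)\le t$. The same case analysis then applies unchanged.
\end{itemize}
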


\begin{proof}
	We split the construction of $\BB$ in three steps and then prove its
	correctness.
  \begin{description}
    \item[Construction step 1: sandboxing the upper-bound algorithms.]
	Based on $\{ \AA_N \}_{N \in \N_+}$, we define a new family of algorithms $\{
	\AA_N' \}_{N \in \N_+}$.
	For each value of $N$, $\AA_{N}'$ does the following on a node $v$:
	\begin{enumerate}
		\item Grow the view until it reaches a maximally $T$-safe neighborhood
		$X\subseteq G$.
		\item Simulate $\AA_{N}$ on $X$.
    \item If the simulation ever inspects a vertex outside $X$, output $\bot$
    and halt.
    \item Otherwise output the label chosen by $\AA_{N}$.
	\end{enumerate}
	We observe that both steps fit in $T(n)$ rounds because $X$ cannot be enlarged without violating the time bound. Write $\AA_{N}'(i)\in\Sigma_{\bot}$ for the value produced when the gathered neighborhood is $X_{i}$. Then, we have:
	\begin{lemma}\label{lem:threshold}
		For every index $i$, there exists $N_{i}\in\N_{+}$ such that $\AA_{N}'(i)\neq\bot$ for all $N\geq N_{i}$.
	\end{lemma}
	\begin{proof}
		Fix $v_i$ and consider its maximally $T$-safe neighborhood $X_i$.
    Let $r_i$ be the radius of $X_i$. Define the graph $G_i = X_i$ and designate its
    root as $v_i$. Observe that the radius-$r_i$ neighborhood of $v_i$ in $G_i$
    equals $X_i$. Set $N_i \coloneqq |V(G_i)| = |X_i|$.
    Then the algorithm $\AA_{N}$ must stop and output a label after $T(N_i)$
    rounds when run on $v_i$ on $G_i$.
    Thus, $\AA_{N}'(i)\neq\bot$.
	\end{proof}
	\cref{lem:threshold} shows that the sandbox never fails, i.e., that for each
	$X_i$ there is a threshold $N_i$ after which $\AA'_{N}(i)\neq\bot$.

	\item[Construction step 2: stabilizing outputs.]
	For this step we inductively define labels
	$s_{1},s_{2},\dots\in\Sigma_{\mathrm{out}}$
	and infinite nested sets
	$S_{0}\supseteq S_{1}\supseteq S_{2}\supseteq\cdots$
	as follows:
	\begin{itemize}
		\item Set $S_{0}\coloneqq\N_{+}$.
		\item To define $s_i$ and $S_i$ for $i>0$ given an infinite set $S_{i-1}$,
		inspect the multiset $\MM_i := \{\AA_{N}'(i)\mid N\in S_{i-1}\}$. By
		\cref{lem:threshold}, at most the first $N_i-1$ indices can map to $\bot$.
		Hence, $\bot$ appears only finitely many times. After discarding those
		finitely many \(\bot\) values, the multiset is still infinite and now
		contains only labels from the finite set \(\Sigma_{\mathrm{out}}\). By the
		infinite pigeonhole principle, at least one output label occurs infinitely
		many times. Let it be \(s_i\). Define $ S_i := \{N\in S_{i-1}\mid
		\AA_{N}'(i)=s_i\}$. Since the class associated with \(s_i\) is infinite,
		the new set \(S_i\) is also infinite.
	\end{itemize}
	Hence, every \(\AA_{N}'\) with \(N\in S_{i}\) returns the same label \(s_{i}\) in the neighborhood \(X_{i}\subseteq G\).

	\item[Construction step 3: the bound-free algorithm $\BB$.]
	We are now ready to define the algorithm $\BB$ that does not use any
	information about the number of nodes in the network. On each node $v$, $\BB$
	does the following:
	\begin{enumerate}
		\item Collect the view until it becomes a maximally \(T\)-safe neighborhood
		\(Y_{v}=X_{i}\);
		\item Output the pre-selected label \(s_{i}\).
	\end{enumerate}
  Clearly, the first step is completed within $T(n)$ rounds, so $\BB$ satisfies
  the time bound.
  \item[Correctness.]  Fix an $n$-node network $(G,x,\ddeg,\id,\{p_v\})$. For
  each node $v\in V(G)$, let $Y_v\cong X_i$ be its maximally $T$-safe
  neighborhood. 
  In addition, let
  \[
	  I = \left\{i \mid \text{$\exists v$  with $Y_v\cong X_i$}\right\}
  \] 
  and $i^\star = \max I$.
  From the stabilization argument (step 2) we have infinite nested sets
    $S_0\supseteq S_1 \supseteq S_2\supseteq \cdots$
  with the property that, for all $N\in S_i$, the sandboxed algorithm $\AA_N'(i)$ outputs the stabilized label $s_i$ on $X_i$. Notice that $S_{i^\star}$ is infinite; choose $N\in S_{i^\star}$, with $N\geq n$. Fix any node $v$ with $Y_v\cong X_i$ (so $i\in I$ and $i\leq i^{\star}$). By nestedness, $S_{i^\star}\subseteq S_i$, hence our chosen $N\in S_i$. Therefore, $\AA_N'(i) = s_i\neq \bot$. By the definition of the sandbox, whenever $\AA_N'(i)  \neq \bot$, it equals the output that the original algorithm $\AA_N$ gives on the root of a node whose $T$-round view is $X_i$. Hence, at every node $v$ we have that algorithm $\BB$ produces  $s_i$ as  $\AA_{N}'(i)$ and  $\AA_N$. Thus the labeling of $\BB$ on $G$ coincides node-by-node with that of $\AA_N$. Since $n\leq N$, algorithm $\AA_N$ is correct on $G$; therefore $\BB$ is also correct.\qedhere
\end{description}
\end{proof}
Consequently, we can claim that in \ulocal knowing an upper bound $N$ on the number of nodes in the graph does not provide any advantage for deterministic algorithms in solving LCL problems, formally:
\begin{corollary}\label{cor:oracle-upper-vs-no-upper}
	In \ulocal, giving all nodes an upper bound $N$ on $n=|V(G)|$ provides no
	advantage for deterministic algorithms solving LCL problems.
\end{corollary}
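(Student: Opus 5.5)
The corollary should follow from \cref{thm:eliminate-upper-bound}, provided we make precise what ``no advantage'' means. I would prove the two inclusions between the classes of LCL problems solvable with locality $T$, for computable non-decreasing $T$, in \ulocal with and without the bound $N$.

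\emph{Easy direction.} Suppose $\BB$ is a bound-free \ulocal algorithm solving $\Pi$ within $T(n)$ rounds. Every node can simply ignore the input value $N$. This yields, for every $N$, an algorithm $\AA_N \coloneqq \BB$ that solves $\Pi$ on all graphs with $n \le N$ within $T(n)$ rounds.

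\emph{Main direction.} Suppose that, with knowledge of a bound $n \le N$, $\Pi$ is solvable in \ulocal with locality $T$. Then for each $N \in \N_+$ we have an algorithm $\AA_N$ that is correct on every network with at most $N$ nodes and halts within $T(n)$ rounds. This is exactly hypotheses (i)--(ii) of \cref{thm:eliminate-upper-bound}. The theorem then gives a bound-free algorithm $\BB$ that solves $\Pi$ on every graph within $T(n)$ rounds. Since \ulocal places no computability restriction on $\BB$, the existential stabilization step (Construction step~2) causes no trouble, and $\BB$ is a legitimate \ulocal algorithm.

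The only point needing care is that \cref{thm:eliminate-upper-bound} uses an enumeration $X_1, X_2, \dots$ of maximally $T$-safe neighborhoods, which relies on \cref{lem:max-safe-re}. I would therefore state the corollary for computable, non-decreasing $T$ and an enumerable bounded-degree graph class, exactly as in \cref{sec:maximally_safe_neighs}. Non-decreasing is automatic by the remark after the definition of locality.

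I would also check that the sandbox does not increase round complexity: $\BB$ stops at the maximally $T$-safe radius, which is at most $T(n)$ by the definition of safety applied to the actual network $G$. The main obstacle is therefore only making the quantifiers in the corollary precise, namely the computability of $T$. The rest is a direct invocation of the theorem.
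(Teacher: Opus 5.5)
Your proposal is correct and follows the paper's proof, which likewise reads off the family $\{\AA_N\}$ from the bounded-$N$ algorithm and applies \cref{thm:eliminate-upper-bound}. The trivial converse (ignore $N$) and the explicit restriction to computable non-decreasing $T$ are harmless additions that make the statement precise, though in \ulocal the stabilization step only needs \emph{some} enumeration of the countably many rooted neighborhoods, not an effective one, so computability of $T$ is inherited from how the theorem is stated rather than genuinely needed here.
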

\begin{proof}
	Suppose an LCL problem $\Pi$ is solvable in \ulocal with access to an upper bound $N$ on the number of nodes  in the considered network $(G,x,\ddeg,\id,\{p_v\})$. Then there exists a family $\{\AA_N\}$ of algorithms as in \cref{thm:eliminate-upper-bound} that solve $\Pi$ in $T(n)$ rounds. By the theorem, there exists a bound-free algorithm $\BB$ that solves $\Pi$ in the same number of rounds as well.
\end{proof}

\ifanon\else
\section*{Acknowledgments}

We thank Francesco d'Amore for posing a question that prompted us to embark on
this project.
This work was supported in part by the Research Council of Finland, Grants
359104 and 363558.
Diep Luong-Le was supported by the Aalto Science Institute International
Summer Research Programme.
Most of this work was done while Augusto Modanese was affiliated with Aalto
University.

\fi 

\printbibliography
\end{document}